\newtheorem{theorem}{Theorem}[section]
\newtheorem{lemma}[theorem]{Lemma}
\newtheorem{proposition}[theorem]{Proposition}
\newtheorem{definition}{Definition}[section]
\newtheorem{example}{Example}[section]
\newtheorem{remark}{Remark}[section]
\newcommand{\bsmat}{\begin{bmatrix} }
\newcommand{\esmat}{\end{bmatrix} }
\begin{document}

\title{\bf $k$-Median Clustering via Metric Embedding: Towards Better Initialization with Differential Privacy }

\author{\vspace{0.5in}\\\textbf{Chenglin Fan, Ping Li, Xiaoyun Li} \\\\
Cognitive Computing Lab\\
Baidu Research\\
10900 NE 8th St. Bellevue, WA 98004, USA\\\\
  \texttt{\{chenglinfan2020,\ pingli98,\  lixiaoyun996\}@gmail.com}
}
\date{\vspace{0.5in}}
\maketitle

\begin{abstract}\vspace{0.2in}
\noindent\footnote{This paper was initially made public in October 2021 at \url{https://openreview.net/pdf?id=beUek8ku1Q}.
}When designing clustering algorithms, the choice of initial centers is crucial for the quality of the learned clusters. In this  paper, we develop a new initialization scheme, called {\bf HST initialization}, for the $k$-median problem in the general metric space (e.g., discrete space induced by graphs), based on the construction of metric embedding tree structure of the data. From the tree, we propose a novel and efficient search algorithm, for good initial centers that can be used subsequently for the local search algorithm. Our proposed HST initialization  can produce initial centers achieving lower errors than those from another popular initialization method, $k$-median++, with comparable efficiency. The HST initialization can also be  extended to the setting of differential privacy (DP) to generate private initial centers. We show that the error from applying DP local search followed by our private HST initialization improves previous results on the approximation error, and approaches the lower bound within a small factor. Experiments justify the theory and demonstrate the effectiveness of our proposed method. Our approach can also be extended to the $k$-means problem.
\end{abstract}

\maketitle

\newpage

\section{Introduction}

Clustering is an important problem in unsupervised learning that has been widely studied in statistics, data mining, network analysis, etc.~\citep{punj1983cluster,Article:Dhillon_01,Article:Banerjee_05,berkhin2006survey,Article:Abbasi07}. The goal of clustering is to partition a set of data points into clusters such that items in the same cluster are expected to be similar, while items in different clusters should
be different. This is concretely measured by the sum of distances (or squared distances) between each point to its nearest cluster center.
One conventional notion to evaluate a clustering algorithms is: with high probability,
$$cost(C,D)\leq \gamma OPT_k(D)+\xi,$$
where $C$ is the centers output by the algorithm and $cost(C,D)$ is a cost function defined for $C$ on dataset $D$. $OPT_k(D)$ is the cost of optimal (oracle) clustering solution on $D$. When everything is clear from context, we will use $OPT$ for short. Here, $\gamma$ is called \textit{multiplicative error} and $\xi$ is called \textit{additive error}. Alternatively, we may also use the notion of expected cost.

Two popularly studied clustering problems are 1) the $k$-median problem, and 2) the $k$-means problem. The origin of $k$-median dates back to the 1970's (e.g.,~\citet{kaufman1977plant}), where one tries to find the best location of facilities that minimizes the cost measured by the distance between clients and facilities. Formally, given a set of points $D$ and a distance measure, the goal is to find $k$ center points minimizing the sum of absolute distances of each sample point to its nearest center. In $k$-means, the objective is to minimize the sum of squared distances instead. Particularly, $k$-median is usually the one used for clustering on graph/network data. In general, there are two popular frameworks for clustering. One heuristic is the Lloyd’s algorithm~\citep{1056489}, which is built upon an iterative distortion minimization approach. In most cases, this method can only be applied to numerical data, typically in the (continuous) Euclidean space. Clustering in general metric spaces (discrete spaces) is also important and useful when dealing with, for example, the graph data, where Lloyd's method is no longer applicable. A more broadly applicable approach, the local search method~\citep{DBLP:conf/compgeom/KanungoMNPSW02,DBLP:journals/siamcomp/AryaGKMMP04}, has also been widely studied. It iteratively finds the optimal swap between the center set and non-center data points to keep lowering the cost. Local search can achieve a constant approximation ratio ($\gamma=O(1)$) to the optimal solution for $k$-median~\citep{DBLP:journals/siamcomp/AryaGKMMP04}. In this paper, we will focus on clustering under this the general metric space setting.

\vspace{0.1in}
\noindent\textbf{Initialization of cluster centers.}\
It is well-known that the performance of clustering can be highly sensitive to initialization. If clustering starts with good initial centers (i.e., with small approximation error), the algorithm may use fewer iterations to find a better solution. The $k$-median++ algorithm~\citep{DBLP:conf/soda/ArthurV07} iteratively selects $k$ data points as initial centers, favoring distant points in a probabilistic way. Intuitively, the initial centers tend to be well spread over the data points (i.e., over different clusters). The produced initial center is proved to have $O(\log k)$ multiplicative error. Follow-up works of $k$-means++ further improved its efficiency and scalability, e.g.,~\citet{DBLP:journals/pvldb/BahmaniMVKV12,DBLP:conf/aaai/BachemLHK16,DBLP:conf/icml/LattanziS19}. In this work, we propose a new initialization framework, called Hierarchically Well-Separated Tree (HST) initialization, based on metric embedding techniques.  Our method is built upon a novel search algorithm on metric embedding trees, with comparable approximation error and running time as $k$-median++. Moreover, importantly, our initialization scheme can be conveniently combined with the notion of differential privacy (DP).

\newpage

\noindent\textbf{Clustering with Differential Privacy.}\ The concept of differential privacy~\citep{DBLP:conf/icalp/Dwork06,DBLP:conf/focs/McSherryT07} has been popular to rigorously define and resolve the problem of keeping useful information for model learning, while protecting privacy for each individual. Private $k$-means problem has been widely studied, e.g.,~\citet{DBLP:conf/stoc/FeldmanFKN09,DBLP:conf/icml/NockCBN16,7944775}, mostly in the continuous Euclidean space. The paper~\citep{DBLP:conf/icml/BalcanDLMZ17} considered identifying a good candidate set (in a private manner) of centers before applying private local search, which yields $O(\log^3 n)$ multiplicative error and $O((k^2+d)\log^5 n)$ additive error. Later on, the Euclidean $k$-means errors are further improved to $\gamma=O(1)$ and $\xi=O(k^{1.01}
\cdot d^{0.51} + k^{1.5})$ by~\cite{DBLP:conf/nips/StemmerK18}, with more advanced candidate set selection. \cite{DBLP:conf/pods/HuangL18} gave an optimal algorithm in terms of minimizing Wasserstein distance under some data separability condition.

For private $k$-median clustering, \cite{DBLP:conf/stoc/FeldmanFKN09} considered the problem in high dimensional Euclidean space. The strategy of~\citet{DBLP:conf/icml/BalcanDLMZ17} to form a candidate center set could as well be adopted to $k$-median, which leads to $O(\log^{3/2} n)$ multiplicative error and $O((k^2+d)\log^3 n)$ additive error in high dimensional Euclidean space. However, one main limitation of these methods is that they cannot be applied to general metric spaces (e.g., on graphs). In discrete space, \cite{DBLP:conf/soda/GuptaLMRT10} proposed a private method for the classical local search heuristic, which applies to both $k$-medians and $k$-means.
To cast privacy on each swapping step, the authors applied the exponential mechanism of~\citet{DBLP:conf/focs/McSherryT07}. Their method produced an $\epsilon$-differentially private solution with cost $6OPT+ O(\triangle k^2\log^2n/\epsilon)$, where $\triangle$ is the diameter of the point set. In this work, we will show that our HST initialization can improve DP local search for $k$-median~\citep{DBLP:conf/soda/GuptaLMRT10} in terms of both approximation error and efficiency.

\vspace{0.1in}
\noindent\textbf{The main contributions} of this work include :

\begin{itemize}\vspace{-0.05in}
\item We introduce the HST~\citep{DBLP:journals/jcss/FakcharoenpholRT04} to the $k$-median clustering problem for initialization. We design an efficient sampling strategy to select the initial center set from the tree, with an approximation factor  $O(\log \min\{k,\triangle\})$ in the non-private setting, which is $O(\log \min\{k,d\})$ when $\triangle=O(d)$ (e.g., bounded data). This improves the $O(\log k)$ error of $k$-means/median++ in e.g., the lower dimensional Euclidean space.

\item We propose a differentially private version of HST initialization under the setting of~\cite{DBLP:conf/soda/GuptaLMRT10} in discrete metric space. The so-called DP-HST algorithm finds initial centers with $O(\log n)$ multiplicative error and $O(\epsilon^{-1}\triangle k^2  \log^2 n)$ additive error.
Moreover, running DP-local search starting from this initialization gives $O(1)$ multiplicative error and $O(\epsilon^{-1}\triangle k^2 (\log\log n) \log n)$ additive error, which improves previous results towards the well-known lower bound $O(\epsilon^{-1}\triangle k\log (n/k))$ on the additive error of DP $k$-median~\citep{DBLP:conf/soda/GuptaLMRT10} within a small $O(k\log\log n)$ factor. To our knowledge, this is the first initialization method with differential privacy guarantee and improved error rate in general metric space.

\item We conduct experiments on simulated and real-world datasets to demonstrate the effectiveness of our methods. In both non-private and private settings, our proposed HST-based initialization approach achieves smaller initial cost than $k$-median++ (i.e., finds better initial centers), which may also lead to improvements in the final clustering quality.
\end{itemize}

\section{Background and Setup} \label{sec:pre}

\subsection{Differential Privacy (DP)}

\begin{definition}[Differential Privacy (DP) ~\citep{DBLP:conf/icalp/Dwork06}]
If for any two adjacent data sets $D$ and $D'$ with symmetric difference of size one, for any $O\subset Range(\mathbbm{A})$, an algorithm $\mathbbm{A}$ satisfies
$$ Pr[\mathbbm{A}(D)\in O] \leq e^\epsilon Pr[\mathbbm{A}(D')\in O],$$
then algorithm $\mathbbm{A}$ is said to be $\epsilon$-differentially private.
\end{definition}

Intuitively, differential privacy requires that after removing any data point (graph node in our case), the output of $D'$ should not be too different from that of the original dataset $D$. Smaller $\epsilon$ indicates stronger privacy, which, however, usually sacrifices utility. Thus, one of the central topics in differential privacy literature is to balance the utility-privacy trade-off.

To achieve DP, one approach is to add noise to the algorithm output. The \textit{Laplace mechanism} adds Laplace($\eta(f)/\epsilon$) noise to the output, which is known to achieve $\epsilon$-DP. The \textit{exponential mechanism} is also a tool for many DP algorithms. Let $O$ be the set of feasible outputs. The utility function $q:D \times O \rightarrow \mathbb R$ is what we aim to maximize. The exponential mechanism outputs an element $o \in O$ with probability $P[\mathbbm{A}(D)=o]\propto \exp(\frac{\epsilon q(D,o)}{2 \eta(q)})$, where $D$ is the input dataset and $\eta(f)=\sup_{|D-D'|=1} |f(D)-f(D')|$ is the sensitivity of $f$. Both mechanisms will be used in our paper.

\subsection{Metric $k$-Median Clustering}  \label{sec:pre private k median}

Following~\cite{DBLP:journals/siamcomp/AryaGKMMP04,DBLP:conf/soda/GuptaLMRT10}, the problem of metric $k$-median clustering (DP and non-DP) studied in our paper is stated as below.

\begin{definition}[$k$-median]
Given a universe point set $U$ and a metric $\rho : U \times U \rightarrow \mathbb R$, the goal of $k$-median to pick $F \subseteq U$ with $|F|= k$ to minimize
\begin{align} \label{def:cost rho}
    \text{\textbf{$k$-median:}}\hspace{0.2in}cost_k(F,U) = \sum _{v\in U} \min_{f\in F}\rho(v,f).
\end{align}
Let $D \subseteq U $ be a set of demand points. The goal of DP $k$-median is to minimize
\begin{align} \label{def:cost rho DP}
    \text{\textbf{DP $k$-median:}}\hspace{0.2in}cost_k(F,D) = \sum _{v\in D} \min_{f\in F}\rho(v,f).
\end{align}
At the same time, the output $F$ is required to be $\epsilon$-differentially private to $D$. We may drop ``$F$'' and use ``$cost_k(U)$'' or ``$cost_k(D)$'' if there is no risk of ambiguity.
\end{definition}

To better understand the motivation of the DP clustering, we provide a real-world example as follows.

\begin{example}
Consider $U$ to be the universe of all users in a social network (e.g., Twitter). Each user (account) is public, but also has some private information that can only be seen by the data holder. Let $D$ be users grouped by some feature that might be set as private. Suppose a third party plans to collaborate with the most influential users in $D$ for e.g., commercial purposes, thus requesting the cluster centers of $D$. In this case, we need a strategy to safely release the centers, while protecting the individuals in $D$ from being identified (since the membership of $D$ is private).
\end{example}

The local search procedure for $k$-median proposed by \cite{DBLP:journals/siamcomp/AryaGKMMP04} is summarized in Algorithm~\ref{alg:rand local search}. First we randomly pick $k$ points in $U$ as the initial centers. In each iteration, we search over all $x\in F$ and $y\in U$, and do the swap $F \leftarrow F -\{x\}+\{y\}$ such that $F-\{x\}+\{y\}$ improves the cost of $F$ the most (if more than factor $(1-\alpha/k)$ where $\alpha>0$ is a hyper-parameter). We repeat the procedure until no such swap exists. \cite{DBLP:journals/siamcomp/AryaGKMMP04} showed that the output centers $F$ achieves 5 approximation error to the optimal solution, i.e., $cost(F)\leq 5 OPT$.

\begin{algorithm2e}[t]
	\SetKwInput{Input}{Input}
	\SetKwInput{Output}{Output}
	\SetKwInput{Initialization}{Initialization}
	\Input{Data points $U$, parameter $k$, constant $\alpha$}
	\Initialization{Randomly select $k$ points from $U$ as initial center set $F$}
	
	\DontPrintSemicolon
	\While{$\exists\ x\in F,y\in U$ s.t. $cost(F-\{x\}+\{y\}) \leq (1-\alpha/k)cost(F)$}{
	Select $(x, y) \in F_i \times (D \setminus F_i)$ with $\arg\min_{x,y} \{ cost(F - \{x\} + \{y\})\}$\;
	Swap operation: $ F \leftarrow F -\{x\}+\{y\}$
	}
	
	\Output{Center set $F$	}
	\caption{Local search for $k$-median clustering~\citep{DBLP:journals/siamcomp/AryaGKMMP04}}
	\label{alg:rand local search}
\end{algorithm2e}

\subsection{$k$-median++ Initialization}

Although local search is able to find a solution with constant error, it takes $O(n^2)$ per iteration~\cite{Article:Resende_2007} in expected $O(k\log n)$ steps (in total $O(kn^2\log n)$) when started from random center set, which would be slow for large datasets. Indeed, we do not need such complicated algorithm to reduce the cost at the beginning, i.e., when the cost is large. To accelerate the process, efficient initialization methods find a ``roughly'' good center set as the starting point for local search. In the paper, we compare our new initialization scheme mainly with a popular (and perhaps most well-known) initialization method, the $k$-median++~\citep{DBLP:conf/soda/ArthurV07} (see Algorithm~\ref{alg:k-median++}).

\begin{algorithm2e}[h]
	\SetKwInput{Input}{Input}
	\SetKwInput{Output}{Output}
	\Input{Data points $U$, number of centers $k$}
	
	\DontPrintSemicolon
	Randomly pick a point $c_1\in U$ and set $F=\{c_1\}$\;
	\For{$i=2$ to $k$}{
	Select $c_i=u\in U$ with probability $\frac{\rho(u,F)}{\sum_{u'\in U} \rho(u',F)}$\;
	$F = F\cup \{c_i\}$
	}
	\Output{$k$-median++ initial center set $F$}
	\caption{$k$-median++ initialization~\citep{DBLP:conf/soda/ArthurV07}}
	\label{alg:k-median++}
\end{algorithm2e}

Here, the function $D(u,C)$ is the shortest distance from a data point $u$ to the closest (center) point in set $C$. \cite{DBLP:conf/soda/ArthurV07} showed that the output centers $C$ by $k$-median++ achieves $O(\log k)$ approximation error with time complexity $O(nk)$. Starting from the initialization, we only need to run $O(k\log \log k)$ steps of the computationally heavy local search to reach a constant error solution. Thus, initialization may greatly improve the clustering efficiency.

\newpage

\section{Initialization via Hierarchically Well-Separated Tree (HST)}  \label{sec:NDP}

In this section, we propose our novel initialization scheme for $k$-median clustering, and provide our analysis in the non-private case solving \eqref{def:cost rho}. The idea is based on the metric embedding theory. We will start with an introduction to the main tool used in our approach.

\subsection{Hierarchically Well-Separated Tree (HST)}

In this paper, for an $L$-level tree, we will count levels in descending order down the tree. We use $h_v$ to denote the level of $v$, and $n_i$ be the number of nodes at level $i$. The Hierarchically Well-Separated Tree (HST) is based on the padded decompositions of a general metric space in a hierarchical manner~\citep{DBLP:journals/jcss/FakcharoenpholRT04}. Let $(U, \rho)$ be a metric space with $|U|=n$, and we will refer to this metric space without specific clarification. A $ \beta$–padded decomposition of $U$ is a probabilistic distribution of partitions of $U$ such that the diameter of each cluster $U_i\in U$ is at most $\beta$, i.e., $\rho(u,v)\leq \beta$, $\forall u,v\in U_i$, $i=1,...,k$. The formal definition of HST is given as below.

\begin{definition}  \label{def:2-HST}
Assume $\min_{u,v\in U}\rho(u,v)=1$ and denote $\triangle=\max_{u,v\in U}\rho(u,v)$. An $\alpha$-Hierarchically Well-Separated Tree ($\alpha$-HST) with depth $L$ is an edge-weighted rooted tree $T$, such that an edge between any pair of two nodes of level $i-1$ and level $i$ has length at most $\triangle/\alpha^{L-i}$.
\end{definition}

In this paper, we consider $\alpha=2$-HST for simplicity, as $\alpha$ only affects the constants in our theoretical analysis. As presented in Algorithm~\ref{alg:new_hst}, the construction starts by applying a permutation $\pi$ on $U$, such that in following steps the points are picked in a random sequence. We first find a padded decomposition $P_L=\{P_{L,1},...,P_{L,n_L}\}$ of $U$ with parameter $\beta=\triangle/2$. The center of each partition in $P_{L,j}$ serves as a root node in level $L$. Then, we re-do a padded decomposition for each partition $P_{L,j}$, to find sub-partitions with diameter $\beta=\triangle/4$, and set the corresponding centers as the nodes in level $L-1$, and so on. Each partition at level $i$ is obtained with $\beta=\triangle /2^{L-i}$. This process proceeds until a node has a single point, or a pre-specified tree depth is reached.

\begin{figure}[h]
%\vspace{0.15in}
\centering
\mbox{
    \includegraphics[width=2.3in]{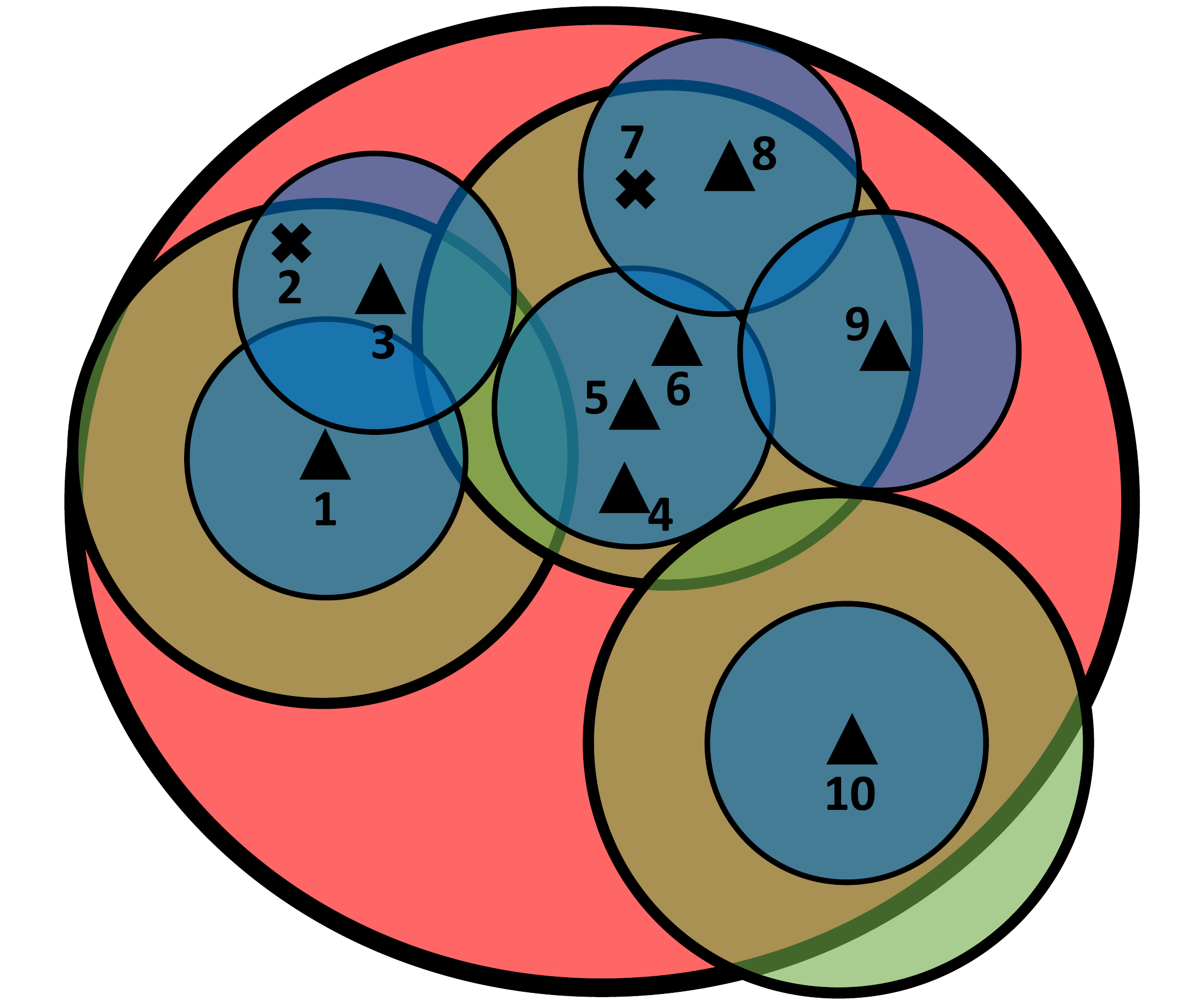}\hspace{0.2in}
    \includegraphics[width=3.6in]{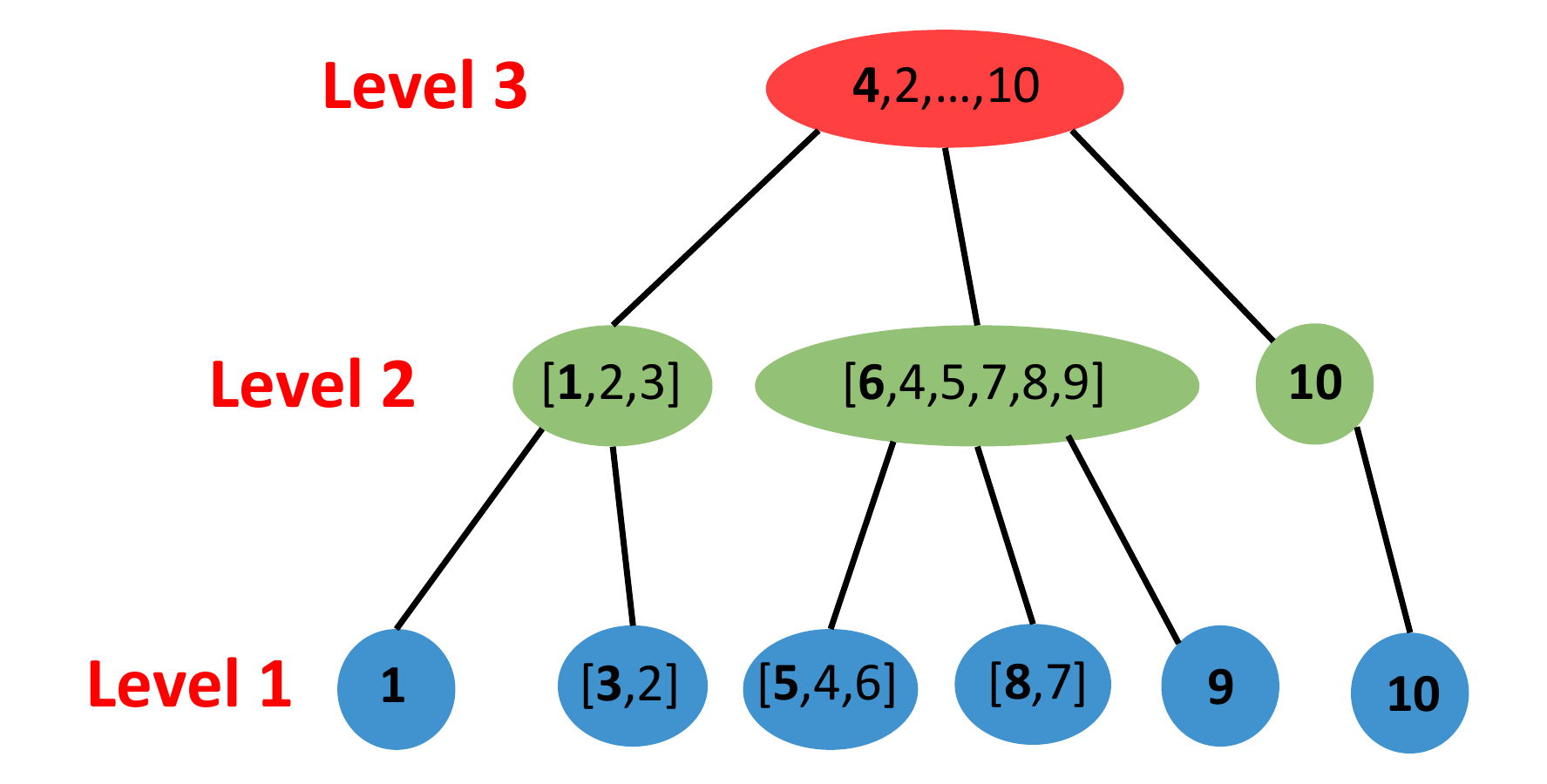}
    }

    \vspace{-0.05in}

	\caption{An illustrative example of a 3-level padded decomposition and its corresponding 2-HST. \textbf{Left:} The thickness of the ball represents the level. The color corresponds to the levels in the HST in the right panel. ``$\triangle$'''s are the center nodes of partitions (balls), and ``$\times$'''s are non-center data points. \textbf{Right:} The resulting 2-HST generated from the padded decomposition. } 	
	\label{fig:padded}\vspace{0.15in}
\end{figure}

In Figure~\ref{fig:padded}, we provide an example of $L=3$-level 2-HST (left panel), along with its underlying padded decompositions (right panel). Besides this basic implementation for better illustration, \cite{Proc:Guy_ICALP2017} proposed an efficient HST construction in $O(m\log n)$ time, where $n$ and $m$ are the number of nodes and the number of edges in a graph, respectively.

\begin{algorithm2e}[t]
	\SetKwInput{Input}{Input}
	\SetKwInput{Output}{Output}
	\Input{Data points $U$ with diameter $\triangle$, $L$}
	
	\DontPrintSemicolon
	Randomly pick a point in $U$ as the root node of $T$\;
	Let $r=\triangle/2$\;
	Apply a permutation $\pi$ on $U$ \tcp*{so points will be chosen in a random sequence}
	\For{each $v\in U$}{
	Set $C_v=[v]$\;
	\For{each $u\in U$}{
	    Add $u\in U$ to $C_v$ if  $d(v,u)\leq r$ and $u\notin \bigcup_{v'\neq v} C_{v'}$
	}
	}
	Set the non-empty clusters $C_v$ as the children nodes of $T$\;
	
	\For{each non-empty cluster $C_v$}{
	Run 2-HST$(C_v,L-1)$ to extend the tree $T$; stop until $L$ levels or reaching a leaf node
	}
	
	\Output{2-HST  $T$}
	\caption{Build 2-HST($U,L$)}
	\label{alg:new_hst}
% 	\algolab{block}
\end{algorithm2e}

\begin{algorithm2e}[t]
	\SetKwInput{Input}{Input}
	\SetKwInput{Output}{Output}
	\SetKwInput{Initialization}{Initialization}
	\Input{$U$, $\triangle$, $k$}
	\Initialization{$L=\log \triangle$, $C_0=\emptyset, C_1=\emptyset$}
	
	\DontPrintSemicolon
Call Algorithm~\ref{alg:new_hst} to	build a level-$L$ 2-HST  $T$ using $U$\;
	
	\For{each node $v$ in $T$}{
	$N_v\gets |U\cap T(v)| $\;
	$score(v)\gets N_v\cdot 2^{h_v}$
	}
	
	\While{$|C_1|<k$ }{
	Add top $(k-|C_1|)$ nodes with highest score to $C_1$\;
	\For{each $v\in C_1$}{
	$C_1=C_1\setminus \{v\}$, if $\exists\ v'\in C_1$ such that $v'$ is a descendant of $v$
	}
	}
	$C_0=\textrm{FIND-LEAF}(T,C_1)$

	\Output{Initial center set $C_0 \subseteq U$}
	\caption{NDP-HST initialization}
	\label{alg:new_initial-NDP}
% 	\algolab{block}
\end{algorithm2e}

The first step of our method is to embed the data points into an HST (see Algorithm~\ref{alg:new_initial-NDP}). Next, we will describe our proposed new strategy to search for the initial centers on the tree (w.r.t. the tree metric). Before moving on, it is worth mentioning that, there are polynomial time algorithms for computing an \textit{exact} $k$-median solution in the tree metric (\cite{DBLP:journals/orl/Tamir96,Rahul2003}). However, the dynamic programming algorithms have high complexity (e.g., $O(kn^2)$), making them unsuitable for the purpose of fast initialization. Moreover, it is unknown how to apply them effectively to the private case. As will be shown, our new algorithm 1) is very efficient, 2) gives $O(1)$ approximation error in the tree metric, and 3) can be effectively extended to DP easily (Section~\ref{sec:DP}).

\subsection{HST Initialization Algorithm}
Let $L = \log \Delta$ and suppose $T$ is a level-$L$ 2-HST in $(U,\rho)$, where we assume $L$ is an integer. For a node $v$ at level $i$, we use $T(v)$ to denote the subtree rooted at $v$. Let $N_v=|T(v)|$ be the number of data points in $T(v)$. The search strategy for the initial centers, NDP-HST initialization (``NDP'' stands for ``Non-Differentially Private''), is presented in Algorithm~\ref{alg:new_initial-NDP} with two phases.

\newpage

\noindent\textbf{Subtree search.} The first step is to identify the subtrees that contain the $k$ centers. To begin with, $k$ initial centers $C_1$ are  picked from $T$ who have the largest $score(v)=N(v)\cdot 2^{h_v}$. This is intuitive, since to get a good clustering, we typically want the ball surrounding each center to include more data points. Next, we do a screening over $C_1$: if there is any ancestor-descendant pair of nodes, we remove the ancestor from $C_1$. If the current size of $C_1$ is smaller than $k$, we repeat the process until $k$ centers are chosen (we do not re-select nodes in $C_1$ and their ancestors). This way, $C_1$ contains $k$ root nodes of $k$ disjoint subtrees.

% For any node $w$ with $T(w) \cap C_1=\emptyset$ and a node $v\in C_1$, if $w$ is not an ancestor of $v$, then $score(w)\leq score(v)$.

\begin{algorithm2e}[h]
	\SetKwInput{Input}{Input}
	\SetKwInput{Output}{Output}
	\SetKwInput{Initialization}{Initialization}
	\Input{$T$, $C_1$}
	\Initialization{$C_0=\emptyset$}
	
	\DontPrintSemicolon
	\For{each node $v$ in $C_1$}{
	\While{$v$ is not a leaf node}{
	$v \gets \arg_w \max \{N_w, w\in ch(v) \} $, where $ch(v)$ denotes the children nodes of $v$\;
	}
	Add $v$ to $C_0$
	}
	
	\Output{Initial center set $C_0 \subseteq U$}
	\caption{FIND-LEAF ($T,C_1$)}
	\label{alg:new_initial_prune}
% 	\algolab{block}
\end{algorithm2e}

\noindent\textbf{Leaf search.} After we find $C_1$ the set of $k$ subtrees, the next step is to find the center in each subtree using Algorithm~\ref{alg:new_initial_prune} (``FIND-LEAF''). We employ a greedy search strategy, by finding the child node with largest score level by level, until a leaf is found. This approach is intuitive since the diameter of the partition ball exponentially decays with the level. Therefore, we are in a sense focusing more and more on the region with higher density (i.e., with more data points).

The complexity of our search algorithm is given as follows.

\begin{proposition}[Complexity]
Algorithm~\ref{alg:new_initial-NDP} takes $O(dn\log n)$ time in the Euclidean space. \label{prop:time}
\end{proposition}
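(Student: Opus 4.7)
The plan is to decompose the runtime of Algorithm~\ref{alg:new_initial-NDP} into four stages and bound each one: (i) constructing the 2-HST $T$; (ii) computing $N_v$ and $score(v)$ at every node; (iii) the \textbf{while}-loop that selects $C_1$ by score with ancestor--descendant pruning; and (iv) the call to \textrm{FIND-LEAF}. I expect (i) to dominate, and the other three to be absorbed into it.

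First I would invoke the efficient HST construction of~\cite{Proc:Guy_ICALP2017}, which builds an $\alpha$-HST in $O(m\log n)$ time once the underlying metric graph with $m$ edges is available. For data points in $\mathbb{R}^d$, a single Euclidean distance takes $O(d)$ time, and one can supply the required nearest-neighbor queries at each decomposition level in total time $O(dn\log n)$ (for instance via an appropriately balanced Euclidean spanner or via the $O(d n \log n)$-time random-partition procedure underlying the embedding). This gives the claimed $O(dn\log n)$ for stage (i), and simultaneously bounds the total number of tree nodes by $O(n\log n)$ (each of the $L=\log\triangle$ levels contributes at most $n$ internal nodes, and $\log\triangle = O(\log n)$ in bounded Euclidean data).

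Stage (ii) is handled by a single bottom-up traversal: $N_v$ is the sum of $N_w$ over children, and $score(v)=N_v\cdot 2^{h_v}$ is then $O(1)$ per node; the total cost is linear in the size of $T$, hence $O(n\log n)$. Stage (iii) can be done in $O(n\log n)$ time by sorting all nodes once by $score(v)$ and then walking down the sorted list, maintaining, e.g., a union-find structure or a boolean marker on the root-to-node path so that the ancestor--descendant check per candidate is $O(\log n)$ amortized; the total number of candidates examined before $|C_1|=k$ is $O(k)$ since each accepted node rules out only its own subtree. Stage (iv), \textrm{FIND-LEAF}, visits one root-to-leaf path for each of the $k$ chosen subtrees, each of depth at most $L=\log\triangle=O(\log n)$, contributing $O(k\log n)$.

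Summing the four stages yields $O(dn\log n) + O(n\log n) + O(k\log n) = O(dn\log n)$, since $k\le n$. The main obstacle, and the only nontrivial piece, is justifying the $O(dn\log n)$ bound for the HST construction itself in Euclidean space; the cleanest route is to cite~\cite{Proc:Guy_ICALP2017} and note that $d$ enters only through the per-distance cost, whereas every other step is combinatorial on a tree of size $O(n\log n)$.
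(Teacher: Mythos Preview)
Your four-stage decomposition and the conclusion that HST construction dominates is exactly the paper's approach. Two small points of divergence worth noting: (1) the paper invokes~\cite{DBLP:conf/focs/Bartal96} directly for the $O(dn\log n)$ HST construction (and takes the tree to have $O(n)$ nodes), rather than routing through the graph-metric result of~\cite{Proc:Guy_ICALP2017} with an ad~hoc Euclidean adaptation; and (2) your $O(k\log n)$ bound for \textrm{FIND-LEAF} implicitly assumes $O(1)$ work per level, but computing $\arg\max_{w\in ch(v)} N_w$ costs $|ch(v)|$, so the correct bound is $O(n)$ via summing degrees over the $k$ disjoint subtrees, which is what the paper uses. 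Neither discrepancy affects the final $O(dn\log n)$ conclusion.
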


\begin{remark}
The complexity of HST initialization is in general comparable to $O(dnk)$ of $k$-median++. Our algorithm would be faster if $k>\log n$, i.e., the number of centers is large. Similar comparison also holds for general metrics.
\end{remark}

\subsection{Approximation Error of HST Initialization}

Firstly, we show that the initial center set produced by NDP-HST is already a good approximation to the optimal $k$-median solution. Let $\rho^{T}(x,y)=d_T(x,y)$ denote the ``2-HST metric'' between $x$ and $y$ in the 2-HST  $T$, where $d_T(x,y)$ is the tree distance between nodes $x$ and $y$ in $T$. By Definition \ref{def:2-HST} and since $\triangle=2^L$, in the analysis we assume equivalently that the edge weight of the $i$-th level $2^{i-1}$. The crucial step of our analysis is to examine the approximation error in terms of the 2-HST metric, after which the error can be adapted to the general metrics by the following Lemma~\citep{DBLP:conf/focs/Bartal96}.

\begin{lemma} \label{lemma:metrics}
In a metric space $(U,\rho)$ with $|U|=n$ and diameter $\triangle$, it holds that $E[\rho^T(x,y)]=O(\min\{\log n,\log \triangle\})\rho(x,y)$. In the Euclidean space $\mathbb R^d$, $E[\rho^T(x,y)]=O(d)\rho(x,y)$.
\end{lemma}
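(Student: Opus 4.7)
This lemma is a classical consequence of probabilistic HST embedding theory, and my plan is to treat the two bounds separately and reduce each to the appropriate pair-analysis of the random padded-decomposition construction realized in Algorithm~\ref{alg:new_hst}.

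For the general-metric bound $E[\rho^T(x,y)] = O(\min\{\log n, \log \triangle\})\rho(x,y)$, I would reproduce the Fakcharoenphol--Rao--Talwar (FRT) pair analysis. Fix $x, y \in U$ and set $\ell = \rho(x,y)$. Since the level-$i$ radius is $r_i = 2^{i-1}$, the tree distance decomposes as $\rho^T(x,y) = 2\sum_i r_i \mathbbm{1}[x, y \text{ in different level-}i \text{ clusters}]$, and so $E[\rho^T(x,y)] = O(\sum_i r_i \Pr[x, y \text{ separated at level } i])$. To bound each separation probability, sort the points of $U$ by distance to $\{x, y\}$ as $w_1, \ldots, w_n$ and observe that $w_j$ is the ``cutter'' of $\{x, y\}$ at level $i$ only if (a) $r_i$ lies in a window of length $O(\ell)$ around $\rho(w_j, \{x, y\})$, contributing a factor of $O(\ell/r_i)$, and (b) $w_j$ appears earliest among $w_1, \ldots, w_j$ in the random permutation $\pi$, which has probability $1/j$ by exchangeability. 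Summing gives $E[\rho^T(x,y)] = O(\ell) \sum_{j=1}^{n} 1/j = O(\ell \log n)$. The weaker $O(\log \triangle)$ bound follows by pricing each of the $L = \log \triangle$ levels at an $O(\ell)$ contribution using the crude per-level estimate $r_i \cdot \Pr[\text{separated}] \le O(\ell)$.

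For the Euclidean bound $E[\rho^T(x,y)] = O(d)\rho(x,y)$, I would swap the general padded decomposition for the random-shifted axis-aligned grid construction in $\mathbb{R}^d$, which remains a valid padded decomposition at each level. At level $i$, a uniformly random shift in $[0, r_i]^d$ of the side-$\Theta(r_i)$ grid gives, by a union bound over the $d$ coordinate hyperplanes, $\Pr[x, y \text{ separated at level } i] \leq \min\{1, d \cdot \|x - y\|/r_i\}$. Substituting this into the level sum and collecting the geometric contributions around the critical scale $r_i \asymp d\|x-y\|$ yields the claimed $O(d)$ distortion.

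The main obstacle is the careful probability accounting in the general-metric case: one must verify that the cutter analysis applies precisely to the specific padded-decomposition variant used in Algorithm~\ref{alg:new_hst} and that the harmonic $1/j$ factor really emerges from the permutation, so as to produce the sharp $O(\log n)$ rather than the weaker $O(\log n \log \log n)$ or $O(\log^2 n)$ factors appearing in earlier HST constructions.
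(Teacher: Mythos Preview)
The paper does not actually prove this lemma; it is stated with a citation to \cite{DBLP:conf/focs/Bartal96} (and implicitly to the FRT construction of \cite{DBLP:journals/jcss/FakcharoenpholRT04}) and is invoked throughout as a black-box fact from the metric-embedding literature. Your proposal to reproduce the FRT pair analysis for the $O(\log n)$ bound, the trivial per-level $O(\ell)$ charge for the $O(\log\triangle)$ bound, and the randomly-shifted axis-aligned grid for the Euclidean $O(d)$ bound is exactly the standard route one would take to supply a proof, and is correct for those constructions.

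One point worth flagging, which you already identify as the main obstacle but understate: your step (a) --- that the cutter $w_j$ separates $\{x,y\}$ only when the level-$i$ radius falls in a window of width $O(\ell)$, contributing a factor $O(\ell/r_i)$ --- is a statement about a \emph{random} radius, whereas Algorithm~\ref{alg:new_hst} as written fixes $r = \triangle/2^{L-i}$ deterministically at each level and randomizes only the permutation. With a deterministic radius the CKR-style partition does not directly yield the harmonic-sum $O(\log n)$ bound; one either needs to add the random radius scaling (as in FRT) or appeal to the FRT theorem abstractly rather than to the paper's specific instantiation. Since the paper treats the lemma as a citation anyway, the cleanest fix is simply to point to the FRT result for the general metric and to the grid/quadtree construction (e.g., Charikar et al.\ or Arora's analysis) for the Euclidean case, rather than tying the argument to Algorithm~\ref{alg:new_hst}.
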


Recall $C_0, C_1$ from Algorithm~\ref{alg:new_initial-NDP}. We define
\begin{align}
    cost_k^T(U) &=  \sum_{y\in U} \min_{x\in C_0} \rho^T(x,y),  \label{eqn:cost}\\
    {cost_k^T}'(U,C_1) &= \min_{\substack{|F\cap T(v)|=1,\\ \forall v\in C_1}} \sum_{y\in U} \min_{x\in F} \rho^T(x,y), \label{eqn:cost'}\\
    OPT_{k}^T(U) &= \min_{F\subset U,|F|=k} \sum_{y\in  U} \min_{x\in F} \rho^{T}(x,y)  \equiv \min_{C_1'}\ {cost_k^T}'(U,C_1'). \label{eqn:OPT}
\end{align}
For simplicity, we will use ${cost_k^T}'(U)$ to denote ${cost_k^T}'(U,C_1)$. Here, $OPT_k^T$ \eqref{eqn:OPT} is the cost of the global optimal solution with 2-HST metric. The last equivalence in  \eqref{eqn:OPT} holds because the optimal centers set can always located in $k$ disjoint subtrees, as each leaf only contain one point.
\eqref{eqn:cost} is the $k$-median cost with 2-HST metric of the output $C_0$ of Algorithm~\ref{alg:new_initial-NDP}. \eqref{eqn:cost'} is the oracle cost after the subtrees are chosen. That is, it represents the optimal cost to pick one center from each subtree in $C_1$. Firstly, we bound the approximation error of subtree search and leaf search, respectively.

\begin{lemma}[Subtree search]\label{lem:good_center_nonprivate}
${cost_k^T}'(U) \leq 5 OPT^T_{k}(U)$.
\end{lemma}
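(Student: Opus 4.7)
The plan is to compare the greedy subtree set $C_1$ against the subtrees naturally induced by an optimal solution $F^{*}=\{f_1^{*},\ldots,f_k^{*}\}$ for $OPT_k^T(U)$, and then to exhibit a feasible one-center-per-subtree assignment for $C_1$ whose HST cost is at most $5\,OPT_k^T(U)$.

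First, I would partition $U$ into clusters $\{S_i^{*}\}_{i=1}^k$ by assigning each $y\in U$ to its nearest optimal center $f_{i(y)}^{*}$, and associate with each $f_i^{*}$ the node $v_i^{*}$ equal to the lowest ancestor of $f_i^{*}$ whose subtree contains $S_i^{*}$, so that $S_i^{*}\subseteq T(v_i^{*})$. Because any $y\in S_i^{*}$ reaches $f_i^{*}$ along a tree path that stays inside $T(v_i^{*})$, one obtains
\[
OPT_k^T(U)\;=\;\sum_{i=1}^k\sum_{y\in S_i^{*}}\rho^T(y,f_i^{*}),
\]
and each inner sum is bounded by $|S_i^{*}|\cdot \mathrm{diam}(T(v_i^{*}))=O(|S_i^{*}|\cdot 2^{h_{v_i^{*}}})\le O(score(v_i^{*}))$, which ties the optimum directly to the greedy's scoring function.

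Next I would classify each $v_j\in C_1$ by its tree relation to the collection $V^{*}=\{v_i^{*}\}$: $v_j$ is either a descendant of some $v_i^{*}$, an ancestor of one or more $v_i^{*}$'s, or tree-incomparable to all of them. Using that the greedy selects $k$ nodes of largest score subject to disjointness, together with the key inequality that $N_v\cdot 2^{h_v}$ lower-bounds the HST cost of placing no center inside $T(v)$, I can match each cluster $S_i^{*}$ to a nearby greedy node $v_j$ and bound the cost of serving $S_i^{*}$ from a center picked in $T(v_j)$: in the descendant case the chosen center already lies inside the small subtree containing $S_i^{*}$; in the ancestor case the geometric decay of HST edge weights keeps the detour within a constant factor of $\sum_{y\in S_i^{*}}\rho^T(y,f_i^{*})$; the incomparable case is handled by the triangle inequality through the LCA, whose level is controlled by the disjointness constraint enforced by the greedy.

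Finally, the feasible center set $F=\{f_1,\ldots,f_k\}$ is assembled by choosing one leaf per subtree in $C_1$, preferring $f_j=f_i^{*}$ whenever some $f_i^{*}\in T(v_j)$, and summing the per-cluster bounds to yield the factor $5$. The main obstacle will be the case where a small greedy subtree $T(v_j)$ is nested strictly inside a larger optimal subtree $T(v_i^{*})$ whose points in $S_i^{*}\setminus T(v_j)$ lie at potentially large HST distance from any leaf of $T(v_j)$; handling this requires a level-by-level charging argument that uses both the score-maximization property of the greedy and the $2$-HST's geometric edge-weight decay to absorb those detours back into $OPT_k^T(U)$.
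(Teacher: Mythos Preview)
Your plan has the right spirit---comparing the greedy set $C_1$ against a ``good'' collection of subtrees coming from the optimum---but two concrete choices in your setup will not support the argument, and they differ from how the paper proceeds.

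First, defining $v_i^{*}$ as the \emph{lowest ancestor of $f_i^{*}$ whose subtree contains all of $S_i^{*}$} is too loose: a single point of $S_i^{*}$ far from $f_i^{*}$ can push $v_i^{*}$ up to the root, making $score(v_i^{*})$ enormous while $\sum_{y\in S_i^{*}}\rho^T(y,f_i^{*})$ stays small. Your displayed inequality ``each inner sum $\le O(score(v_i^{*}))$'' is then only an \emph{upper} bound on the optimum, which is the wrong direction for proving $cost'\le 5\,OPT$. The useful direction is the one you state later (``$N_v\cdot 2^{h_v}$ lower-bounds the cost of leaving $T(v)$ empty''), and the paper's proof is organized entirely around that inequality rather than around diameter bounds on $S_i^{*}$.

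Second, the paper does \emph{not} do a three-way case analysis with a level-by-level charging for the nested case you flag as the main obstacle. Instead it compares $C_1$ to an optimal \emph{root set} $C_1^{*}$ (the minimizer in the definition of $OPT_k^T$) and first \emph{eliminates} the ancestor/descendant cases by local modifications: if $v_i$ is a descendant of $v_j'$, push $v_j'$ down to the child containing the optimal leaf (this preserves optimality); if $v_j'$ is a descendant of $v_i$, replace $v_i$ by $v_j'$ in $C_1$ (this does not change ${cost_k^T}'$). What remains is a symmetric exchange between $C_1\setminus C_1^{*}$ and $C_1^{*}\setminus C_1$ where (i) $OPT_k^T\ge\sum_{v_i\in C_1\setminus C_1^{*}} N_{v_i}2^{h_{v_i}}$ by the ``empty subtree'' lower bound, (ii) ${cost_k^T}'-OPT_k^T\le 4\sum_{v_j'\in C_1^{*}\setminus C_1} N_{v_j'}2^{h_{v_j'}}$ via a swap argument, and (iii) greedy score-maximality gives $N_{v_j'}2^{h_{v_j'}}\le N_{v_i}2^{h_{v_i}}$ termwise. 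Summing these three yields the factor $5$ directly, with no separate charging for nested subtrees. If you want to follow your own route, you would need to replace $v_i^{*}$ by the root of the optimal subtree containing $f_i^{*}$ (so that the $v_i^{*}$ are disjoint and comparable in score to the greedy picks), and then your ``main obstacle'' essentially disappears after the same reductions.
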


\begin{lemma}[Leaf search] \label{lem:greedy_first NDP}
$cost^{T}_k(U)\leq 2 {cost_k^T}'(U)$.
\end{lemma}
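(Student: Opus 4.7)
My plan is to (i) use the separation property of the HST metric to reduce the global inequality to a per-subtree 1-median approximation, and (ii) establish this local claim by induction on the height of the subtree, using the greedy child-selection rule of FIND-LEAF.

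For (i), the key observation is that in a 2-HST, for any leaf $y\in T(v)$ with $v\in C_1$ the tree distance to any leaf of $T(v)$ is at most $2(2^{h_v}-1)$, while the tree distance to any leaf of a different subtree $T(v')$ (with $v'\in C_1$, $v'\neq v$) is at least $2(2^{h_v+1}-1)$, since the path must pass through a common ancestor strictly above $v$. Hence the center in any feasible $F$ nearest to $y$ is always the unique center sitting in $T(v)$. Moreover, if $y\notin\bigcup_{v\in C_1}T(v)$, the LCA of $y$ with any leaf of $T(v)$ coincides with the LCA of $y$ with $v$ itself, so $\rho^T(y,x)$ depends on $x\in T(v)$ only through $v$. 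Writing $\phi_v(c):=\sum_{y\in U\cap T(v)}\rho^T(c,y)$ and letting $K$ denote the (constant) contribution of points outside every chosen subtree, I obtain
\[
cost_k^T(U)\;=\;K+\sum_{v\in C_1}\phi_v(c_v),\qquad {cost_k^T}'(U)\;=\;K+\sum_{v\in C_1}\phi_v(c_v^*),
\]
where $c_v$ is the greedy leaf returned by FIND-LEAF and $c_v^*$ is a 1-median of $T(v)$. Since $K\ge 0$, the lemma reduces to proving $\phi_v(c_v)\le 2\phi_v(c_v^*)$ for each fixed $v$.

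For (ii), I induct on $H:=h_v$; the cases $H\le 1$ are immediate because all leaves of $T(v)$ are then equidistant. For the inductive step, let $u_g$ be the child of $v$ picked by FIND-LEAF (the one with maximum $N_u$) and let $u^*$ be the child of $v$ containing $c_v^*$. Splitting the cost across the edges incident to $v$ gives the identity $\phi_v(c)=\phi_{T(u)}(c)+(N_v-N_u)\cdot 2(2^H-1)$ for any $c\in T(u)$, because every $y\notin T(u)$ has LCA equal to $v$ with $c$. If $u_g=u^*$, the ``root-level'' term is identical on both sides and the induction hypothesis applied inside $T(u_g)$ closes the argument. If $u_g\neq u^*$, disjointness of the children's subtrees yields $N_v\ge N_{u_g}+N_{u^*}$, and combining this with the greedy maximality $N_{u_g}\ge N_{u^*}$ gives $N_v+N_{u_g}-2N_{u^*}\ge N_{u_g}$. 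After invoking the induction hypothesis on $T(u_g)$, the required inequality reduces to
\[
\phi_{T(u_g)}(c^*_{T(u_g)})-\phi_{T(u^*)}(c^*_{T(u^*)})\;\le\;(2^H-1)(N_v+N_{u_g}-2N_{u^*}),
\]
which follows from the trivial diameter upper bound $\phi_{T(u_g)}(c^*_{T(u_g)})\le N_{u_g}\cdot 2(2^{H-1}-1)=N_{u_g}(2^H-2)$ together with the nonnegativity of $\phi_{T(u^*)}(\cdot)$.

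The main obstacle is the ``$u_g\neq u^*$'' branch of the induction: FIND-LEAF may pay both for routing into the wrong top-level subtree and for a suboptimal internal configuration inside $T(u_g)$. The argument threads the needle because any within-subtree cost is bounded by mass times diameter, and the diameter $2(2^{H-1}-1)$ of a child subtree is essentially half the between-children penalty $2(2^H-1)$, producing exactly the factor of two in the statement.
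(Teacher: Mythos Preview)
Your proof is correct. The decomposition in part (i) — separating the cost of points inside each chosen subtree from the constant contribution $K$ of points outside — matches the paper's reduction exactly, including the observation that the choice of leaf within $T(v)$ does not affect distances to points outside $T(v)$.

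Where you differ from the paper is in the per-subtree 1-median analysis. You prove $\phi_v(c_v)\le 2\phi_v(c_v^*)$ by induction on the height $H$, explicitly splitting into the case $u_g=u^*$ (recurse via the identity $\phi_v(c)=\phi_{T(u)}(c)+(N_v-N_u)\cdot 2(2^H-1)$) and the case $u_g\neq u^*$ (diameter bound $\phi_{T(u_g)}(c^*_{T(u_g)})\le N_{u_g}(2^H-2)$ together with $N_v+N_{u_g}-2N_{u^*}\ge N_{u_g}$). The paper instead gives a one-shot argument at a single level: it compares the greedy child $w$ with the optimal child $u$ via two short observations and the inequality $N_w\ge N_u$, arriving at $cost_1^T(v,U)\le {cost_1^T}'(v,U)+N_u\cdot 2^{h_u}\le 2\,{cost_1^T}'(v,U)$, implicitly assuming $u\neq w$ and leaving the descent into the equal-child case unstated. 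Your induction is more complete in that it handles $u_g=u^*$ explicitly; the paper's direct argument is shorter but requires the reader to supply that recursion. Both routes rest on the same two facts: greedy selects the child of maximum mass, and the intra-child diameter $2(2^{H-1}-1)$ is dominated by the between-child distance $2(2^H-1)$.
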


Combining Lemma~\ref{lem:good_center_nonprivate} and Lemma~\ref{lem:greedy_first NDP}, we obtain

\begin{theorem}[2-HST error]\label{thm:NDP 2-HST metric}
Running Algorithm~\ref{alg:new_initial-NDP}, we have $cost^{T}_k(U) \leq 10OPT^T_k(U) $.
\end{theorem}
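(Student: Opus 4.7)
The plan is to obtain the theorem as an immediate consequence of Lemma~\ref{lem:good_center_nonprivate} and Lemma~\ref{lem:greedy_first NDP}, simply chaining them multiplicatively. First, Lemma~\ref{lem:good_center_nonprivate} controls the error introduced by the subtree-search phase: once the $k$ disjoint subtrees rooted at $C_1$ are fixed, the best way of placing one center per subtree costs at most $5$ times the true 2-HST optimum, i.e. ${cost_k^T}'(U) \leq 5\,OPT^T_k(U)$. Second, Lemma~\ref{lem:greedy_first NDP} controls the error introduced by the leaf-search phase: the greedy largest-child descent executed by FIND-LEAF from each root in $C_1$ loses at most an additional factor of $2$ compared with the optimal leaf assignment inside the fixed subtrees, i.e. $cost^T_k(U)\leq 2\,{cost_k^T}'(U)$. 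Composing these two bounds gives
\begin{equation*}
cost^T_k(U) \;\leq\; 2\,{cost_k^T}'(U) \;\leq\; 2\cdot 5\,OPT^T_k(U) \;=\; 10\,OPT^T_k(U),
\end{equation*}
which is exactly the stated inequality.

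Conceptually, the argument cleanly decouples the two stages of Algorithm~\ref{alg:new_initial-NDP}: the subtree-search stage pays a $5$-approximation for choosing where in the tree the $k$ centers will live, and the leaf-search stage pays a further $2$-approximation for turning each chosen subtree root into an actual data point. Because both approximations are measured against the same 2-HST metric $\rho^T$ (rather than the original $\rho$), they compose without any translation cost, and the overall bound is simply the product of the two factors. The one point that must be checked for the chaining to be valid is that the $C_1$ appearing in ${cost_k^T}'(U)$ in Lemma~\ref{lem:greedy_first NDP} is precisely the set produced by the subtree-search phase, which is exactly how ${cost_k^T}'(U)$ was defined in the excerpt; so the two lemmas plug in consistently. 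Consequently, the theorem itself poses no genuine obstacle — the real technical work lives in the proofs of the two lemmas, particularly Lemma~\ref{lem:good_center_nonprivate}, where one must show that the greedy top-$score$ subtree-selection rule is already a $5$-approximation to the optimal placement of the $k$ medians under the 2-HST metric.
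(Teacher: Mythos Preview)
Your proposal is correct and matches the paper's own proof exactly: the theorem is obtained by multiplicatively chaining Lemma~\ref{lem:good_center_nonprivate} and Lemma~\ref{lem:greedy_first NDP}, yielding the factor $2\cdot 5=10$. There is nothing to add.
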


Thus, HST-initialization produces an $O(1)$ approximation to $OPT$ in the 2-HST metric. Define $cost_k(U)$ as \eqref{def:cost rho} for our HST centers, and the optimal cost w.r.t. $\rho$ as
\begin{align}
    OPT_{k}(U) = \min_{|F|=k} \sum_{y\in  U} \min_{x\in F} \rho(x,y).
\end{align}
We have the following result based on Lemma~\ref{lemma:metrics}.

\begin{theorem}\label{theo:NDP rho metric}
In general metric space, the expected $k$-median cost of Algorithm~\ref{alg:new_initial-NDP} is $E[cost_k(U)]=O(\min \{\log n, \log \triangle \}) OPT_k(U)$.
\end{theorem}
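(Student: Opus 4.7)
The plan is to bridge the 2-HST-metric bound of Theorem~\ref{thm:NDP 2-HST metric} to the original metric $\rho$ via two standard facts about the Bartal/FRT embedding underlying Algorithm~\ref{alg:new_hst}: (i) the tree metric \emph{dominates} the original metric, i.e.\ $\rho^T(x,y)\geq \rho(x,y)$ for all $x,y\in U$, and (ii) in expectation it distorts distances by only a $O(\min\{\log n,\log\triangle\})$ factor, which is exactly Lemma~\ref{lemma:metrics}. Property (i) is immediate from the HST construction, since any two points in the same level-$i$ ball are connected through a common ancestor of level at least $i$, whose subtree has diameter at least $\triangle/2^{L-i}$, which upper-bounds $\rho(x,y)$ by the padded-decomposition property.

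With these two facts in hand, first I would use (i) to move from $\rho$ to $\rho^T$ on the cost of the HST centers $C_0$: since $\rho(u,f)\leq \rho^T(u,f)$ pointwise,
\begin{align*}
cost_k(U)=\sum_{v\in U}\min_{f\in C_0}\rho(v,f)\ \leq\ \sum_{v\in U}\min_{f\in C_0}\rho^T(v,f)=cost_k^T(U).
\end{align*}
Then I would invoke Theorem~\ref{thm:NDP 2-HST metric} to conclude $cost_k(U)\leq cost_k^T(U)\leq 10\, OPT_k^T(U)$ deterministically over the randomness of the algorithm's sampling, conditional on the random tree $T$.

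Next I would bound $E[OPT_k^T(U)]$ in terms of $OPT_k(U)$. Let $F^\star$ be an optimal $k$-median set under $\rho$, and for each $v\in U$ let $f^\star(v)\in F^\star$ be its nearest optimal center in $\rho$. By optimality over tree centers,
\begin{align*}
OPT_k^T(U)\ \leq\ \sum_{v\in U}\min_{f\in F^\star}\rho^T(v,f)\ \leq\ \sum_{v\in U}\rho^T(v,f^\star(v)).
\end{align*}
Taking expectations and applying Lemma~\ref{lemma:metrics} termwise gives
\begin{align*}
E[OPT_k^T(U)]\ \leq\ \sum_{v\in U} E[\rho^T(v,f^\star(v))]\ =\ O(\min\{\log n,\log\triangle\})\sum_{v\in U}\rho(v,f^\star(v)),
\end{align*}
and the right-hand side is $O(\min\{\log n,\log\triangle\})\,OPT_k(U)$. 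Chaining the two bounds and using the tower property yields $E[cost_k(U)]\leq 10\, E[OPT_k^T(U)]=O(\min\{\log n,\log\triangle\})\,OPT_k(U)$, which is the desired conclusion.

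The only mildly subtle step is the direction of the inequality in (i): the distortion lemma only controls the expected blow-up of $\rho^T$ over $\rho$, so to move from the HST bound to the $\rho$-bound \emph{for the algorithm's output} we must use domination in the favorable direction, and in the other direction (bounding $OPT^T$ by $OPT$) we pay the $O(\log)$ distortion only on the optimal pairs $(v,f^\star(v))$ rather than on all $n^2$ pairs. Everything else is bookkeeping, and no further property of the search strategy beyond Theorem~\ref{thm:NDP 2-HST metric} is needed.
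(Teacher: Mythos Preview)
Your argument is correct and is exactly the standard Bartal/FRT transfer that the paper intends: the paper's own proof is only the one-line remark ``We have the following result based on Lemma~\ref{lemma:metrics},'' and you have spelled out precisely the two ingredients this entails---domination $\rho\le\rho^T$ to pass from $cost_k(U)$ to $cost_k^T(U)$, then Theorem~\ref{thm:NDP 2-HST metric}, then Lemma~\ref{lemma:metrics} applied only to the optimal assignment pairs to bound $E[OPT_k^T(U)]$ by $O(\min\{\log n,\log\triangle\})\,OPT_k(U)$. Your observation that the search in Algorithm~\ref{alg:new_initial-NDP} is deterministic given $T$, so the only expectation is over the HST, is also on point.
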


\begin{remark}
In the Euclidean space, \cite{MakarychevMR19} proved $O(\log k)$ random projections suffice for $k$-median to achieve $O(1)$ error. Thus, if $\triangle=O(d)$ (e.g., bounded data), by Lemma~\ref{lemma:metrics}, HST initialization is able to achieve $O(\log (\min\{d,k\}))$ error, which is better than $O(\log k)$ of $k$-median++ when $d$ is small.
\end{remark}

\vspace{0.1in}
\noindent\textbf{NDP-HST Local Search.} We are interested in the approximation quality of standard local search (Algorithm~\ref{alg:rand local search}), when initialized by our NDP-HST.

\begin{theorem} \label{theo:NDP}
NDP-HST local search achieves $O(1)$ approximation error in  expected $O(k\log \log\min\{ n, \triangle\})$ number of iterations for input in general metric space.
\end{theorem}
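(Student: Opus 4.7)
The plan is to combine the initialization bound from Theorem~\ref{theo:NDP rho metric} with the standard per-iteration progress guarantee of the local search procedure in Algorithm~\ref{alg:rand local search}. The main observation is that the $5$-approximation analysis of \cite{DBLP:journals/siamcomp/AryaGKMMP04} implies two things simultaneously: (i) whenever the current solution $F$ satisfies $cost(F) > (5+\alpha') OPT_k(U)$ for a suitable $\alpha'$ tied to the tolerance $\alpha$, there must exist a swap $(x,y) \in F \times (U\setminus F)$ with $cost(F-\{x\}+\{y\}) \leq (1-\alpha/k)\,cost(F)$; and (ii) once the while-loop terminates, $cost(F) \leq (5+\alpha') OPT_k(U)$, giving the desired $O(1)$ multiplicative error. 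Thus it suffices to upper-bound the number of swap iterations required before the termination criterion is triggered.

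To bound that number, I would use the geometric-decay argument: each executed swap multiplies the cost by at most $(1-\alpha/k)$. Writing $cost_{\text{init}}$ for the cost of the HST initial center set and $cost_{\text{final}} \geq OPT_k(U)$ for the cost when the loop exits, the number of iterations $T$ satisfies
\begin{equation*}
(1-\alpha/k)^T \cdot cost_{\text{init}} \;\geq\; cost_{\text{final}} \;\geq\; OPT_k(U),
\end{equation*}
so $T \leq \frac{k}{\alpha}\ln\!\left(cost_{\text{init}}/OPT_k(U)\right)$ up to constants. By Theorem~\ref{theo:NDP rho metric}, we have $\mathbb{E}[cost_{\text{init}}] = O(\min\{\log n, \log \triangle\})\,OPT_k(U)$, hence $\mathbb{E}[\ln(cost_{\text{init}}/OPT_k(U))] = O(\log\log \min\{n,\triangle\})$ after applying Jensen's inequality to the concave $\ln(\cdot)$. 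Taking expectation yields $\mathbb{E}[T] = O(k \log\log \min\{n,\triangle\})$, which is exactly the claimed iteration complexity.

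The main obstacle is being careful about two technical points. First, Jensen's inequality only controls $\mathbb{E}[\ln X]$ by $\ln \mathbb{E}[X]$ (in the right direction) because $\ln$ is concave, but one has to verify that the random $cost_{\text{init}}/OPT_k(U)$ has enough integrability for the bound to be meaningful; in the HST setting the ratio is deterministically at most a polynomial in $n$ and $\triangle$, so this is benign. Second, one must justify that the Arya et al.~invariant truly gives the $(1-\alpha/k)$ progress at every non-terminal step, which follows by reusing their swap-analysis verbatim (it is independent of the initialization) and absorbing the constant $\alpha'$ into the final $O(1)$ approximation factor. Putting these pieces together yields the theorem.
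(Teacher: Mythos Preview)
Your proposal is correct and is precisely the intended argument. The paper does not actually spell out a proof of Theorem~\ref{theo:NDP}; it is stated without proof, with the understanding that it follows by pairing Theorem~\ref{theo:NDP rho metric} with the standard Arya et al.\ per-swap progress bound (your Lemma~\ref{lemma:2} analogue), exactly as you outline. The only place the paper carries out this computation explicitly is in the differentially private analogue (the proof of Theorem~\ref{thm:DP-HST}), where the same geometric-decay reasoning---initial cost at most a $\log$-factor above $OPT$, each step shrinks by $(1-\Theta(1)/k)$, hence $O(k\log\log n)$ steps suffice---appears with additive noise terms attached. Your use of Jensen on the concave $\ln(\cdot)$ to convert the bound $\mathbb{E}[cost_{\text{init}}]=O(\min\{\log n,\log\triangle\})\,OPT_k(U)$ into a bound on $\mathbb{E}[T]$ is a clean and correct way to obtain the \emph{expected} iteration count; the paper's DP proof instead fixes a deterministic iteration budget $T$ and tracks $\mathbb{E}[cost(F_T)]$, which is a different packaging of the same calculation.
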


% \begin{proof}
% By Theorem~\ref{theo:NDP rho metric}, the HST initial centers admit $E[cost(F_0)]=O(\min\{\log n,\log \triangle\})\cdot OPT$
% By recursive improvement of local search, to reach $O(1)$ approximation, we need
% \begin{equation*}
%  cost(F_0)(1-\alpha/k)^T = O(1)\cdot OPT.
% \end{equation*}
% It then follows that $E[T]=O(k)\cdot E[\log (cost(F_0)/OPT)]\leq O(k\log \min\{\log n,\log \triangle\})$.
% \end{proof}

Before ending this section, we remark that the initial centers found by NDP-HST can be used for $k$-means clustering analogously. For general metrics, $E[cost_{km}(U)]=O(\min \{\log n, \log \triangle \})^2 OPT_{km}(U)$ where $cost_{km}(U)$ is the optimal $k$-means cost. See Appendix~\ref{sec:append k-means} for the detailed (and similar) analysis.

\section{HST Initialization with Differential Privacy}  \label{sec:DP}

In this section, we consider initialization method with differential privacy (DP). Recall (\ref{def:cost rho DP}) that $U$ is the universe of data points, and $D\subset U$ is a demand set that needs to be clustered with privacy. Since $U$ is public, simply running initialization algorithms on $U$ would preserve the privacy of $D$. However, 1) this might be too expensive; 2) in many cases one would probably want to incorporate some information about $D$ in the initialization, since $D$ could be a very imbalanced subset of $U$. For example, $D$ may only contain data points from one cluster, out of tens of clusters in $U$. In this case, initialization on $U$ is likely to pick initial centers in multiple clusters, which would not be helpful for clustering on $D$. Next, we show how our proposed HST initialization can be easily combined with differential privacy that at the same time contains information about the demand set $D$, leading to improved approximation error (Theorem~\ref{thm:DP-HST}). Again, suppose $T$ is an $L=\log\triangle$-level 2-HST of universe $U$ in a general metric space. Denote $N_v=|T(v)\cap D|$ for a node point $v$. Our private HST initialization (DP-HST) is similar to the non-private Algorithm~\ref{alg:new_initial-NDP}. To gain privacy, we perturb $N_v$ by adding i.i.d. Laplace noise:
$$\hat{N_v}=N_v+Lap(2^{(L-h_v)}/\epsilon),$$
where $Lap(2^{(L-h_v)}/\epsilon)$ is a Laplace random number with rate $2^{(L-h_v)}/\epsilon$. We will use the perturbed $\hat N_v$ for node sampling instead of the true value $N_v$, as described in Algorithm~\ref{alg:new_initial DP}. The DP guarantee of this initialization scheme is straightforward by the composition theory~\citep{DBLP:conf/icalp/Dwork06}.

\begin{algorithm2e}[h]
	\SetKwInput{Input}{Input}
	\SetKwInput{Output}{Output}
	\SetKwInput{Initialization}{Initialization}
	\Input{$U,D$, $\triangle$, $k$, $\epsilon$}
	
	\DontPrintSemicolon
	Build a level-$L$ 2-HST  $T$ based on input $U$\;
	
	\For{each node $v$ in $T$}{
	$N_v\gets |D\cap T(v)| $\;
	$\hat{N_v}\gets N_v+Lap(2^{(L-h_v)}/\epsilon)$\;
	$score(v)\gets \hat{N}(v)\cdot 2^{h_v}$
	}
	Based on $\hat N_v$, apply the same strategy as Algorithm~\ref{alg:new_initial-NDP}: find $C_1$; $C_0=\text{FIND-LEAF(}C_1\text{)}$

	\Output{Private initial center set $C_0 \subseteq U$ }
	\caption{DP-HST initialization}
	\label{alg:new_initial DP}
% 	\algolab{block}
\end{algorithm2e}

\begin{theorem}
Algorithm~\ref{alg:new_initial DP} is $\epsilon$-differentially private.
\end{theorem}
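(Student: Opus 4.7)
The plan is to observe that Algorithm~\ref{alg:new_initial DP} touches the private set $D$ only through the noisy count vector $(\hat N_v)_{v\in T}$, and then to bound the privacy loss by summing the per-level Laplace contributions along the unique root-to-leaf path of a swapped point.

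First I would note that the HST $T$ itself (Algorithm~\ref{alg:new_hst}) is built from the public universe $U$ and is therefore $D$-independent. The subtree-selection step (ranking $score(v)=\hat N_v\cdot 2^{h_v}$ and pruning ancestor-descendant pairs) and the FIND-LEAF routine are deterministic functions of $T$ and $(\hat N_v)_v$, so by post-processing closure of differential privacy it suffices to show that releasing the whole vector $(\hat N_v)_{v\in T}$ is $\epsilon$-differentially private.

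Next, for adjacent datasets $D,D'$ differing in a single point $u$, I would use the fact that at each level $i$ the subtrees $\{T(v):h_v=i\}$ partition $U$, so $u$ lies in exactly one node at each level. Consequently the counts $N_v$ differ at exactly $L$ nodes---the nodes $v_1,\dots,v_L$ on the root-to-leaf path through $u$---and by exactly $1$ at each of them; at every other node $N_v(D)=N_v(D')$. Applying the Laplace mechanism, $\hat N_v = N_v + Lap(2^{L-h_v}/\epsilon)$ yields $\epsilon/2^{L-h_v}$-DP for the coordinate at $v$, so by basic sequential composition across the affected path the total privacy loss is at most
\begin{equation*}
\sum_{i=1}^{L} \frac{\epsilon}{2^{L-i}} \;=\; \epsilon\sum_{j=0}^{L-1} 2^{-j} \;<\; 2\epsilon,
\end{equation*}
which is $O(\epsilon)$-DP (and can be rescaled to exactly $\epsilon$-DP by doubling the noise scale, i.e.\ using rate $2^{L-h_v+1}/\epsilon$).

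The main conceptual point---and the only real obstacle---is recognizing that the geometric schedule of noise scales is what prevents the bound from degenerating: a uniform noise level $Lap(1/\epsilon)$ at every node would force naive composition to pay $L\epsilon$, blowing up like $\log\triangle$, but the doubling schedule compresses the composition sum into a convergent geometric series. Everything after that is routine: the sensitivity bookkeeping is immediate from subtree disjointness at each level, and the downstream center-selection stage requires no extra privacy accounting thanks to post-processing.
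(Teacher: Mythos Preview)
Your proposal is correct and follows essentially the same route as the paper: use disjointness of subtrees at each fixed level (parallel composition) to get a per-level budget of $\epsilon/2^{L-i}$, then sum the geometric series across levels (sequential composition). You are in fact more explicit than the paper in two places---invoking post-processing for the center-selection stage, and noting that the geometric sum is $<2\epsilon$ rather than $<\epsilon$ (the paper's bound implicitly relies on a level range that excludes the root, i.e.\ summing $\sum_{i=0}^{L-1}\epsilon/2^{L-i}<\epsilon$); your rescaling remark handles this cleanly.
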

\begin{proof}
For each level $i$, the subtrees $T(v,i)$ are disjoint to each other. The privacy used in $i$-th level is $\epsilon/2^{(L-i)}$, and the total privacy is $\sum _{i} \epsilon/2^{(L-i)}<\epsilon$.
\end{proof}

We now consider the approximation error. As the structure of the analysis is similar to the non-DP case, we present the main result here and defer the detailed proofs to Appendix~\ref{sec:append proof}.

\begin{theorem}\label{theo:DP rho metric}
Algorithm~\ref{alg:new_initial DP} finds initial centers such that $$E[cost_k(D)]=O(\log n) (OPT_k(D)+k\epsilon^{-1}\triangle\log n).$$
\end{theorem}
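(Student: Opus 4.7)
The plan is to mirror the three-step non-private analysis (Lemma~\ref{lem:good_center_nonprivate}, Lemma~\ref{lem:greedy_first NDP}, and Theorem~\ref{thm:NDP 2-HST metric}) while tracking the extra error introduced by the Laplace perturbations. A useful structural observation is that for any node $v$ the noise on the score equals $Lap(2^{L-h_v}/\epsilon)\cdot 2^{h_v}$, which is distributed as $Lap(2^L/\epsilon)=Lap(\triangle/\epsilon)$ regardless of level. By the Laplace tail bound and a union bound over the at most $O(n)$ nodes of $T$, I would first show that, with probability $1-1/\mathrm{poly}(n)$, every node $v$ satisfies $|\hat N_v-N_v|\leq O(2^{L-h_v}\log n/\epsilon)$, and equivalently $|score(v)-N_v\cdot 2^{h_v}|\leq O(\triangle\log n/\epsilon)$; I condition on this event throughout.

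For the private subtree search, let $C_1^\star$ denote the non-private top-$k$ set computed from the true counts on $D$, which by Lemma~\ref{lem:good_center_nonprivate} satisfies ${cost_k^T}'(D,C_1^\star)\leq 5\,OPT_k^T(D)$. Since our DP output $C_1$ maximizes the noisy score, which is pointwise within $O(\triangle\log n/\epsilon)$ of the true score $N_v\cdot 2^{h_v}$, the inequality $\sum_{v\in C_1}score(v)\geq\sum_{v\in C_1^\star}score(v)$ implies that the true aggregate score of $C_1$ is at most $O(k\triangle\log n/\epsilon)$ below that of $C_1^\star$. Because the charging argument behind Lemma~\ref{lem:good_center_nonprivate} bounds cost increase by score deficit up to a constant, this gives
\begin{align*}
{cost_k^T}'(D,C_1)\leq 5\,OPT_k^T(D)+O\bigl(k\triangle\log n/\epsilon\bigr).
\end{align*}

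For FIND-LEAF I would repeat Lemma~\ref{lem:greedy_first NDP}'s argument along each of the $k$ descent paths, noting that at a descent step from level $i+1$ the noise on the child counts is $O(2^{L-i}\log n/\epsilon)$, while the per-point cost penalty for descending into a ``wrong'' sibling is $O(2^i)$; the product is a level-independent $O(\triangle\log n/\epsilon)$ per descent step. Summing over the $L=\log\triangle$ descent levels and $k$ subtrees,
\begin{align*}
cost_k^T(D)\leq 2\,{cost_k^T}'(D,C_1)+O\bigl(k\triangle\log^2 n/\epsilon\bigr).
\end{align*}

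Chaining the two bounds yields $cost_k^T(D)\leq 10\,OPT_k^T(D)+O(k\triangle\log^2 n/\epsilon)$ in the tree metric. To pass to the original metric I use the pointwise inequality $\rho(x,y)\leq\rho^T(x,y)$, giving $cost_k(D)\leq cost_k^T(D)$, and apply Lemma~\ref{lemma:metrics} to a fixed optimum $F^\star$ of $OPT_k(D)$ to obtain $E[OPT_k^T(D)]\leq O(\log n)\,OPT_k(D)$. Taking expectations and absorbing multiplicative constants into the outer $O(\log n)$ factor yields the claimed bound. The main obstacle I anticipate is the leaf-search step: unlike the single-shot subtree selection, FIND-LEAF accumulates noise along an $L$-step greedy descent, and one must argue carefully that a wrong level-$i$ choice contributes only the level-independent $O(\triangle\log n/\epsilon)$ rather than an amount scaling with $N_v$ or with the depth remaining below $v$, exploiting the geometric shrinkage of HST edge weights.
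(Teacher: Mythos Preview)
Your proposal is correct and follows the same three-step skeleton as the paper (uniform Laplace tail bound on scores, DP subtree search, DP leaf search, then pass from tree metric to $\rho$). The subtree-search step is essentially the paper's Lemma~A.4: the paper redoes the swap argument of Lemma~\ref{lem:good_center_nonprivate} directly against the optimal root set $C_1^*$, replacing the score inequality $N_{v'_j}2^{h_{v'_j}}\le N_{v_i}2^{h_{v_i}}$ by its noisy version with slack $O(\epsilon^{-1}\triangle\log n)$ per pair; your detour through the non-private top-$k$ set $C_1^\star$ is unnecessary but harmless once you unwind it.

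The genuine difference is the leaf-search step. You accumulate one $O(\epsilon^{-1}\triangle\log n)$ noise term per descent level and sum over $L=\log\triangle$ levels, getting $cost_k^T(D)\le 2\,{cost_k^T}'(D)+O(\epsilon^{-1}k\triangle\log^2 n)$. The paper's Lemma~A.5 avoids this summation entirely: it observes that once FIND-LEAF commits to a top-level child $w$ of $v$, the bound $cost_1^T(v,D)\le {cost_1^T}'(v,D)+N_u\cdot 2^{h_u}$ holds for \emph{any} leaf of $T(w)$, so the lower-level greedy choices (and their noise) are already absorbed into the multiplicative factor $2$ by geometric shrinkage. Only the single top-level comparison $N_u 2^{h_u}\le N_w 2^{h_w}+O(\epsilon^{-1}\triangle\log n)$ contributes additive error, yielding the sharper tree-metric bound $cost_k^T(D)\le 10\,OPT_k^T(D)+O(\epsilon^{-1}k\triangle\log n)$. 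Both routes prove Theorem~\ref{theo:DP rho metric} as stated, since the outer $O(\log n)$ factor makes the two additive terms coincide; the paper's argument simply gives a tighter intermediate lemma and matches the ``obstacle'' you yourself flagged.
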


\textbf{DP-HST Local Search.} Similarly, we can use private HST initialization to improve the performance of private $k$-median local search, which is presented in Algorithm~\ref{alg:DP-HST}. After initialization, the DP local search procedure follows~\cite{DBLP:conf/soda/GuptaLMRT10} using the exponential mechanism.

\begin{algorithm2e}[h]
	\SetKwInput{Input}{Input}
	\SetKwInput{Output}{Output}
	\SetKwInput{Initialization}{Initialization}
	\Input{$U$, demand points $D \subseteq U $, parameter $k, \epsilon$, $T$}
	\Initialization{$F_1$ the private initial centers generated by Algorithm~\ref{alg:new_initial DP} with privacy $\epsilon/2$}
	
	\DontPrintSemicolon
	Set parameter $\epsilon'=\frac{\epsilon}{4\triangle (T+1)}$ \;
	
	\For{$i=1$ to $T$}{
	Select $(x, y) \in F_i \times (V \setminus F_i)$ with prob. proportional to $\exp (-\epsilon' \times ( cost(F_i - \{x\} + \{y\}))$\;
	Let $ F_{i+1} \leftarrow F_i-\{x\}+\{y\} $
	}
	Select $j$ from $\{1, 2, ... , T+1 \}$ with probability proportional to $\exp(-\epsilon'\times  cost(F_j ))$

	\Output{$F=F_j$ the private center set}
	\caption{DP-HST local search}
	\label{alg:DP-HST}
% 	\algolab{block}
\end{algorithm2e}

\begin{theorem}	\label{thm:DP-HST}
Algorithm~\ref{alg:DP-HST} achieves $\epsilon$-differential privacy. With probability $(1-\frac{1}{poly(n)})$, the output centers admit
\begin{align*}
    cost_k(D)\leq 6OPT_k(D)+O(\epsilon^{-1}k^2\triangle(\log\log n)\log n)
\end{align*}
in $T=O(k\log\log n)$ iterations.
\end{theorem}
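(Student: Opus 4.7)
The plan is to split the argument into privacy accounting, the per-step decrement of the noisy local search, and the choice of $T$ that exploits the warm start from Algorithm~\ref{alg:new_initial DP}.

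\textbf{Privacy.} The initialization is run with budget $\epsilon/2$ and is $(\epsilon/2)$-DP by the theorem preceding this one. For each of the $T$ swap steps and for the final selection step, the utility $-cost(\cdot)$ has sensitivity at most $\triangle$, since changing a single point of $D$ shifts any $k$-median cost by at most the diameter. Hence each exponential-mechanism call with weight $\exp(-\epsilon'\cdot cost)$ is $(2\epsilon'\triangle)$-DP, and basic composition over the $T+1$ calls contributes $2\epsilon'\triangle(T+1)=\epsilon/2$. Summing the two halves gives the claimed $\epsilon$-DP.

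\textbf{One-step decrement.} Following the framework of~\cite{DBLP:journals/siamcomp/AryaGKMMP04} and its DP adaptation by~\cite{DBLP:conf/soda/GuptaLMRT10}, I would invoke the structural lemma that whenever $cost(F_i) > 5\,OPT_k(D)$, some single-point swap reduces the cost by at least $(cost(F_i)-5\,OPT_k(D))/k$. The exponential mechanism over the $k(n-k)$ candidate swaps returns, with probability $1-1/\mathrm{poly}(n)$, a swap within additive slack $\Lambda = O(\log n/\epsilon') = O(\epsilon^{-1}\triangle T\log n)$ of the best. This yields the recursion
\[ cost(F_{i+1}) - 5\,OPT_k(D) \;\leq\; \bigl(1 - 1/k\bigr)\bigl(cost(F_i) - 5\,OPT_k(D)\bigr) + \Lambda. \]
A union bound across the $T+1$ mechanism calls then delivers the $(1-1/\mathrm{poly}(n))$ concentration uniformly.

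\textbf{Warm start and choice of $T$.} Unrolling the recursion gives $cost(F_{T+1}) \leq 5\,OPT_k(D) + (1-1/k)^T cost(F_1) + k\Lambda$. By Theorem~\ref{theo:DP rho metric}, the initializer produces $F_1$ with $E[cost(F_1)] = O(\log n)\bigl(OPT_k(D) + k\epsilon^{-1}\triangle\log n\bigr)$. Choosing $T = \Theta(k\log\log n)$ makes $(1-1/k)^T \leq 1/\log n$, so the geometric term collapses to $O(OPT_k(D) + k\epsilon^{-1}\triangle\log n)$. The accumulated noise is $k\Lambda = O(\epsilon^{-1}k\triangle T\log n) = O(\epsilon^{-1}k^2\triangle(\log\log n)\log n)$, which dominates the surviving initialization residual. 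The final exponential-mechanism choice among $\{F_1,\dots,F_{T+1}\}$ adds only another $O(\Lambda)$, absorbed into the same term, and tuning the implicit slack parameter in the local-search threshold tightens the multiplicative constant to $6$.

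\textbf{Main obstacle.} The delicate point is that $\epsilon'$ itself scales as $1/T$, so shortening the local-search phase to $T = \Theta(k\log\log n)$ trades per-step noise against number of iterations. I will need to verify that the product $kT/\epsilon'$ actually shrinks by the target $\log n/\log\log n$ factor relative to the random-start analysis of~\cite{DBLP:conf/soda/GuptaLMRT10}, and that the additive $k\epsilon^{-1}\triangle\log n$ term inherited from the private initialization is suppressed by the $(1-1/k)^T$ factor rather than inflating the final bound. Once this bookkeeping is carried out, the claimed $6\,OPT_k(D) + O(\epsilon^{-1}k^2\triangle(\log\log n)\log n)$ bound drops out.
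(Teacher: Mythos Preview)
Your proposal is correct and follows essentially the same route as the paper: privacy via composition over the $\epsilon/2$-DP initializer and $T+1$ exponential-mechanism calls, the Arya et~al.\ structural swap lemma combined with the exponential-mechanism accuracy guarantee of~\cite{DBLP:conf/soda/GuptaLMRT10}, and the warm start from Theorem~\ref{theo:DP rho metric} so that $T=\Theta(k\log\log n)$ suffices. The only cosmetic difference is that the paper phrases the contraction as ``whenever $cost(F_i)>6\,OPT$ the cost shrinks by a $(1-1/6k)$ factor,'' which yields the constant $6$ directly, whereas you unroll the recursion on $cost(F_i)-5\,OPT$ and then appeal to a slack-tuning argument; these are equivalent formulations of the same bookkeeping.
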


The DP local search with random initialization~\citep{DBLP:conf/soda/GuptaLMRT10} has 6 multiplicative error and $O(\epsilon^{-1}\triangle k^2\log^2 n)$ additive error. Our result improves the $\log n$ term to $\log\log n$ in the additive error. Meanwhile, the number of iterations needed is improved from $T=O(k\log n)$ to $O(k\log\log n)$ (see Section~\ref{sec:iteration-cost} for an empirical justification). Notably, it has been shown in \cite{DBLP:conf/soda/GuptaLMRT10} that for $k$-median problem, the lower bounds on the multiplicative and additive error of any $\epsilon$-DP algorithm are $O(1)$ and $O(\epsilon^{-1}\triangle k\log (n/k))$, respectively. Our result matches the lower bound on the multiplicative error, and the additive error is only worse than the bound by a factor of $O(k\log\log n)$ which would be small in many cases. To our knowledge, Theorem~\ref{thm:DP-HST} is the first result in literature to improve the error of DP local search in general metric space.

%\vspace{0.2in}

\section{Experiments}  \label{sec:experiment}

\subsection{Datasets and Algorithms}

\begin{figure}[b!]

\vspace{-0.3in}

\centering
\mbox{\hspace{-0.2in}
    \includegraphics[width=3.4in]{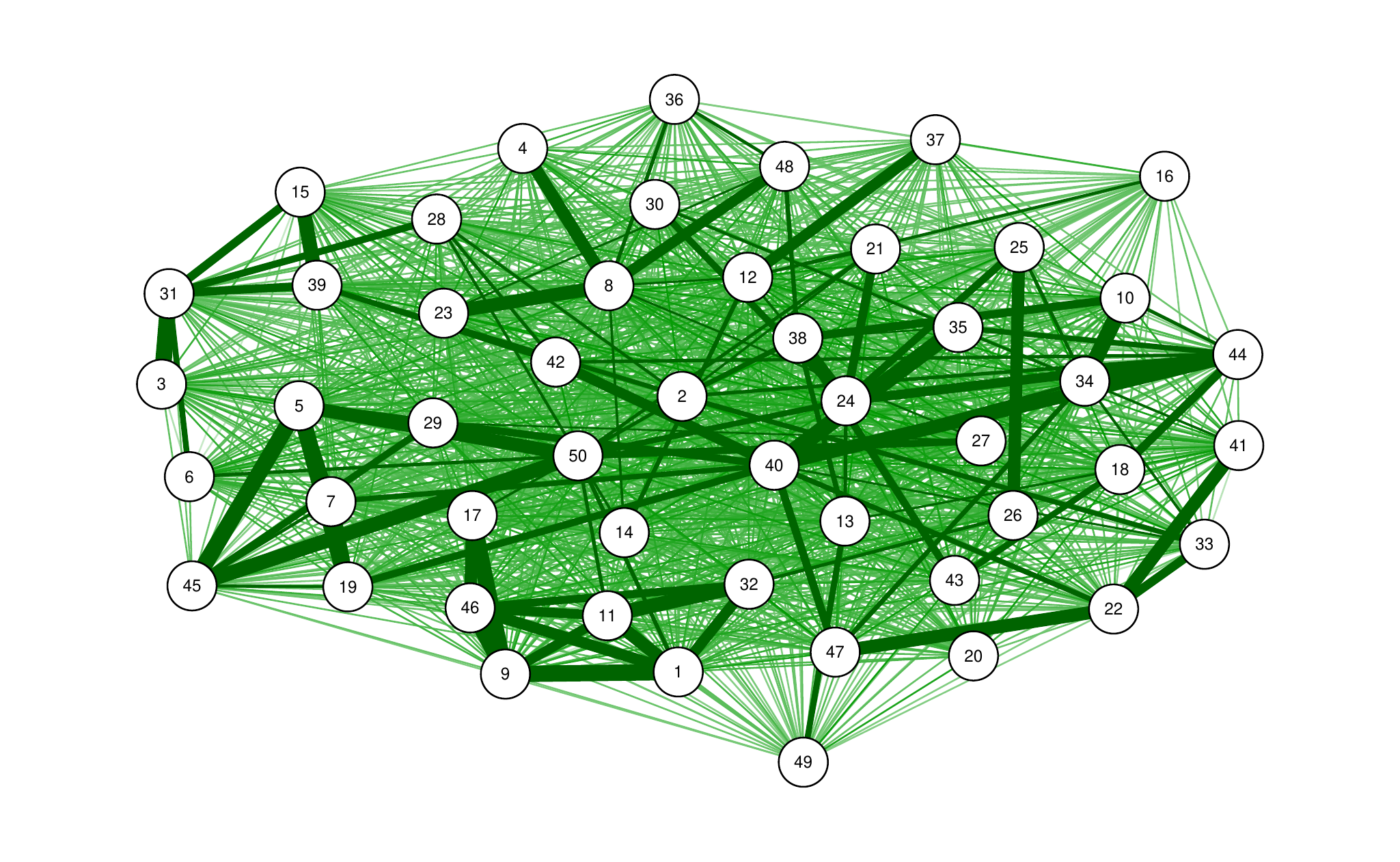}
    \includegraphics[width=3.4in]{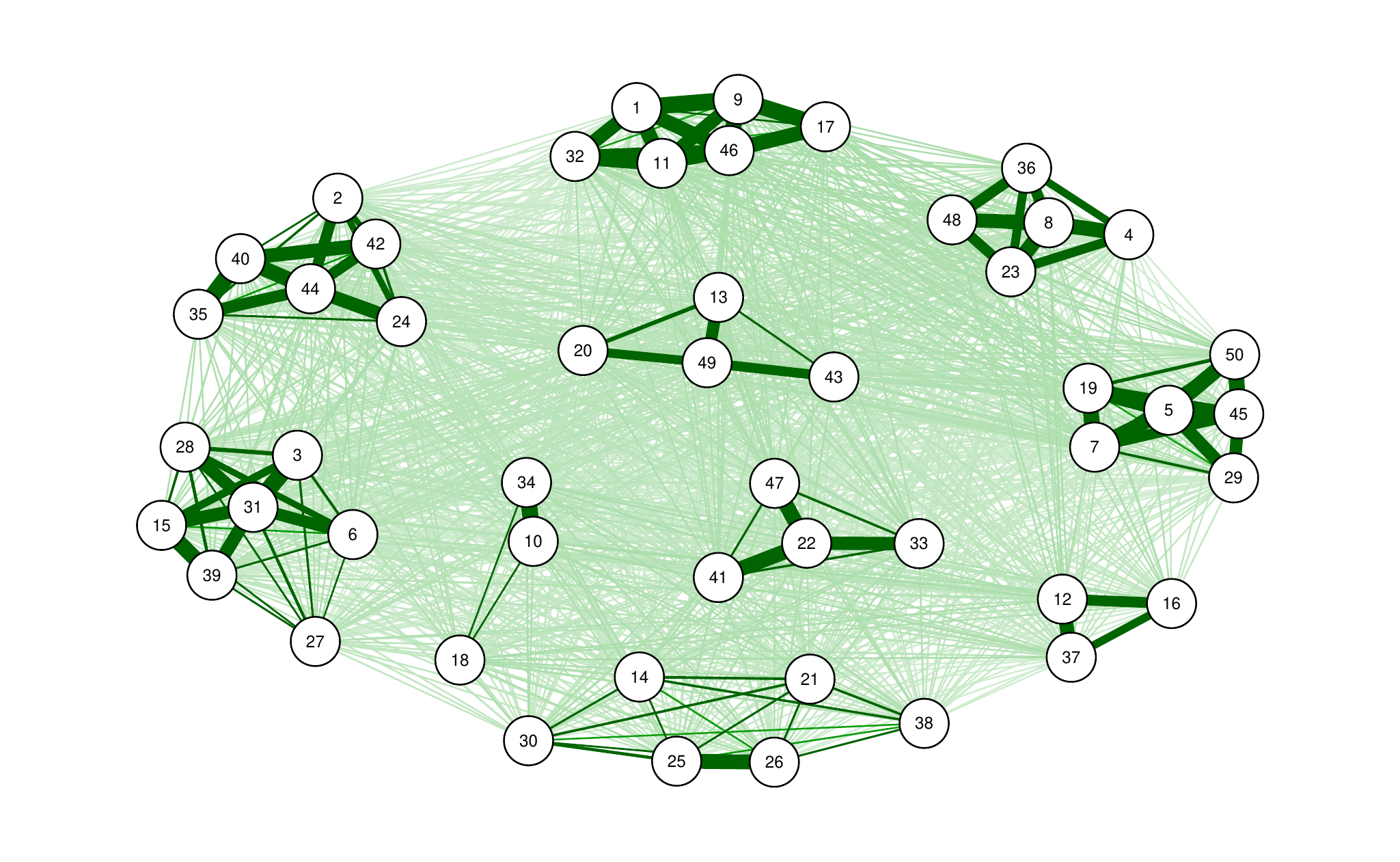}
    }

    \vspace{-0.25in}

	\caption{Example of synthetic graphs: subgraph of 50 nodes. \textbf{Left:} $r=1$. \textbf{Right:} $r=100$. Darker and thicker edged have smaller distance. When $r=100$, the graph is more separable.} 	
	\label{fig:graph example}\vspace{-0.25in}
\end{figure}

\noindent\textbf{Discrete Euclidean space.}\hspace{0.1in}Following previous work ., we test $k$-median clustering on the MNIST hand-written digit dataset~\citep{lecun1998gradient} with 10 natural clusters (digit 0 to 9). We set $U$ as 10000 randomly chosen data points. We choose the demand set $D$ using two strategies: 1) ``balance'', where we randomly choose 500 samples from $U$; 2) ``imbalance'', where $D$ contains 500 random samples from $U$ only from digit ``0'' and ``8'' (two clusters). We note that, the imbalanced $D$ is a very practical setting in real-world scenarios, where data are typically not uniformly distributed. On this dataset, we test clustering with both $l_1$ and $l_2$ distance as the underlying metric.

\vspace{0.1in}
\noindent\textbf{Metric space induced by graph.}\hspace{0.1in}Random graphs have been widely considered in testing $k$-median methods~\citep{DBLP:conf/nips/BalcanEL13,todo2019fast}. The construction of graphs follows a similar approach as the synthetic \textit{pmedinfo} graphs provided by the popular OR-Library~\citep{beasley1990_OR}. The metric $\rho$ for this experiment is the shortest (weighted) path distance. To generate a size $n$ graph, we first randomly split the nodes into $10$ clusters. Within each cluster, each pair of nodes is connected with probability $0.2$ and weight drawn from standard uniform distribution. For each pair of clusters, we randomly connect some nodes from each cluster, with weights following uniform $[0.5,r]$. A larger $r$ makes the graph more separable, i.e., clusters are farther from each other.  In Figure~\ref{fig:graph example}, we plot two example graphs (subgraphs of 50 nodes) with $r=100$ and $r=1$. We present two cases: $r=1$ and $r=100$. For this task, $U$ has 3000 nodes, and the private set $D$ (500 nodes) is chosen using similar ``balanced'' and ``imbalanced'' scheme as described above. In the imbalanced case, we choose $D$ randomly from only two clusters.

\vspace{0.1in}
\noindent\textbf{Algorithms.} We compare the following clustering algorithms in both non-DP and DP setting: (1) \textbf{NDP-rand:} Local search with random initialization; (2) \textbf{NDP-kmedian++:} Local search with $k$-median++ initialization (Algorithm~\ref{alg:k-median++}); (3) \textbf{NDP-HST:} Local search with NDP-HST initialization (Algorithm~\ref{alg:new_initial-NDP}), as described in Section~\ref{sec:NDP}; (4) \textbf{DP-rand:} Standard DP local search algorithm~\citep{DBLP:conf/soda/GuptaLMRT10}, which is Algorithm~\ref{alg:DP-HST} with initial centers randomly chosen from $U$; (5) \textbf{DP-kmedian++:} DP local search with $k$-median++ initialization run on $U$; (6) \textbf{DP-HST:} DP local search with HST-initialization (Algorithm~\ref{alg:DP-HST}). For non-DP tasks, we set $L=6$. For DP clustering, we use $L=8$.

% \begin{itemize}
% \item \textbf{NDP-rand:} Local search with random initialization.

% \item \textbf{NDP-kmedian++:} Local search with $k$-median++ initialization (Algorithm~\ref{alg:k-median++}).

% \item \textbf{NDP-HST:} Local search with NDP-HST initialization (Algorithm~\ref{alg:new_initial-NDP}), as described in Section~\ref{sec:NDP}.

% \item \textbf{DP-rand:} Standard DP local search algorithm~\citep{DBLP:conf/soda/GuptaLMRT10}, which is Algorithm~\ref{alg:DP-HST} with initial centers randomly chosen from $U$.

% \item \textbf{DP-kmedian++:} DP local search with $k$-median++ initialization run on $U$.

% \item \textbf{DP-HST:} DP local search with HST-initialization (Algorithm~\ref{alg:DP-HST}). For non-private tasks, we set $L=6$. For private clustering, we use $L=8$.
% \end{itemize}

For non-DP methods, we set $\alpha=10^{-3}$ in Algorithm~\ref{alg:rand local search} and the maximum number of iterations as 20. To examine the quality of initialization as well as the final centers, We report both the cost at initialization and the cost of the final output. For DP methods, we run the algorithms for $T=20$ steps and report the results with $\epsilon=1$. We test $k\in\{2,5,10,15,20\}$. The average cost over $T$ iterations is reported for more robustness. All results are averaged over 10 independent repetitions.

\subsection{Results}

The results on MNIST dataset are given in Figure~\ref{fig:mnist}. The comparisons are similar for both $l_1$ and $l_2$:

\begin{itemize}
    \item From the left column, the initial centers found by HST has lower cost than $k$-median++ and random initialization, for both non-DP and DP setting, and for both balanced and imbalanced demand set $D$. This confirms that the proposed HST initialization is more powerful than $k$-median++ in finding good initial centers.

    \item From the right column, we also observe lower final cost of HST followed by local search in DP clustering. In the non-DP case, the final cost curves overlap, which means that despite HST offers better initial centers, local search can always find a good solution eventually.

    \item The advantage of DP-HST, in terms of both the initial and the final cost, is more significant when $D$ is an imbalanced subset of $U$. As mentioned before, this is because our DP-HST initialization approach also privately incorporates the information of $D$.
\end{itemize}

The results on graphs are reported in Figure~\ref{fig:graph}, which
give similar conclusions. In all cases, our proposed HST scheme finds better initial centers with smaller cost than $k$-median++. Moreover, HST again considerably outperforms $k$-median++ in the private and imbalanced $D$ setting, for both $r=100$ (highly separable) and $r=1$ (less separable). The advantages of HST over $k$-median++ are especially significant in the harder tasks when $r=1$, i.e., the clusters are nearly mixed up.

% \begin{figure}[h]
% \centering
%     \mbox{\hspace{-0.2in}
%     \includegraphics[width=2.7in]{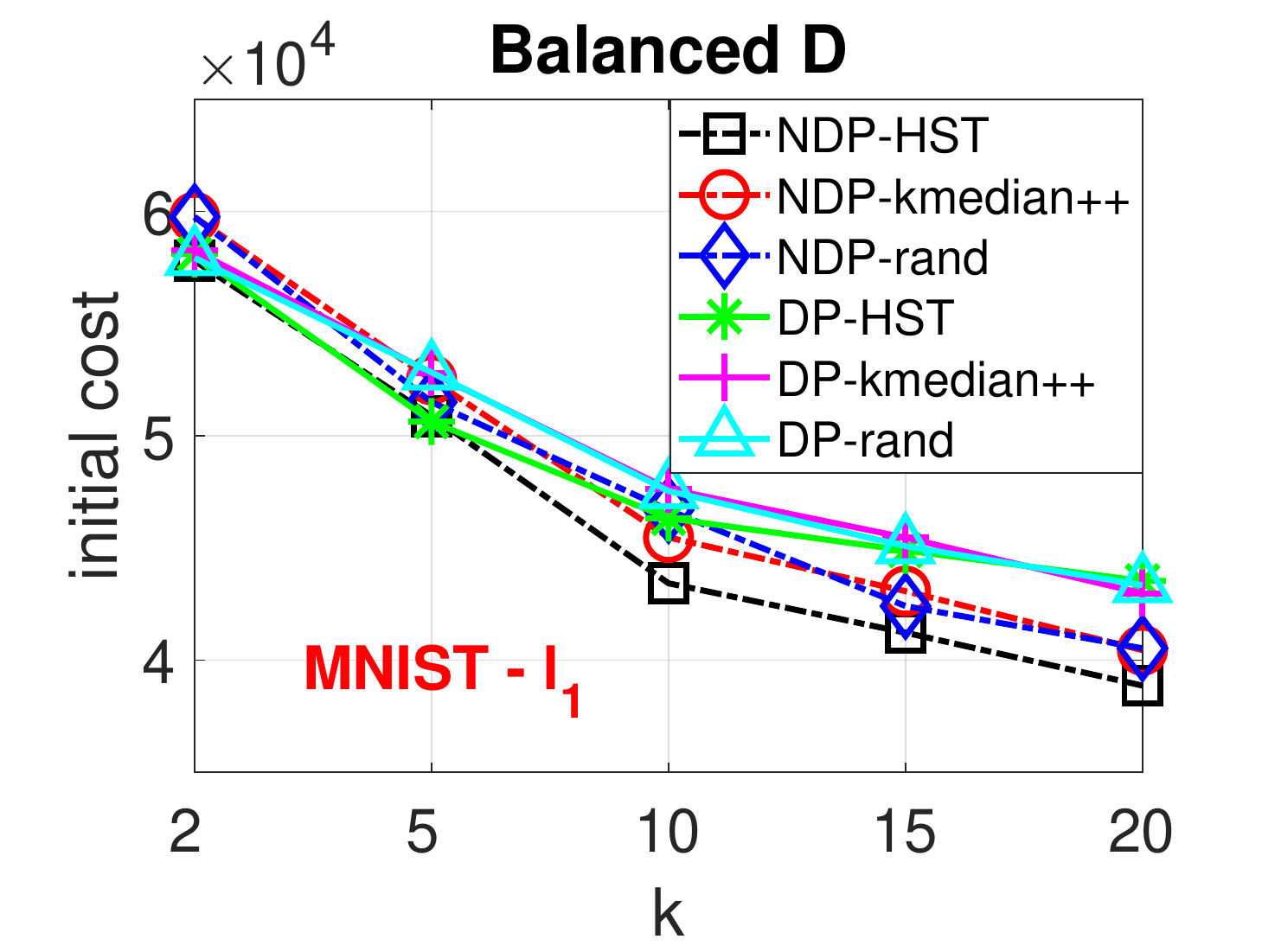}\hspace{-0.12in}
%     \includegraphics[width=2.7in]{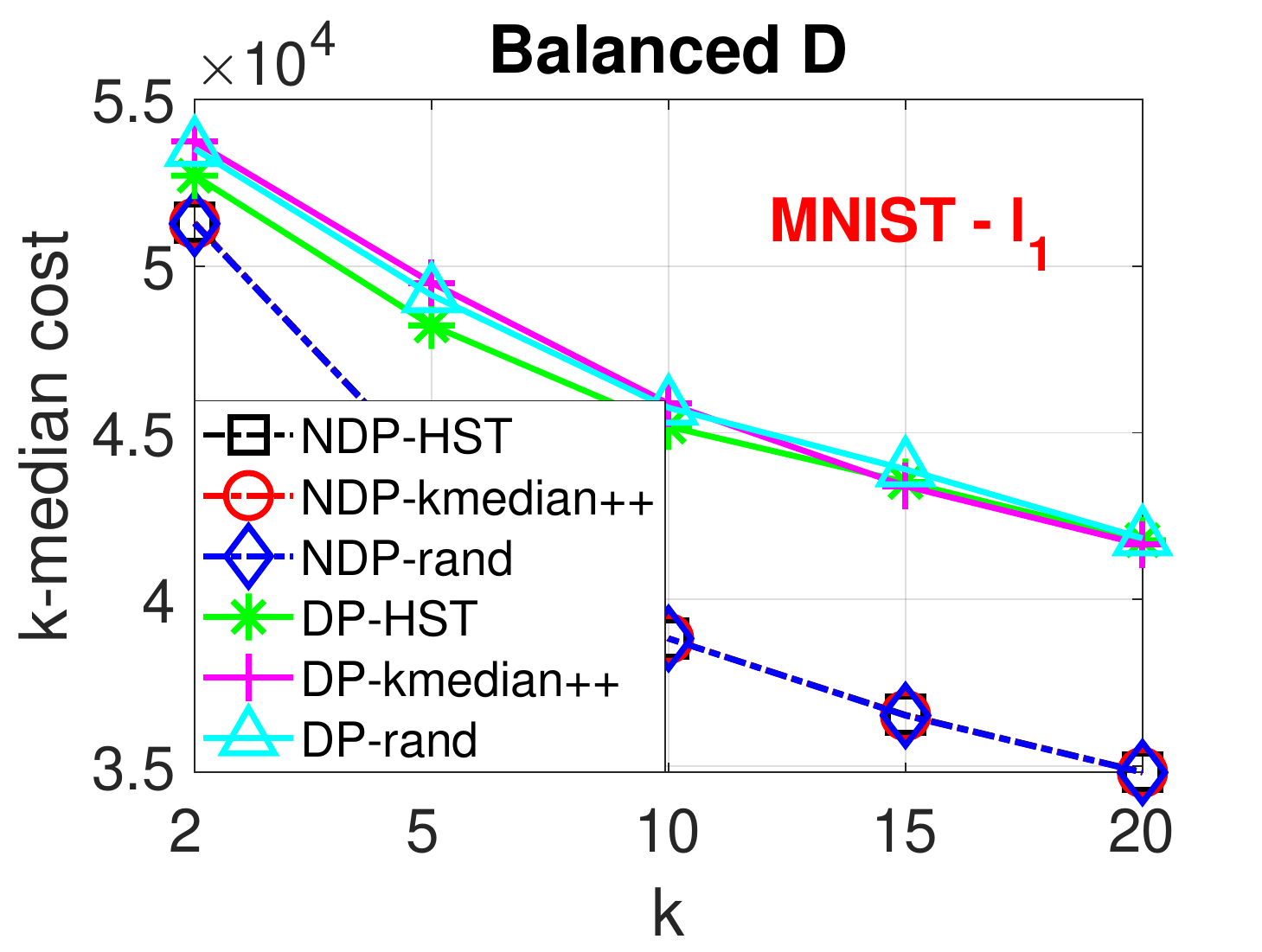}
%     \hspace{-0.12in}
%     \includegraphics[width=2.7in]{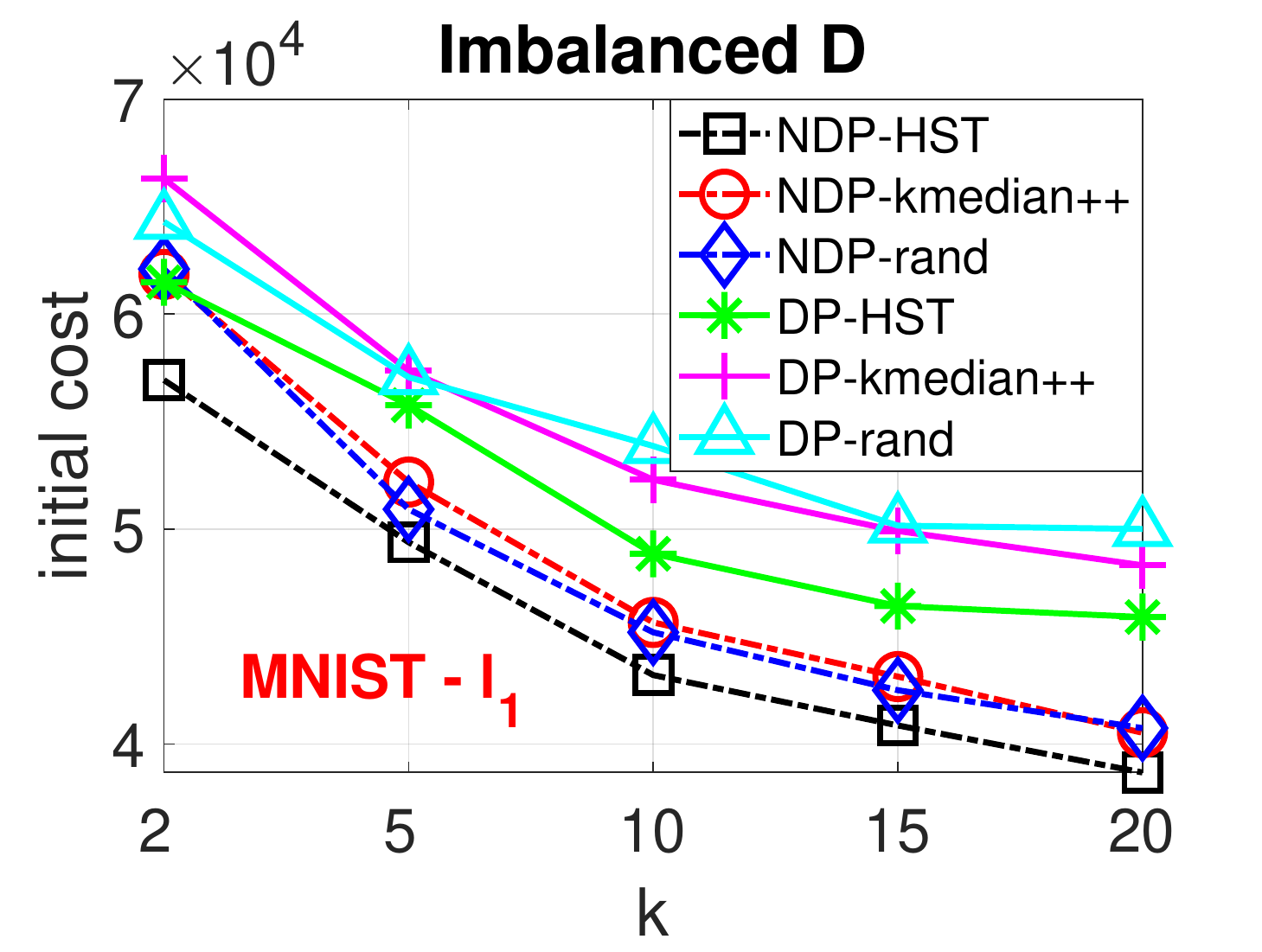}\hspace{-0.12in}
%     \includegraphics[width=2.7in]{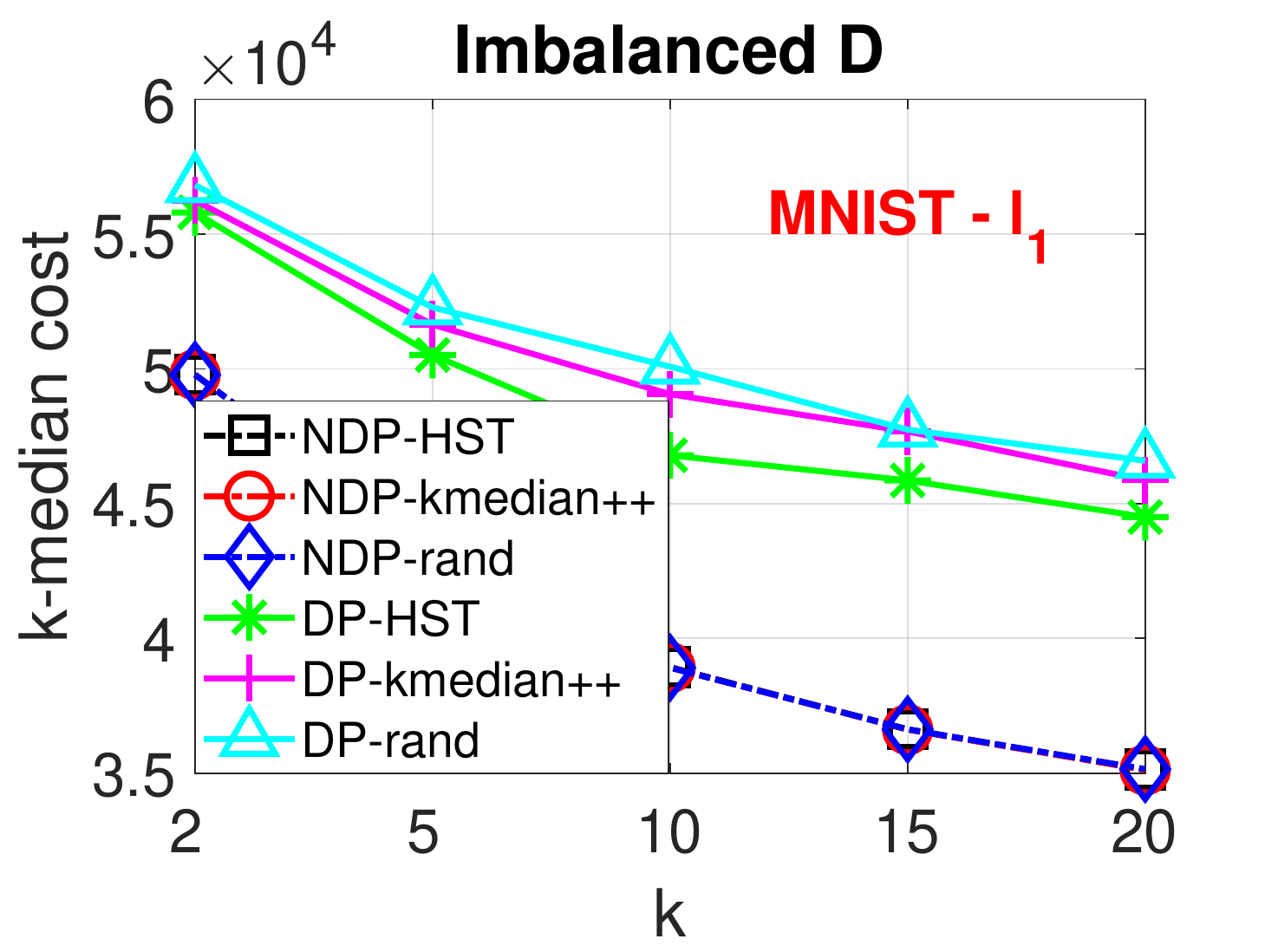}
%     }
%     \mbox{\hspace{-0.2in}
%     \includegraphics[width=2.7in]{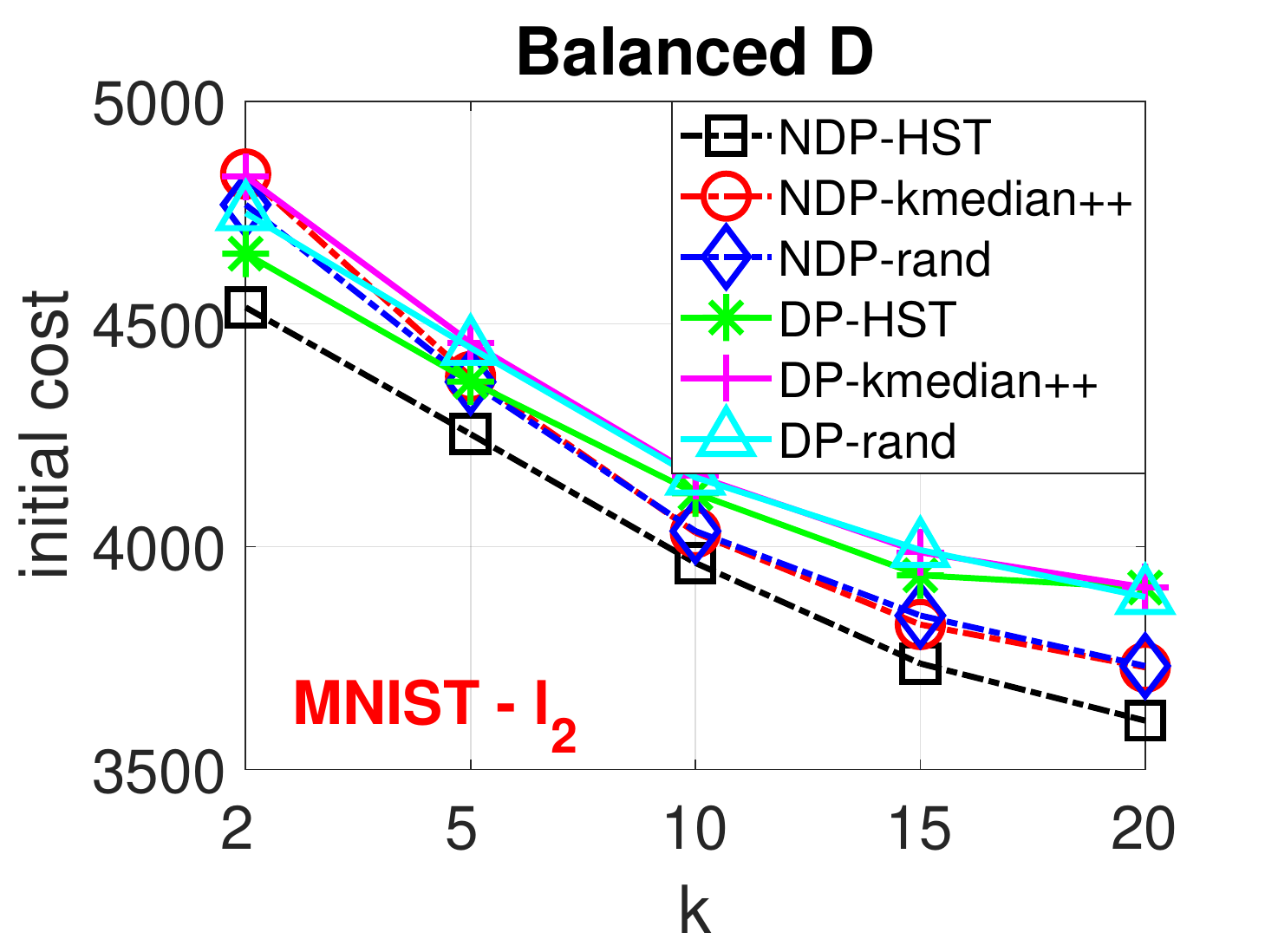}\hspace{-0.12in}
%     \includegraphics[width=2.7in]{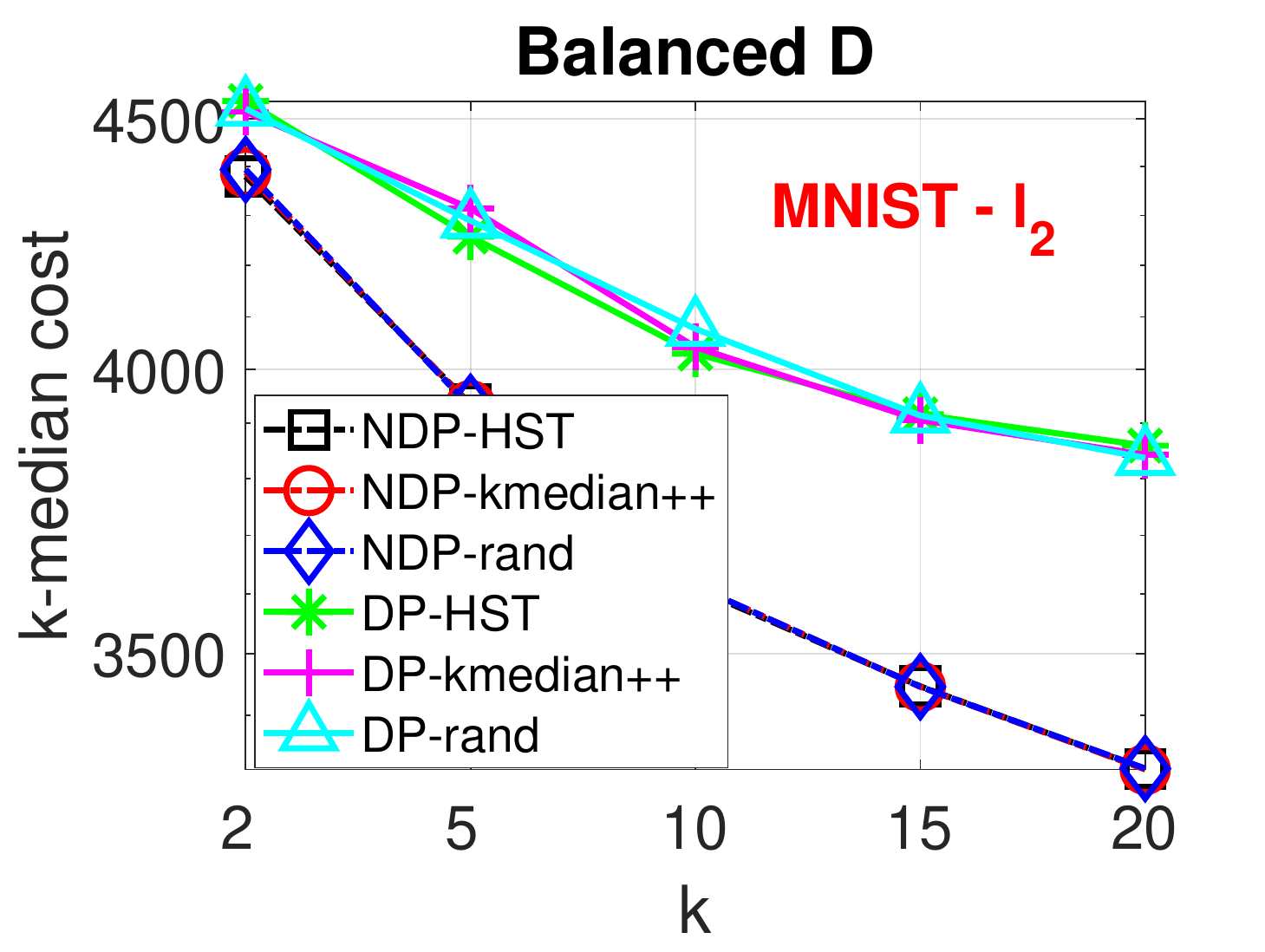}
%     \hspace{-0.12in}
%     \includegraphics[width=2.7in]{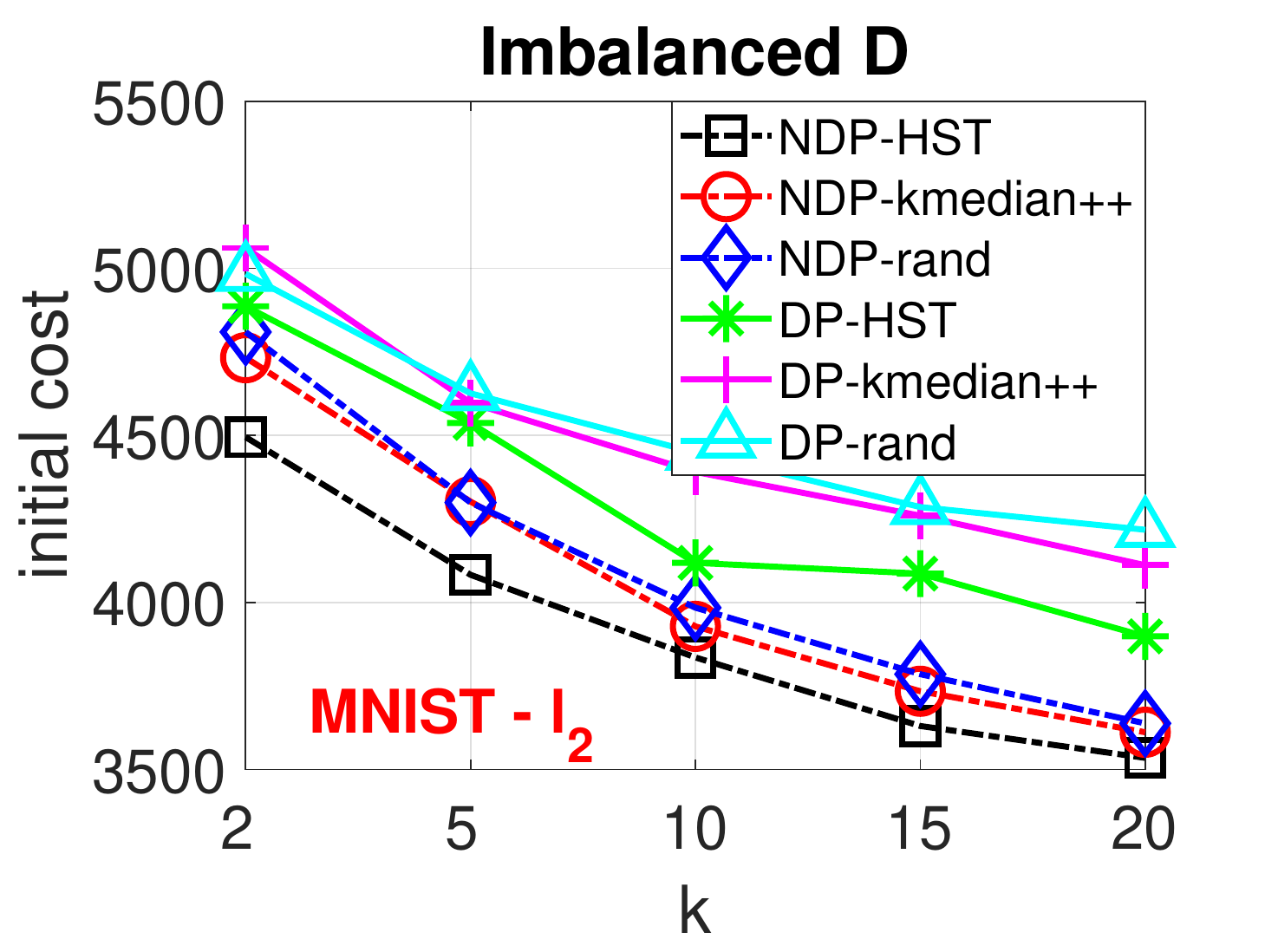}\hspace{-0.12in}
%     \includegraphics[width=2.7in]{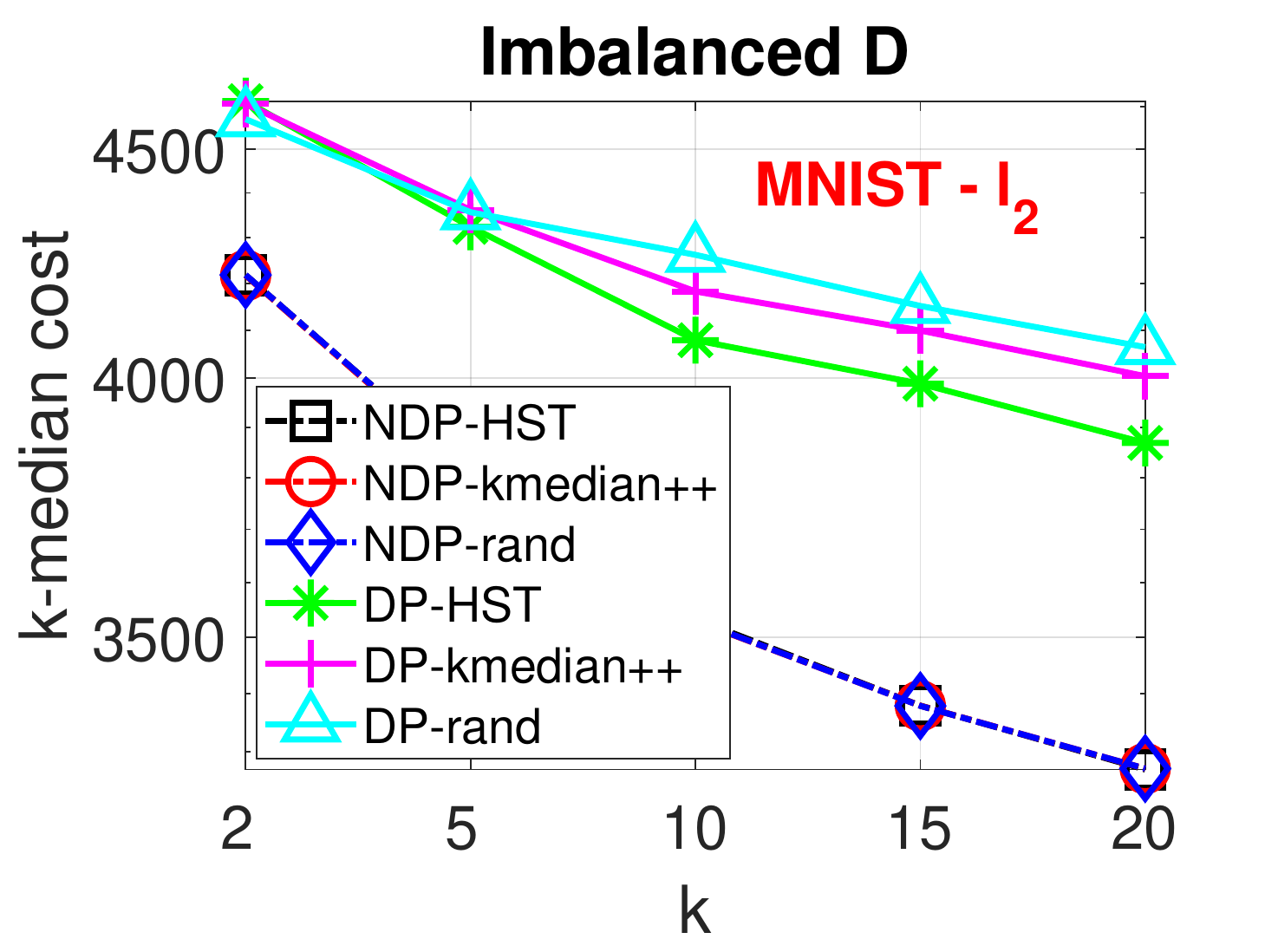}
%     }

%     \vspace{-0.1in}

% 	\caption{$k$-median cost on MNIST dataset. \textbf{1st column:} initial cost. \textbf{2nd column:} final output cost.} 	\label{fig:mnist}

% \end{figure}

\begin{figure}[h]


\centering
    \mbox{
    \includegraphics[width=2.7in]{arxiv-fig/experiment/MNIST_init_bal1_eps1.pdf}\hspace{0.2in}
    \includegraphics[width=2.7in]{arxiv-fig/experiment/MNIST_init_bal1_eps1_l2.pdf}
    }
\mbox{
    \includegraphics[width=2.7in]{arxiv-fig/experiment/MNIST_final_bal1_eps1.pdf}\hspace{0.2in}
    \includegraphics[width=2.7in]{arxiv-fig/experiment/MNIST_final_bal1_eps1_l2.pdf}
}

\mbox{
    \includegraphics[width=2.7in]{arxiv-fig/experiment/MNIST_init_bal0_eps1.pdf}\hspace{0.2in}
    \includegraphics[width=2.7in]{arxiv-fig/experiment/MNIST_init_bal0_eps1_l2.pdf}
}

\mbox{
    \includegraphics[width=2.7in]{arxiv-fig/experiment/MNIST_final_bal0_eps1.pdf}\hspace{0.2in}
    \includegraphics[width=2.7in]{arxiv-fig/experiment/MNIST_final_bal0_eps1_l2.pdf}
    }

    \vspace{-0.1in}

	\caption{Initial and final $k$-median cost on MNIST dataset. \textbf{1st column:} $l_1$ distance. \textbf{2nd column:} $l_2$ distance. } 	\label{fig:mnist}

\end{figure}

\begin{figure}[h]
\centering
    \mbox{
    \includegraphics[width=2.7in]{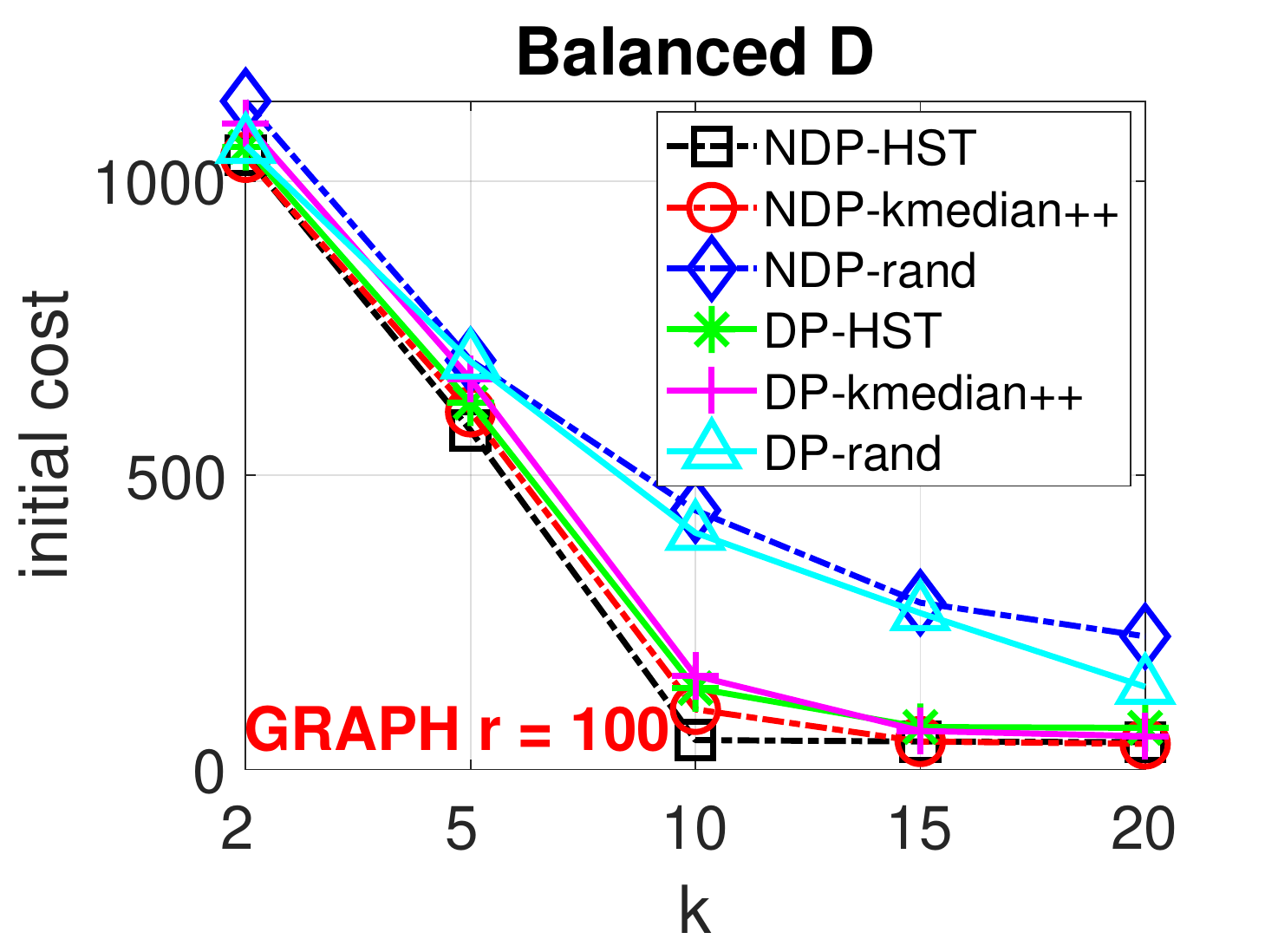}\hspace{0.2in}
    \includegraphics[width=2.7in]{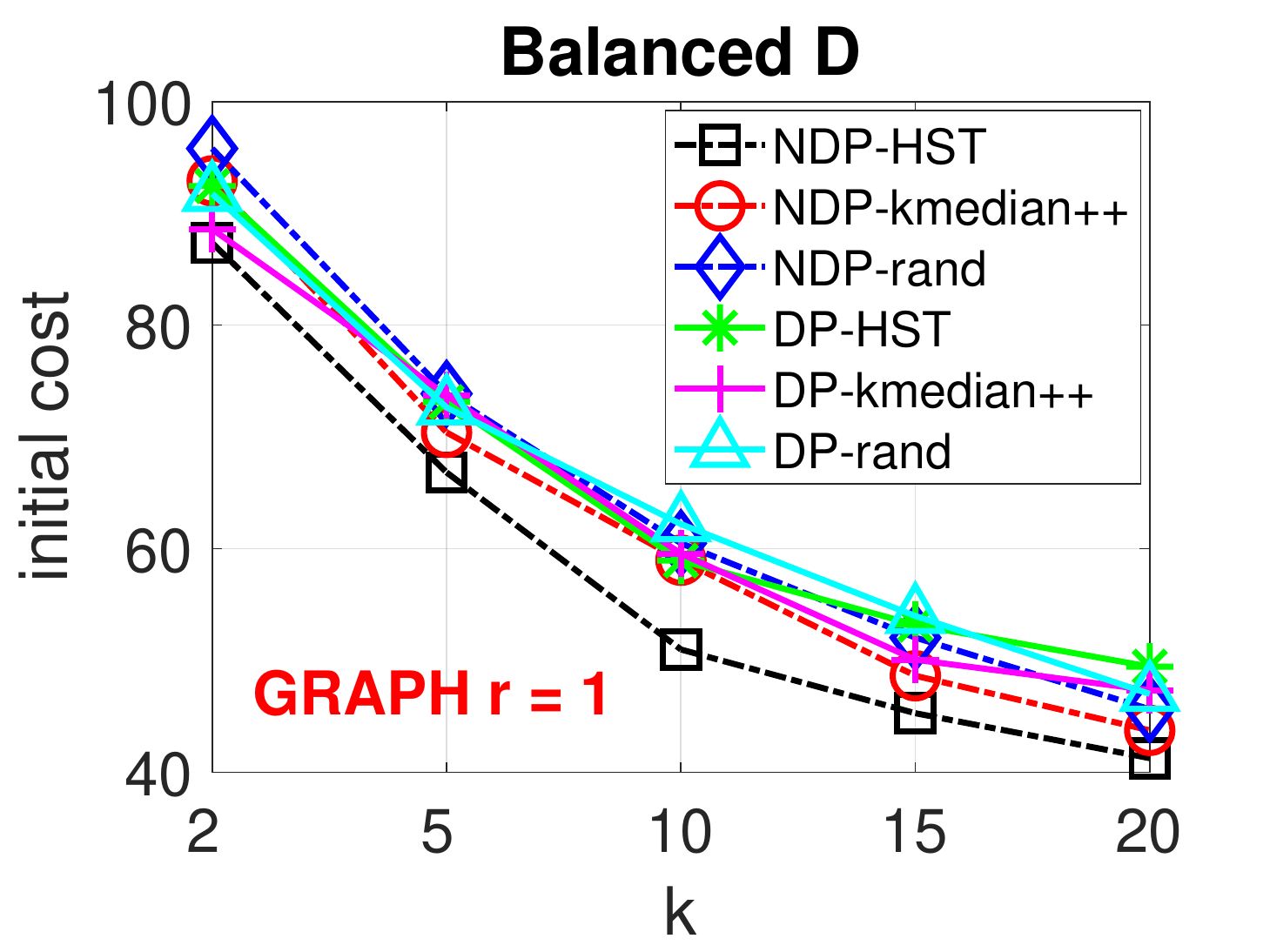}
    }
\mbox{
    \includegraphics[width=2.7in]{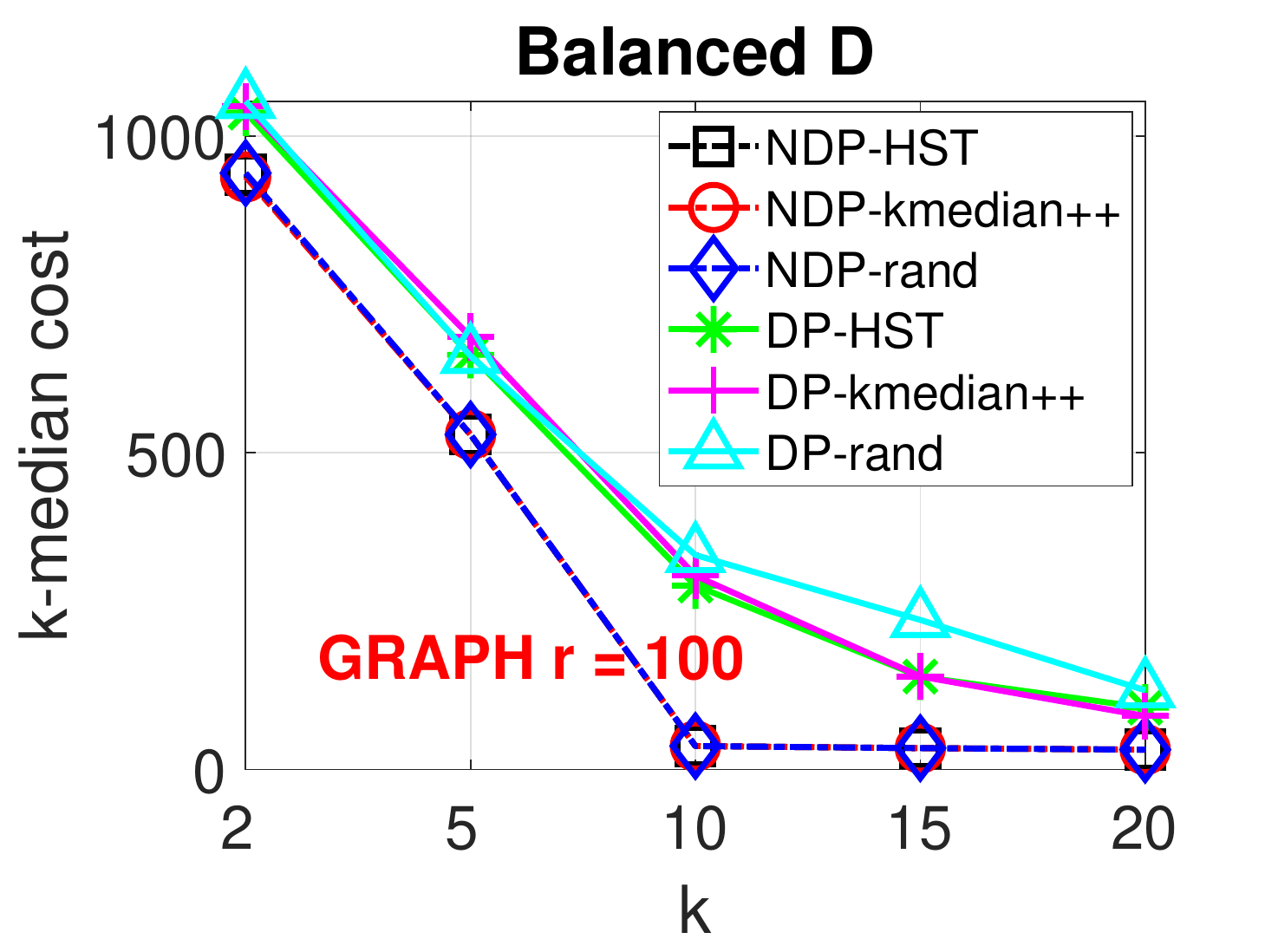}\hspace{0.2in}
    \includegraphics[width=2.7in]{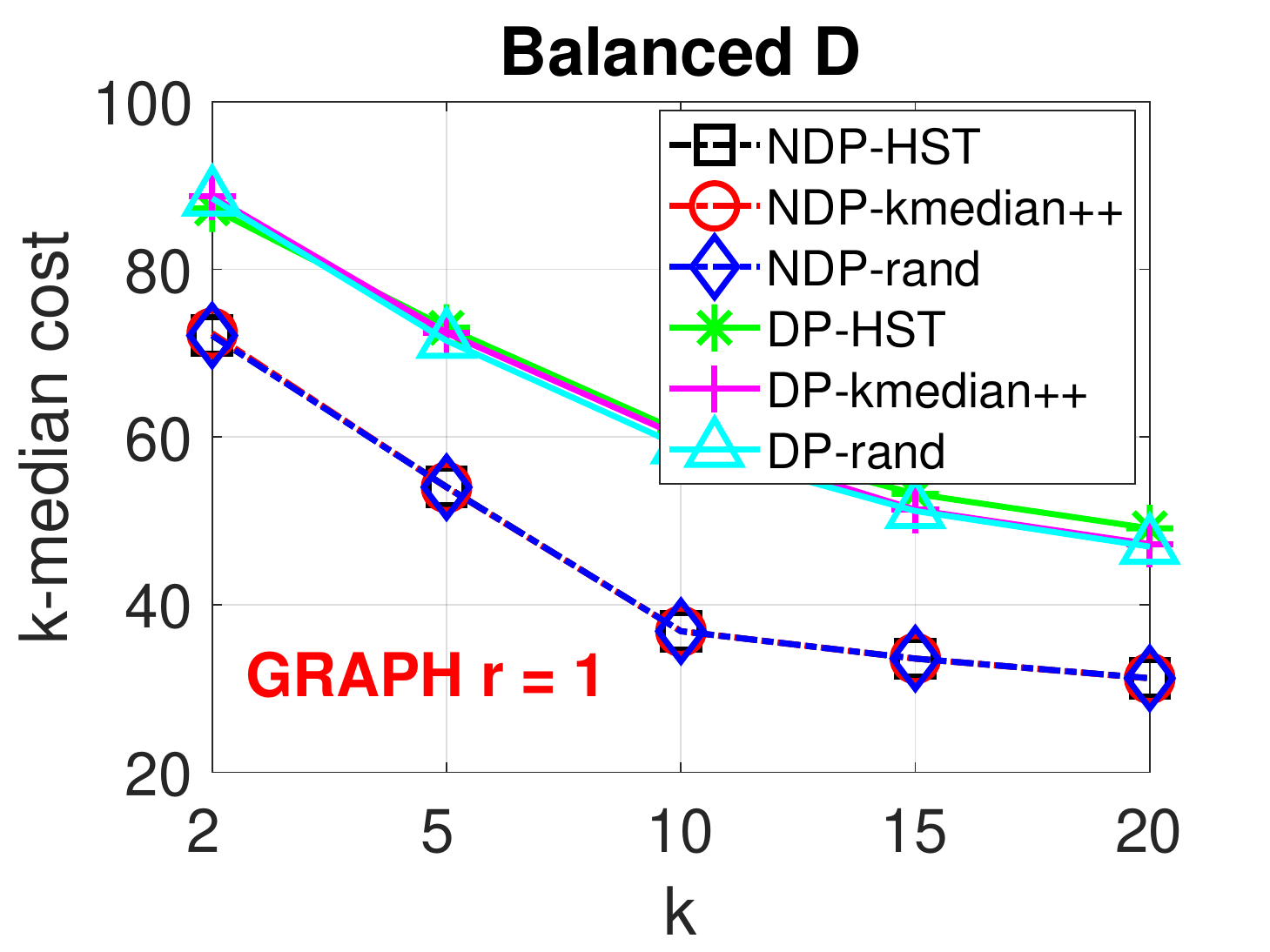}
    }
\mbox{
    \includegraphics[width=2.7in]{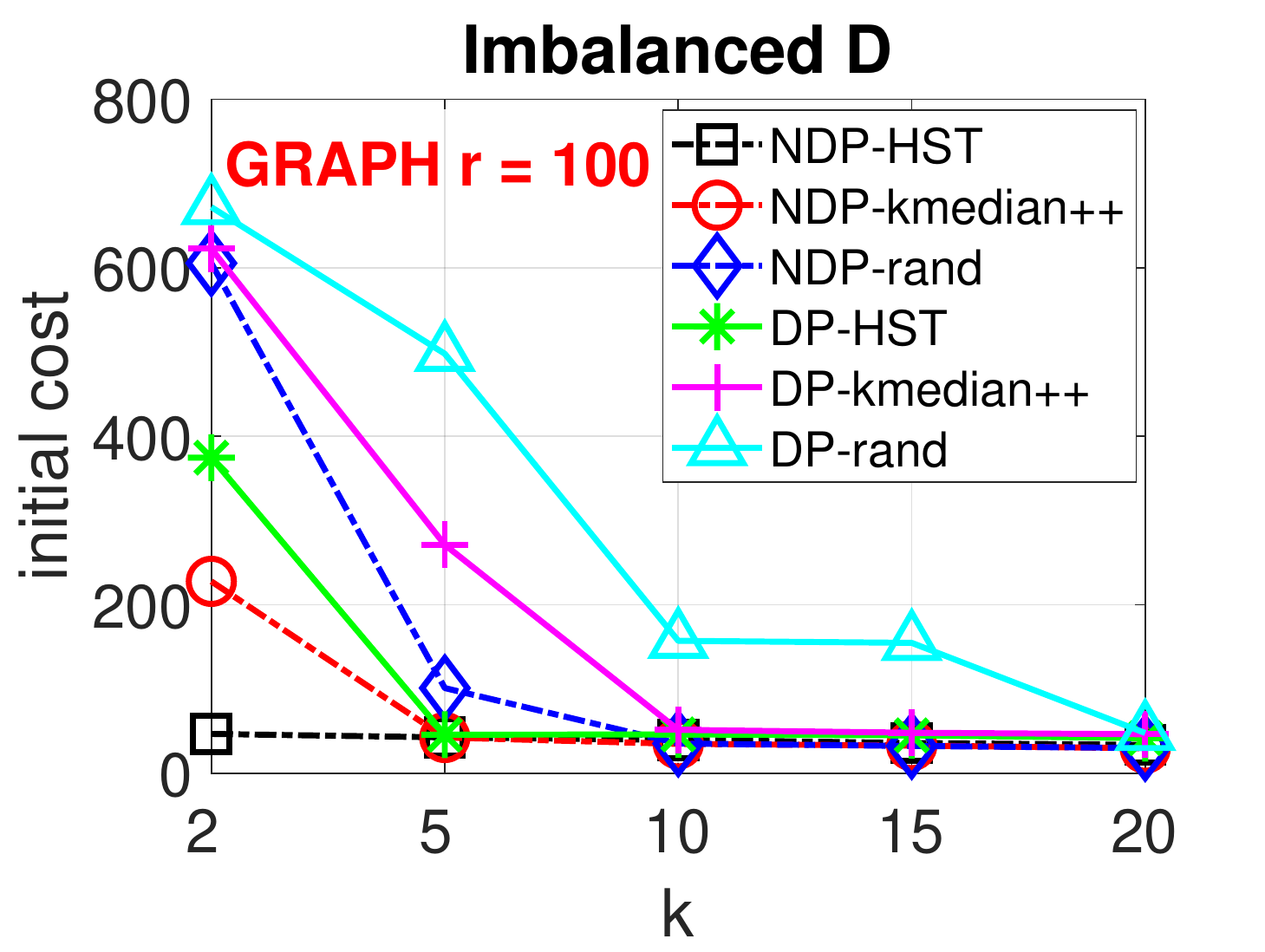}\hspace{0.2in}
    \includegraphics[width=2.7in]{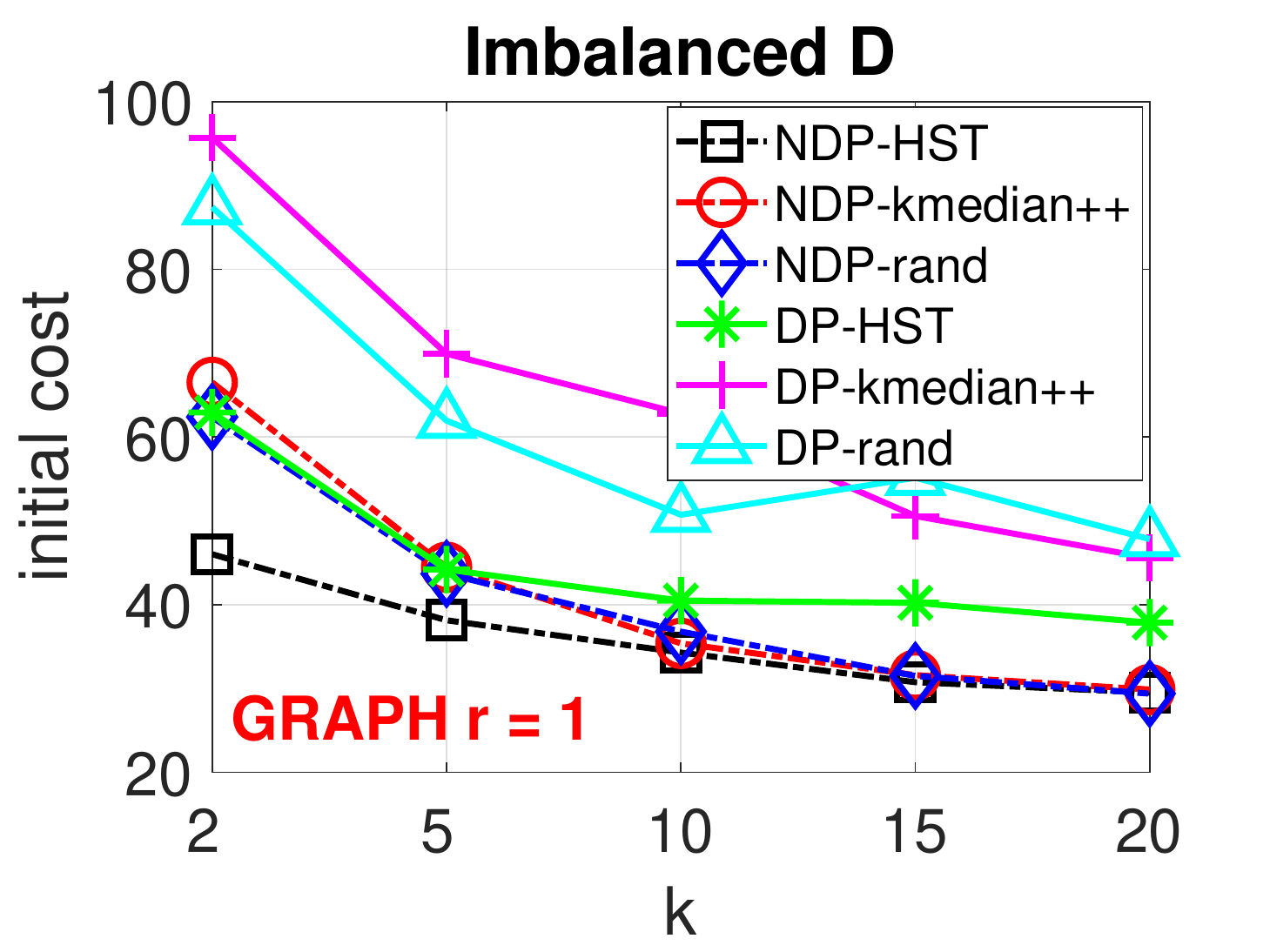}}
\mbox{
    \includegraphics[width=2.7in]{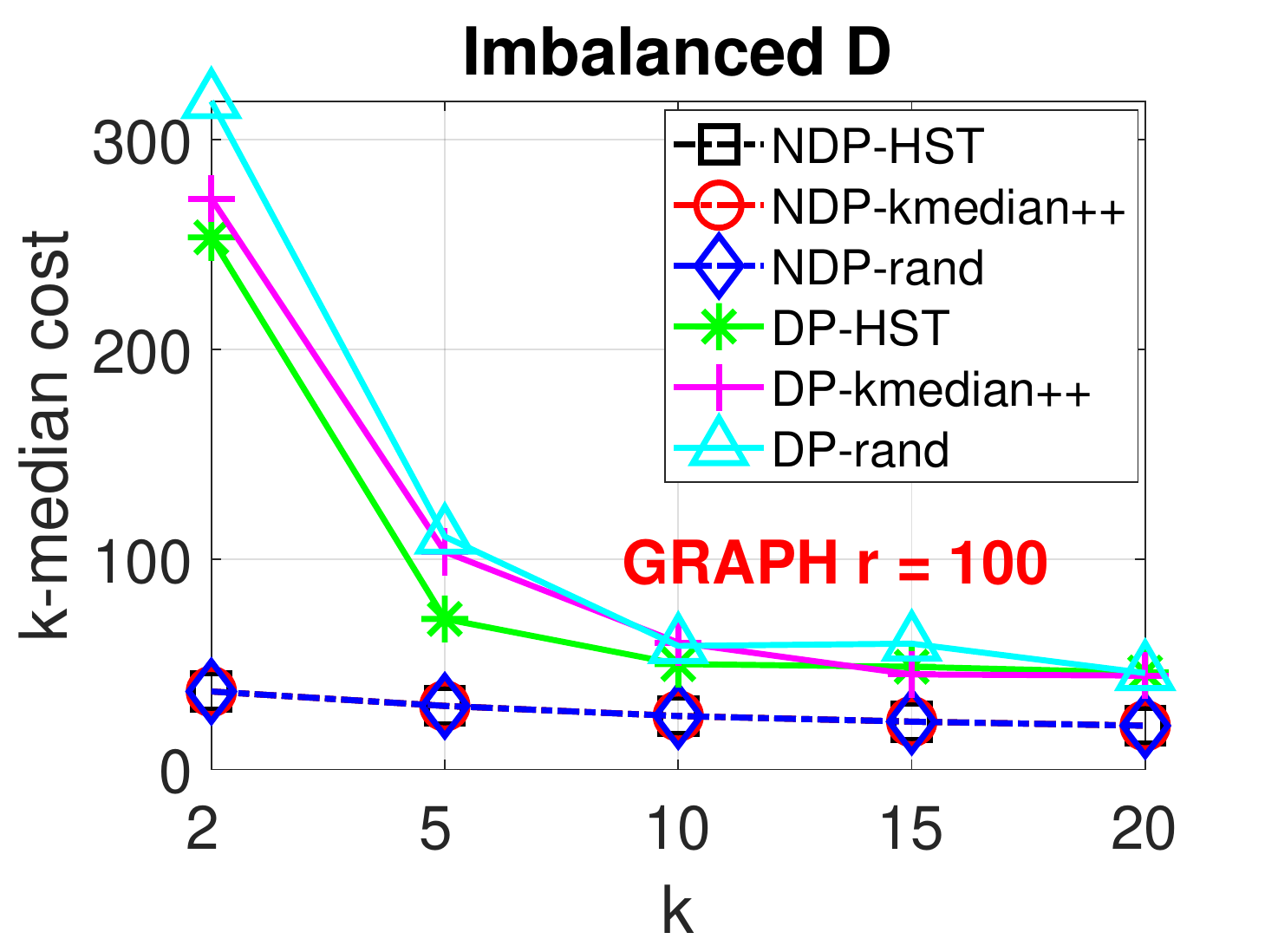}\hspace{0.2in}
    \includegraphics[width=2.7in]{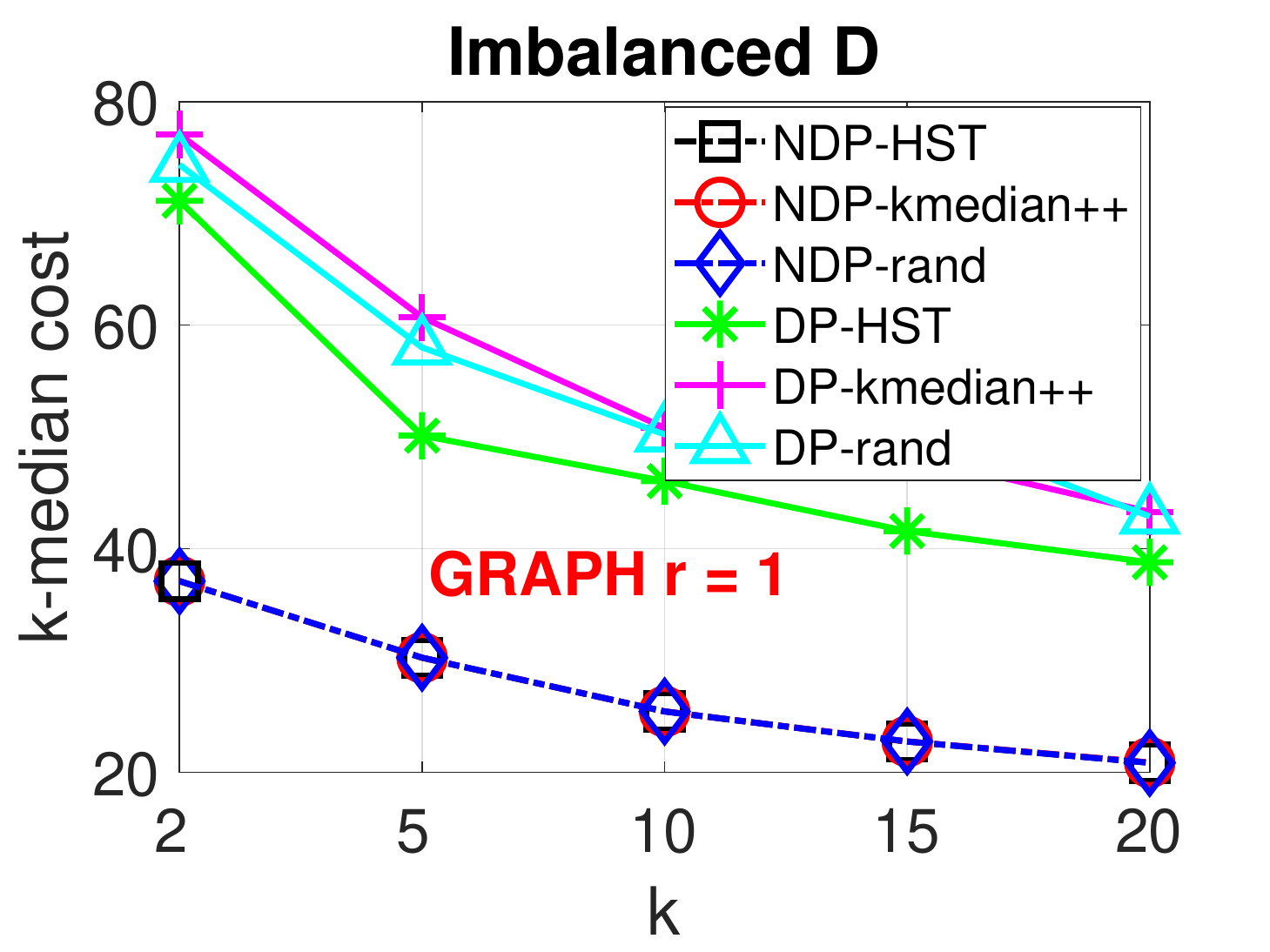}
    }

    \vspace{-0.1in}

	\caption{Initial and final $k$-median cost on graph dataset. \textbf{1st column:} $l_1$ distance. \textbf{2nd column:} $l_2$ distance. }
	\label{fig:graph}
\end{figure}

\newpage\clearpage

\subsection{Improved Iteration Cost of DP-HST} \label{sec:iteration-cost}

In Theorem~\ref{thm:DP-HST}, we show that under differential privacy constraints, the proposed DP-HST (Algorithm~\ref{alg:DP-HST}) improves both the approximation error and the number of iterations required to find a good solution of classical DP local search~\citep{DBLP:conf/soda/GuptaLMRT10}. In this section, we provide some numerical results to justify the theory.

First, we need to properly measure the iteration cost of DP local search. This is because, unlike the non-private clustering, the $k$-median cost after each iteration in DP local search is not decreasing monotonically, due to the probabilistic exponential mechanism. To this end, for the cost sequence with length $T=20$, we compute its moving average sequence with window size $5$. Attaining the minimal value of the moving average indicates that the algorithm has found a ``local optimum'', i.e., it has reached a ``neighborhood'' of solutions with small clustering cost. Thus, we use the number of iterations to reach such local optimum as the measure of iteration cost. The results are provided in Figure~\ref{fig:iter cost}. We see that on all the tasks (MNIST with $l_1$ and $l_2$ distance, and graph dataset with $r=1$ and $r=100$), DP-HST has significantly smaller iterations cost. In Figure~\ref{fig:min cost}, we further report the $k$-median cost of the best solution in $T$ iterations found by each DP algorithm. We see that DP-HST again provide the smallest cost. This additional set of experiments again validates the claims of Theorem~\ref{thm:DP-HST}, that DP-HST is able to found better initial centers in fewer iterations.

\begin{figure}[h]

\vspace{0.2in}

\centering
    \mbox{
    \includegraphics[width=2.7in]{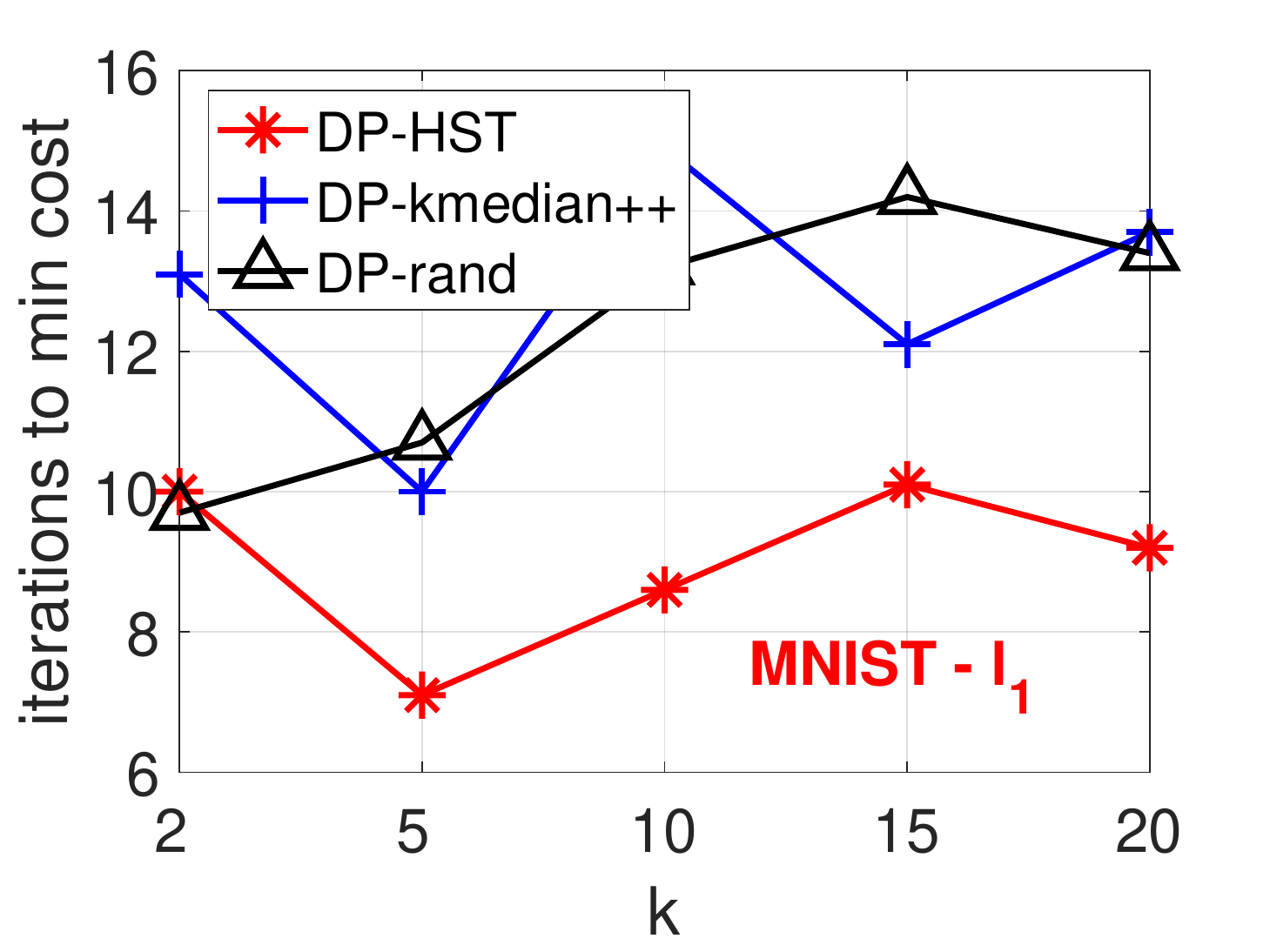}
    \includegraphics[width=2.7in]{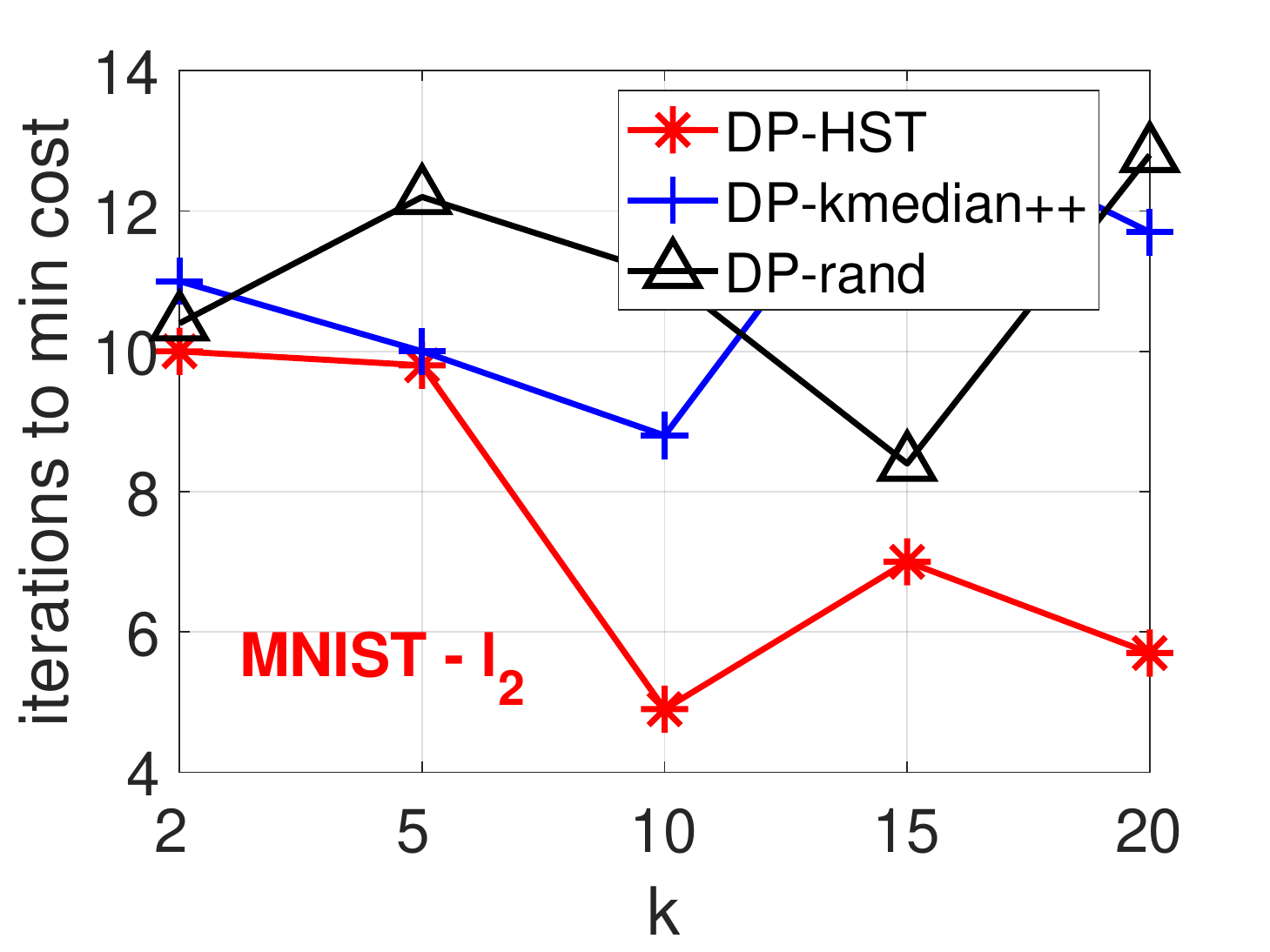}
    }
    \mbox{
    \includegraphics[width=2.7in]{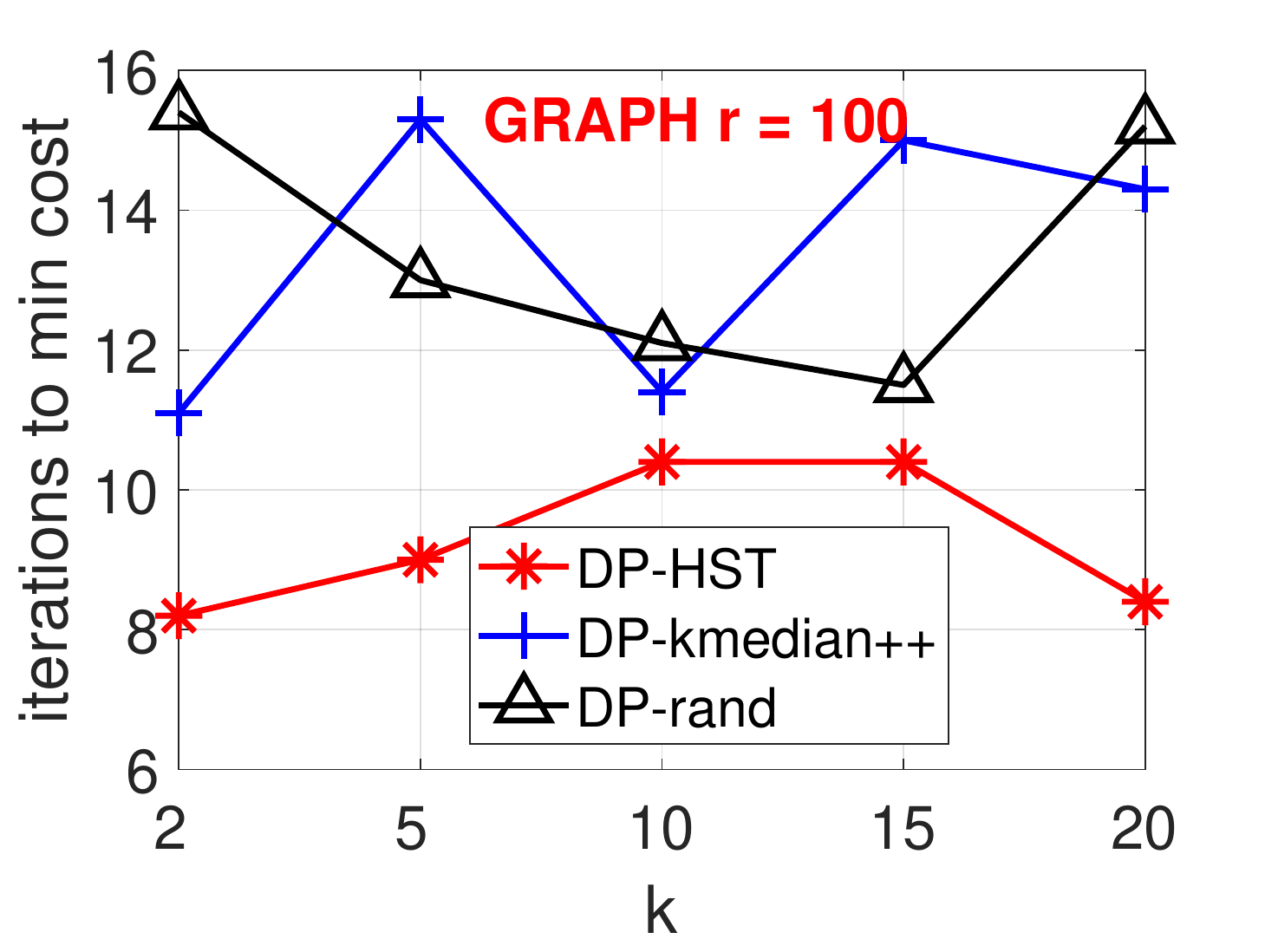}
    \includegraphics[width=2.7in]{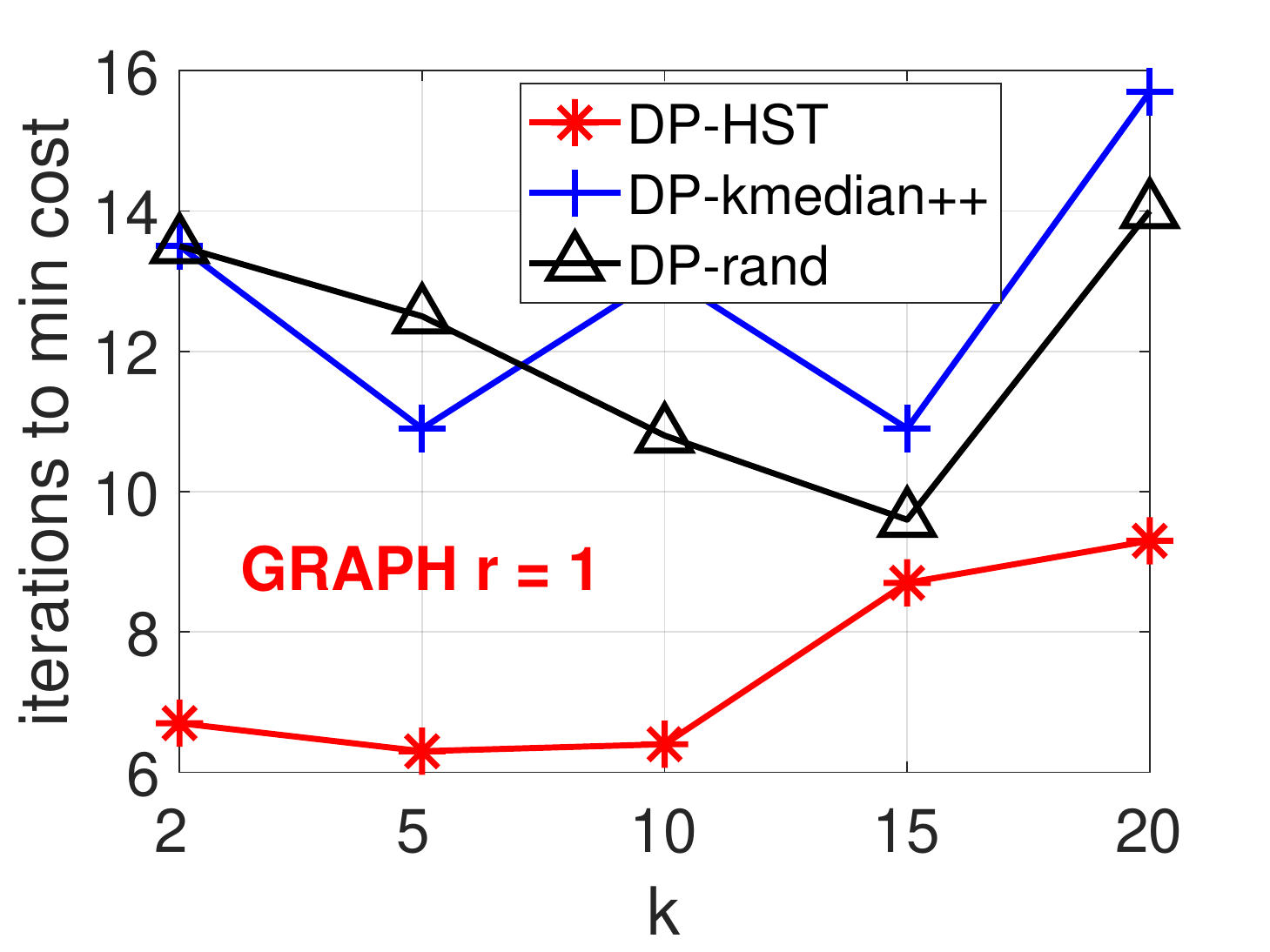}
    }
    \vspace{-0.1in}
	\caption{Iteration cost to reach a locally optimal solution, on MNIST and graph datasets with different $k$. The demand set is an imbalanced subset of the universe.}
	\label{fig:iter cost}
\end{figure}

\begin{figure}[t]
\centering
    \mbox{
    \includegraphics[width=2.7in]{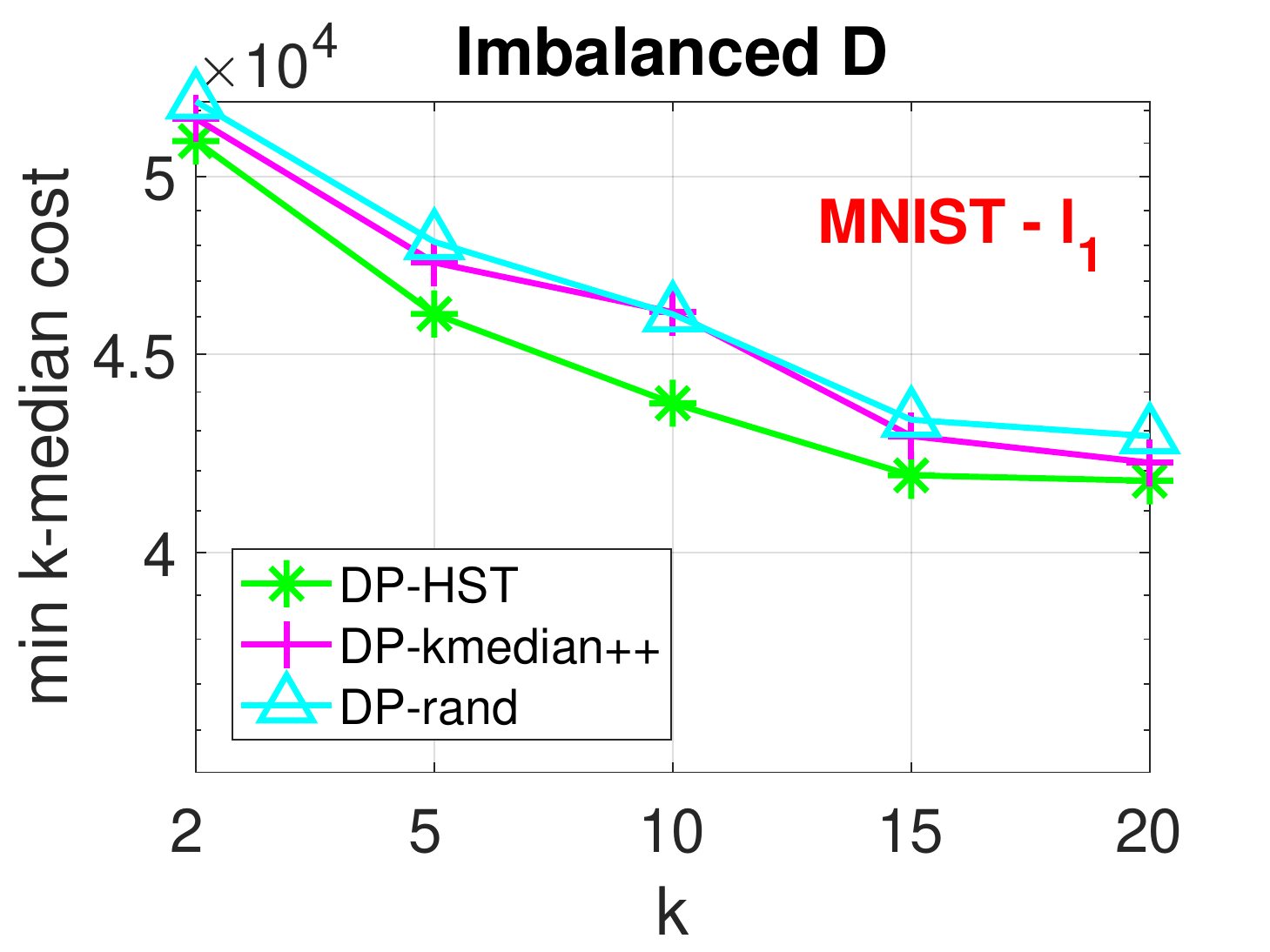}
    \includegraphics[width=2.7in]{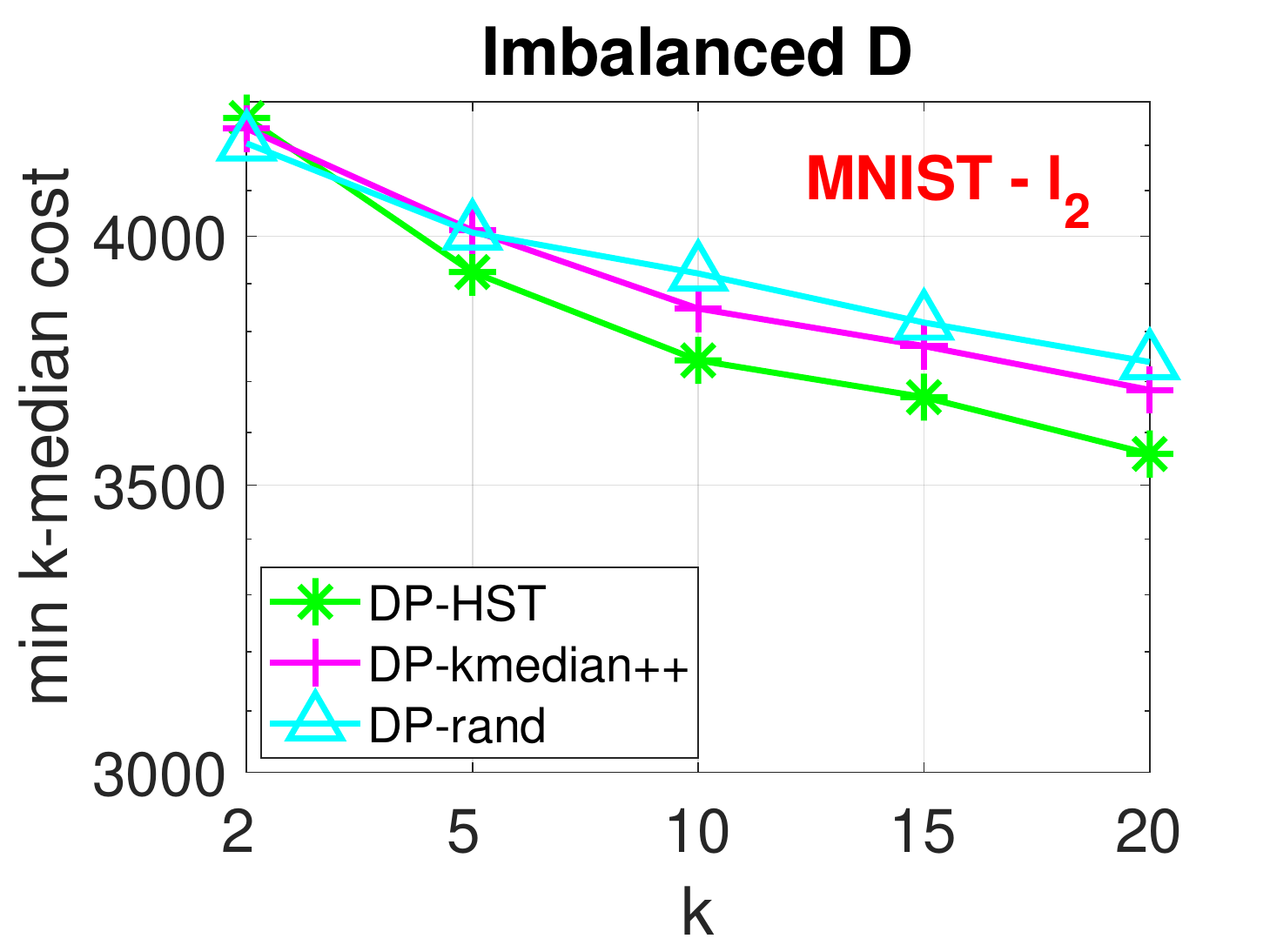}
    }
    \vspace{-0.1in}
	\caption{The $k$-median cost of the best solution found by each differentially private algorithm. The demand set is an imbalanced subset of the universe. Same comparison holds on graph data.}
	\label{fig:min cost}
\end{figure}

\newpage

\subsection{Running Time Comparison with $k$-median++}

In Proposition~\ref{prop:time}, we show that our HST initialization algorithm admits $O(dn\log n)$ complexity when considering the Euclidean space. With a smart implementation of Algorithm~\ref{alg:k-median++} where each data
point tracks its distance to the current closest candidate
center in $C$, $k$-median++ has $O(dnk)$ running time. Therefore, the running time of our algorithm is in general comparable to $k
$-median++. Our method would run faster if $k=\Omega (\log n)$.

\begin{figure}[h]
\centering
    \mbox{
    \includegraphics[width=2.7in]{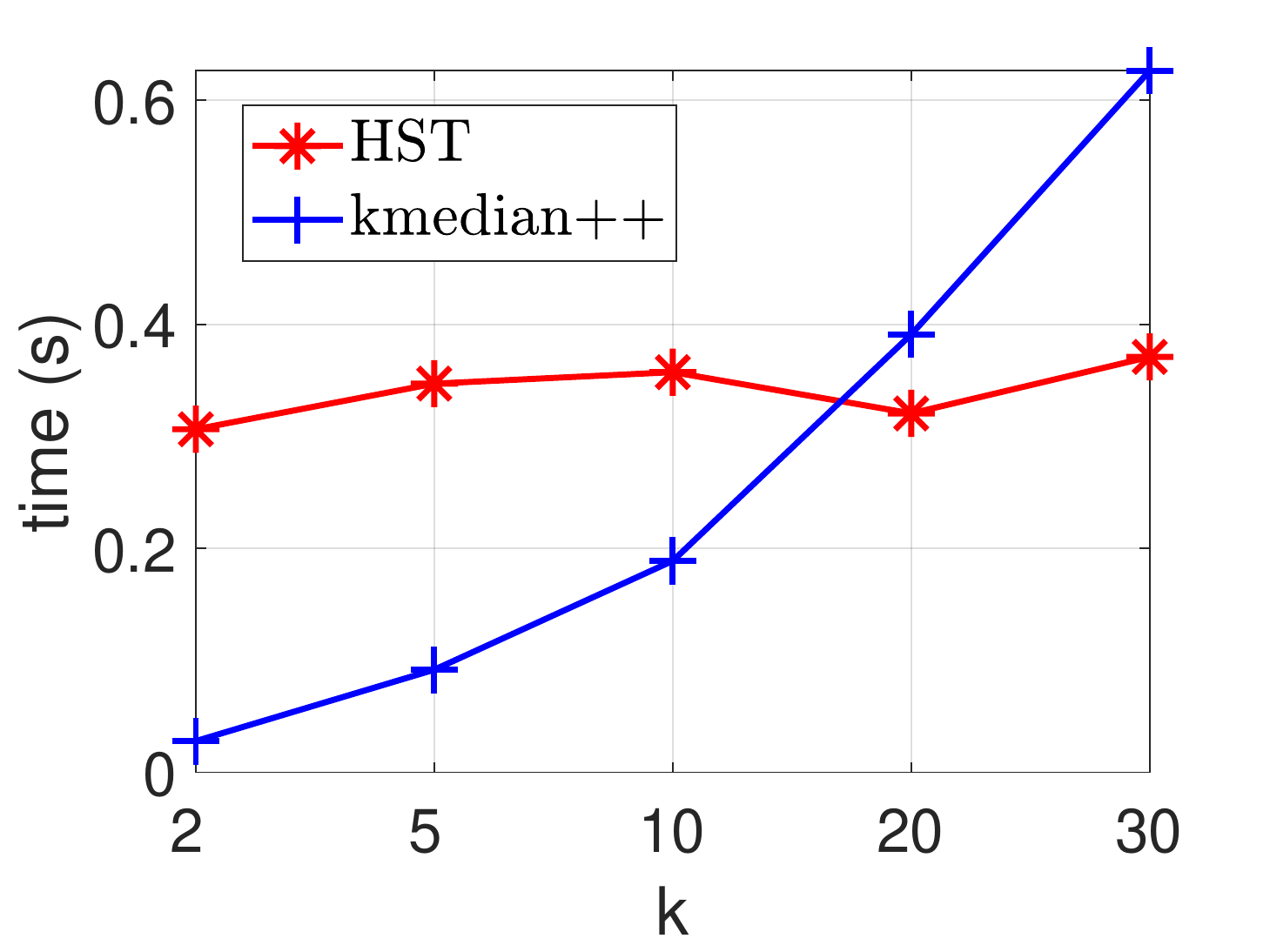}\hspace{0.2in}
    \includegraphics[width=2.7in]{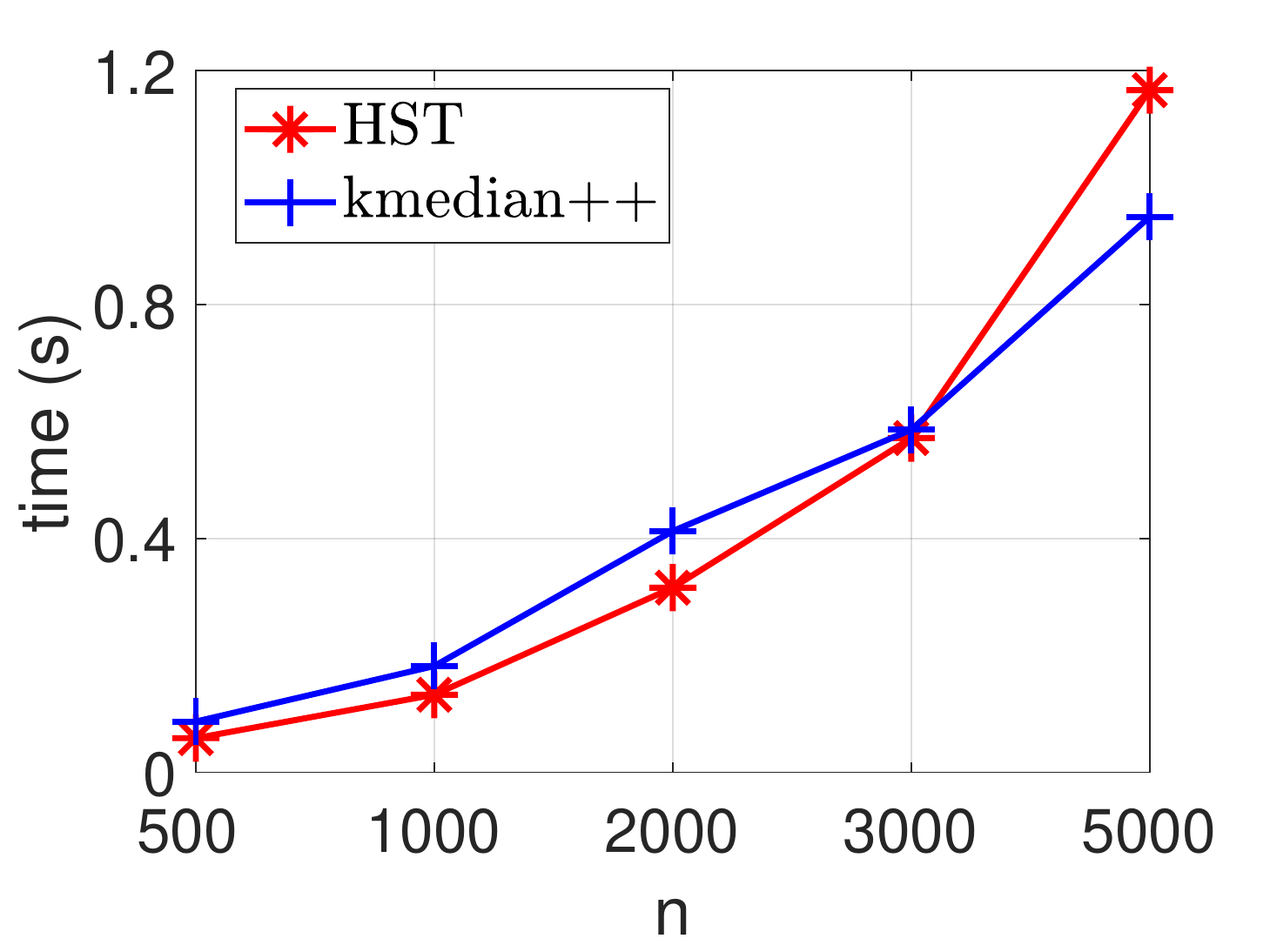}
    }
	\caption{Empirical time comparison of HST initialization v.s. $k$-median++, on MNIST dataset with $l_2$ distance. \textbf{Left:} The running time against $k$, on a subset of $n=2000$ data points. \textbf{Right:} The running time against $n$, with $k=20$ centers.}
	\label{fig:compare time}
\end{figure}

In Figure~\ref{fig:compare time}, we plot the empirical running time of HST initialization against $k$-median++, on MNIST dataset with $l_2$ distance (similar comparison holds for $l_1$). From the left subfigure, we see that $k$-median++ becomes slower with increasing $k$, and our method is more efficient when $k>20$. In the right panel, we observe that the running time of both methods increases with larger sample size $n$. Our HST algorithm has a slightly faster increasing rate, which is predicted by the complexity comparison ($n\log n$ v.s. $n$). However, this difference in $\log n$ factor would not be too significant unless the sample size is extremely large. Overall, our results suggest that in general, the proposed HST initialization would have similar efficiency as $k$-median++ in common practical scenarios.

\section{Conclusion}

In this paper, we propose a new initialization framework for the metric $k$-median problem in general (discrete) metric space. Our approach is called HST initialization, which leverages tools from metric embedding theory. Our novel tree search approach has comparable efficiency and approximation error to the popular $k$-median++ initialization. Moreover, we propose the differentially private (DP) HST initialization algorithm, which adapts to the private demand point set, leading to better clustering performance. When combined with subsequent DP local search heuristic, our algorithm is able to improve the additive error of DP local search and our result is close to the theoretical lower bound within a small factor. Experiments with Euclidean metrics and graph metrics verify the effectiveness of our method, which improves the cost of both the initial centers and the final $k$-median output.

\bibliography{refs}
\bibliographystyle{plainnat}
	
\newpage
\clearpage

% {\hspace{1.2in}\textbf{\huge Supplemental Materials}
% \vspace{0.4in}}

\appendix

\section{Proofs}  \label{sec:append proof}

The following composition result of differential privacy will be used in our proof.
\begin{theorem}
[Composition Theorem~\citep{DBLP:conf/icalp/Dwork06}] \label{theo:composition}
If Algorithms $\mathbbm{A}_1,\mathbbm{A}_2,...,\mathbbm{A}_m$ are $\epsilon_1,\epsilon_2,...,\epsilon_m$ differentially private respectively, then the union
$(\mathbbm{A}_1(D),\mathbbm{A}_2(D),...,\mathbbm{A}_m(D))$ is $\sum_{i=1}^m \epsilon_i$-DP.
\end{theorem}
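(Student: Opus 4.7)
The plan is to proceed by induction on $m$, the number of algorithms being composed, using the implicit assumption that the mechanisms use independent internal randomness as the structural backbone. The base case $m=1$ is immediate from the definition of $\epsilon_1$-differential privacy.

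For the inductive step, suppose the claim holds for $m-1$. Fix a pair of adjacent datasets $D, D'$ (differing in one element) and any measurable event $O \subseteq \prod_{i=1}^m \mathrm{Range}(\mathbbm{A}_i)$. For each $o_m$, define the slice $O_{o_m} = \{(o_1,\ldots,o_{m-1}) : (o_1,\ldots,o_{m-1},o_m) \in O\}$. Because the coins of $\mathbbm{A}_1,\ldots,\mathbbm{A}_{m-1}$ are independent of those of $\mathbbm{A}_m$, the joint law factorizes and
\[
\Pr[(\mathbbm{A}_1(D),\ldots,\mathbbm{A}_m(D)) \in O] = \int \Pr[(\mathbbm{A}_1(D),\ldots,\mathbbm{A}_{m-1}(D)) \in O_{o_m}] \, d\mu_D(o_m),
\]
where $\mu_D$ denotes the output distribution of $\mathbbm{A}_m(D)$. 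The inductive hypothesis bounds the inner probability by an $e^{\sum_{i=1}^{m-1}\epsilon_i}$ multiple of the corresponding probability under $D'$, and the $\epsilon_m$-DP of $\mathbbm{A}_m$ lets us replace $d\mu_D$ with $e^{\epsilon_m}d\mu_{D'}$. Collecting the multiplicative factors and re-folding the integral (Fubini) yields $e^{\sum_{i=1}^m \epsilon_i} \Pr[(\mathbbm{A}_1(D'),\ldots,\mathbbm{A}_m(D')) \in O]$, which is the desired inequality.

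The main obstacle is the pointwise application of the $\epsilon_m$-DP inequality inside the integral sign: this step is trivial when the range is discrete, but in the general continuous setting it requires passing through Radon--Nikodym derivatives, using the fact that the DP bound holds uniformly over every measurable subset of $\mathrm{Range}(\mathbbm{A}_m)$. A cleaner alternative that suffices for all invocations of this theorem in the present paper (where the relevant ranges are finite sets of centers together with discretizable Laplace noise) is a direct product-sum argument: by independence,
\[
\Pr[(\mathbbm{A}_1(D),\ldots,\mathbbm{A}_m(D))\in O] = \sum_{(o_1,\ldots,o_m)\in O}\prod_{i=1}^m \Pr[\mathbbm{A}_i(D)=o_i] \leq \prod_{i=1}^m e^{\epsilon_i}\sum_{(o_1,\ldots,o_m)\in O}\prod_{i=1}^m \Pr[\mathbbm{A}_i(D')=o_i],
\]
and the right-hand side equals $e^{\sum_i \epsilon_i}\Pr[(\mathbbm{A}_1(D'),\ldots,\mathbbm{A}_m(D'))\in O]$, completing the proof without any measure-theoretic machinery.
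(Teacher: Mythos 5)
Your proof is correct. Note, however, that the paper does not prove this statement at all: it is stated as a known result imported from the differential privacy literature (cited to Dwork), so there is no in-paper argument to compare against. Your two routes --- the inductive slicing argument with the measure-theoretic caveat, and the direct product-sum bound in the discrete case --- are both standard and sound; the discrete version correctly applies the $\epsilon_i$-DP guarantee to singleton events $\{o_i\}$ and uses independence to factorize the joint law, which is exactly the hypothesis under which the non-adaptive composition theorem is usually stated.

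One small caution worth recording: the statement (and your proof) covers \emph{non-adaptive} composition, where each $\mathbbm{A}_i$ is run on $D$ with independent coins and no dependence on the other outputs. Some of the paper's invocations --- notably the DP local search of Algorithm~\ref{alg:DP-HST}, where each swap's exponential mechanism is applied to the adaptively chosen current center set $F_i$ --- really require the adaptive version, in which $\mathbbm{A}_i$ may depend on $(o_1,\ldots,o_{i-1})$. The adaptive version holds with the same bound, and your inductive argument extends to it with only a cosmetic change (condition on the prefix $(o_1,\ldots,o_{m-1})$ and apply the worst-case DP bound of the final mechanism uniformly over prefixes), but as written your factorization step genuinely uses independence and would not literally apply there. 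This is a gap between the theorem as stated and how it is used, not a gap in your proof of the theorem as stated.
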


\subsection{Proof of Lemma~\ref{lem:good_center_nonprivate}}

\begin{proof}
Consider the intermediate output of Algorithm~\ref{alg:new_initial-NDP}, $C_1=\{v_1,v_2,...,v_k \}$, which is the set of roots of the minimal subtrees each containing exactly one output center $C_0$. Suppose one of the optimal ``root set'' that minimizes \eqref{eqn:cost'} is $C^*_1=\{v'_1,v'_2,...,v'_k \}$. If $C_1=C^*_1$, the proof is done. Thus, we prove the case for $C_1\neq C_1^*$. Note that $T(v), v\in C_1$ are disjoint subtrees. We have the following reasoning.

\begin{itemize}
\item Case 1: for some $i,j'$, $v_i$ is a descendant node of $v_j'$. Since the optimal center point $f^*$ is a leaf node by the definition of \eqref{eqn:cost'}, we know that there must exist one child node of $v_j'$ that expands a subtree which contains $f^*$. Therefore, we can always replace $v_j'$ by one of its child nodes. Hence, we can assume that $v_i$ is not a descendant of $v_j'$.

Note that, we have $score(v'_j) \leq score(v_i)$ if $v'_j\notin C^*_1 \cap C_1$. Algorithm~\ref{alg:new_initial-NDP} sorts all the nodes based on
cost value, and it would have more priority to pick $v'_j$ than $v_i$
if $score(v'_j) > score(v_i)$ and $v_i$ is not a child node of $v'_j$.

\item Case 2: for some $i,j'$, $v_j'$ is a descendant of $v_i$. In this case, optimal center point $f^*$, which is a leaf of $T(v_i)$, must also be a leaf node of $T(v'_j)$. We can simply replace $C_1$ with the swap $C_1\setminus \{ v_i\}+\{v_j'\}$ which does not change ${cost_k^T}'(U)$. Hence, we can assume that $v_j'$ is not a descendant of $v_i$.

%let $C^*_1$ be the one of optimal  root set  with each of element has highest possible levels, also neither $v'_j$ is a descendant of $v_i$ nor $v_i$ is a descendant of $v'_j$.
\item Case 3: Otherwise. By the construction of $C_1$, we know that  $score(v'_j) \leq \min \{score(v_i), i=1,...,k\}$ when $v'_j\in C^*_1 \setminus C_1$. Consider the swap between $C_1$ and $C^*_1$.
By the definition of tree distance, we have $OPT^T_k(U) \geq \sum_{v_i\in C_1 \setminus C^{*}_1} N_{v_i}2^{h_{v_i}}$, since $\{T(v_i),v_i\in C_1 \setminus C^{*}_1\}$ does not contain any center of the optimal solution determined by $C^*_1$ (which is also the optimal ``root set'' for $OPT_k^T(U)$).

%Replacing $C_1$ by $C_1\setminus \{v_i\} + \{v_j'\}$ decreases the ${cost_k^T}'(U)$ by at most $N_{v'_j}2^{h_{v'_j}}$.
\end{itemize}
Thus, we only need to consider Case 3.
Let us consider the optimal clustering with  center set be $C^*=\{c^*_1,c^*_2,...,c^*_k\}$ (each center $c^*_j$  is a leaf of subtree whose root be $c'_j$), and $S'_j$ be the leaves assigned to $c^*_j$.
Let $S_j$ denote the set of leaves in $S'_j$ whose distance to $c^{*}_j$ is strictly smaller than its distance to any centers in $C_1$. Let $P_j$ denote the union of paths between    leaves of $S_j$
to its closest center in $C_1$. Let $v''_j$ be the nodes in $P_j$ with highest level satisfying  $T(v''_j)\cap C_1=\emptyset$.
The score of $v''_j$ is $2^{h_{v''_j}} N(v''_j)$.
That means the  swap with  a center   $v'_j$ into $C_1$ can only reduce  $4\cdot 2^{h_{v''_j}} N(v''_j)$ to ${cost_k^T}'(U)$ (the tree distance between  any leaf in $S_j$ and its closest center in $C_1$ is at most $4\cdot 2^{h_{v''_j}}$). We just use $v'_j$ to represent $v''_j$ for later part of this proof for simplicity.
 By our reasoning, summing all the swaps over $C^*_1 \setminus C_1$
 gives $${cost_k^T}'(U)-OPT^T_{k}(U) \leq 4\sum_{v'_j\in C^*_1 \setminus C_1} N_{v'_j}2^{h_{v'_j}},$$
$$OPT^T_k(U) \geq \sum_{v_i\in C_1 \setminus C^{*}_1} N_{v_i}2^{h_{v_i}}.$$
Also, based on our discussion on Case 1, it holds that
$$N_{v'_j}2^{h_{v'_j}}-N_{v_i}2^{h_{v_i}} \leq 0.$$
Summing them together, we have
${cost_k^T}'(U) \leq 5OPT^T_{k}(U)$.
\end{proof}

\subsection{Proof of Lemma~\ref{lem:greedy_first NDP}}

\begin{proof}
Since the subtrees in $C_1$ are disjoint, it suffices to consider one subtree with root $v$. With a little abuse of notation, let ${cost_1^T}'(v,U)$ denote the optimal $k$-median cost within the point set $T(v)$ with one center in 2-HST:
\begin{align}
    {cost_1^T}'(v,U) = \min_{x\in T(v)} \sum_{y\in T(v)} \rho^{T}(x,y),
\end{align}
which is the optimal cost within the subtree. Suppose $v$ has more than one children $u,w,...$, otherwise the optimal center is clear. Suppose the optimal solution of ${cost_1^T}'(v,U)$ chooses a leaf node in $T(u)$, and our HST initialization algorithm picks a leaf of $T(w)$. If $u=w$, then HST chooses the optimal one where the argument holds trivially. Thus, we consider $u\neq w$. We have the following two observations:

\begin{itemize}
 \item Since one needs to pick a leaf of $T(u)$ to minimize ${cost_1^T}'(v,U)$, we have ${cost_1^T}'(v,U)\geq \sum_{x\in ch(v), x\neq u} N_x \cdot 2^{h_x}$ where $ch(u)$ denotes the children nodes of $u$.

 \item By our greedy strategy, $cost^T_1(v,U)\leq \sum_{x\in ch(u) } N_x \cdot 2^{h_x}\leq {cost_1^T}'(v,U) + N_u \cdot 2^{h_u}$. %Also with high probability
\end{itemize}
Since $h_u=h_w$, we have
$$ 2^{h_u} \cdot (N_u -N_w) \leq 0,$$
since our algorithm picks subtree roots with highest scores. Then we have $cost^T_1(v,U) \leq {cost_1^T}'(v,U) + N_w \cdot 2^{h_w}  \leq 2{cost_1^T}'(v,U) $. Since the subtrees in $C_1$ are disjoint, the union of centers for $OPT_1^T(v,U)$, $v\in C_1$ forms the optimal centers with size $k$. Note that, for any data point $p\in U\setminus C_1$, the tree distance $\rho^T(p,f)$ for $\forall f$ that is a leaf node of $T(v)$, $v\in C_1$ is the same. That is, the choice of leaf in $T(v)$ as the center does not affect the $k$-median cost under 2-HST metric. Therefore, union bound over $k$ subtree costs completes the proof.
\end{proof}

\subsection{Proof of Proposition~\ref{prop:time}}
\begin{proof}
It is known that the 2-HST can be constructed in $O(dn\log n)$ ~\citep{DBLP:conf/focs/Bartal96}.
The subtree search in Algorithm~\ref{alg:new_initial-NDP} involves at most sorting all the nodes in the HST based on the score, which takes $O( n log n)$. We use a priority queue  to store the nodes in $C_1$. When we insert a new node $v$ into queue, its parent node (if existing in the queue) would be removed from the queue.
The number of nodes is $O(n)$ and  each operation (insertion, deletion) in a priority queue based on score has $O(\log n)$ complexity. Lastly, the total time to obtain $C_0$ is $O(n)$, as the FIND-LEAF only requires a top down scan in $k$ disjoint subtrees of $T$. Summing parts together proves the claim.
\end{proof}

\subsection{Proof of Theorem~\ref{theo:DP rho metric}}

Similarly, we prove the error in general metric by first analyzing the error in 2-HST metric. Then the result follows from Lemma~\ref{lemma:metrics}. Let $cost_k^T(D)$, ${cost_k^T}'(D)$ and $OPT_k^T(D)$ be defined analogously to \eqref{eqn:cost}, \eqref{eqn:cost'} and \eqref{eqn:OPT}, where ``$y\in U$'' in the summation is changed into ``$y\in D$'' since $D$ is the demand set. That is,
\begin{align}
    cost_k^T(D) &=  \sum_{y\in D} \min_{x\in C_0} \rho^T(x,y), \\
    {cost_k^T}'(D,C_1) &= \min_{|F\cap T(v)|=1, \forall v\in C_1} \sum_{y\in D} \min_{x\in F} \rho^T(x,y),\\
    OPT_{k}^T(D) &= \min_{F\subset D,|F|=k} \sum_{y\in  D} \min_{x\in F} \rho^{T}(x,y)\equiv \min_{C_1'}\ {cost_k^T}'(D,C_1').
\end{align}

We have the following.

\begin{lemma}  \label{lem:DP 2-HST metric}
$cost^{T}_k(D) \leq 10OPT^T_k(D)+10ck\epsilon^{-1}\triangle\log n $ with probability $1-4k/n^c$.
\end{lemma}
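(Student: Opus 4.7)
The plan is to follow the non-private proof of Theorem~\ref{thm:NDP 2-HST metric} line by line, establishing differentially private analogs of Lemma~\ref{lem:good_center_nonprivate} (subtree search) and Lemma~\ref{lem:greedy_first NDP} (leaf search) and then chaining them. The only new ingredient is a high-probability bound controlling how much the perturbed score $\hat N_v\cdot 2^{h_v}$ can deviate from the true score $N_v\cdot 2^{h_v}$, which turns each ``HST picks the highest score'' inequality used in the non-private proofs into the same inequality with an additive slack. First, I would apply the Laplace tail bound $\Pr[|\mathrm{Lap}(b)|\geq tb]=e^{-t}$ with $t=c\log n$ and $b=2^{L-h_v}/\epsilon$ to obtain $|\hat N_v - N_v|\leq c\cdot 2^{L-h_v}\log n/\epsilon$ with probability at least $1-2n^{-c}$ for any fixed node $v$; multiplying by $2^{h_v}$ and using $2^{h_v}\cdot 2^{L-h_v}=\triangle$ uniformly bounds the per-node score deviation by $c\triangle\log n/\epsilon$, independently of the level. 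A union bound over the $O(k)$ specific nodes appearing in the two lemma arguments then gives a good event of probability at least $1-4k/n^c$.

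Next, I would redo the Case-3 argument from the proof of Lemma~\ref{lem:good_center_nonprivate}. The only step that uses the true score is the inequality $N_{v'_j}2^{h_{v'_j}}-N_{v_i}2^{h_{v_i}}\leq 0$, which held because non-private HST preferred $v_i$ to $v'_j$. In the DP setting, DP-HST compares noisy scores, so on the good event this inequality is relaxed to $N_{v'_j}2^{h_{v'_j}}-N_{v_i}2^{h_{v_i}}\leq 2c\triangle\log n/\epsilon$; summing over the $\leq k$ swap pairs in Case~3 then yields the DP subtree-search bound
$${cost_k^T}'(D)\leq 5\,OPT_k^T(D) + O(ck\triangle\log n/\epsilon).$$
An entirely analogous perturbation of the child-comparison step in the proof of Lemma~\ref{lem:greedy_first NDP}---where $\hat N_w\geq \hat N_u$ now only yields $2^{h_u}(N_u-N_w)\leq 2c\triangle\log n/\epsilon$ instead of $\leq 0$---summed over the $k$ subtrees in $C_1$, gives
$$cost_k^T(D)\leq 2\,{cost_k^T}'(D) + O(ck\triangle\log n/\epsilon).$$
Chaining the two inequalities on the good event and absorbing multiplicative constants into $c$ delivers the claimed bound $cost_k^T(D)\leq 10\,OPT_k^T(D)+10ck\epsilon^{-1}\triangle\log n$.

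The main subtlety is the leaf search: since each greedy descent traverses up to $L=\log\triangle$ levels, a careless union bound over every node visited would inflate either the failure probability or the additive error by an extra $\log\triangle$ factor. This is resolved by the observation that the Laplace scale $2^{L-h_v}/\epsilon$ at level $h_v$ is exactly compensated by the level weight $2^{h_v}$ in the score, so \emph{score-level} comparisons have uniform magnitude $O(\triangle\log n/\epsilon)$ regardless of depth; meanwhile, the non-private leaf-search argument of Lemma~\ref{lem:greedy_first NDP} already collapses the entire recursive descent within a subtree into a single top-level child comparison, so the DP version needs only $k$ additional events in the union bound and contributes only $O(ck\triangle\log n/\epsilon)$ to the additive error, matching the stated bound and failure probability.
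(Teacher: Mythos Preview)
Your proposal is correct and follows essentially the same approach as the paper: the paper proves separate DP analogs of the subtree-search and leaf-search lemmas (obtaining ${cost_k^T}'(D)\leq 5\,OPT_k^T(D)+4ck\epsilon^{-1}\triangle\log n$ and $cost_k^T(D)\leq 2\,{cost_k^T}'(D)+2ck\epsilon^{-1}\triangle\log n$, each with failure probability $2k/n^c$) via exactly the per-node Laplace score-deviation bound and union bound you describe, then chains them. Your observation that the leaf-search argument collapses to a single top-level child comparison per subtree is precisely how the paper avoids the extra $\log\triangle$ factor.
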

\begin{proof}
The result follows by combining the following Lemma~\ref{lem:good_center}, Lemma~\ref{lem:greedy:number}, and applying union bound.
\end{proof}

\begin{lemma} \label{lem:high_probability}
For any node $v$ in $T$, with probability $1-1/n^c$, $|\hat{N}_v \cdot 2^{h_v} -N_v \cdot 2^{h_v}| \leq c\epsilon^{-1}\triangle\log n$.
\end{lemma}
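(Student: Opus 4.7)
The plan is to prove Lemma~\ref{lem:high_probability} via a straightforward application of the Laplace tail bound, exploiting the careful scaling of noise chosen in Algorithm~\ref{alg:new_initial DP}. By construction, $\hat{N}_v = N_v + X_v$ where $X_v \sim \mathrm{Lap}(2^{L-h_v}/\epsilon)$, so the quantity to bound reduces to
\[
|\hat{N}_v \cdot 2^{h_v} - N_v \cdot 2^{h_v}| \;=\; 2^{h_v}\,|X_v|,
\]
and the entire claim hinges on controlling $2^{h_v}|X_v|$ for a single Laplace variable at a fixed node.

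First, I would invoke the standard tail bound for $Y \sim \mathrm{Lap}(b)$, namely $\Pr[|Y| \geq t] = \exp(-t/b)$ for $t \geq 0$. Applying this with $b = 2^{L-h_v}/\epsilon$ and the threshold $t = c\epsilon^{-1}\triangle \log n / 2^{h_v}$ yields
\[
\Pr\!\left[2^{h_v}|X_v| \geq c\epsilon^{-1}\triangle \log n\right] \;=\; \exp\!\left(-\frac{c\,\triangle \log n}{2^{h_v}\cdot 2^{L-h_v}}\right) \;=\; \exp\!\left(-\frac{c\,\triangle \log n}{2^L}\right).
\]
Then I would use the setting $L = \log \triangle$ prescribed in Algorithm~\ref{alg:new_initial DP}, which gives $2^L = \triangle$, so the exponent simplifies to $-c\log n$ and the probability becomes $n^{-c}$ exactly as claimed.

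There is essentially no technical obstacle here; the only thing to emphasize is that the noise scale $2^{L-h_v}/\epsilon$ is chosen precisely so that the product $2^{h_v}\cdot 2^{L-h_v} = 2^L$ is independent of the level $h_v$. This uniformity across levels is what enables the same $c\epsilon^{-1}\triangle\log n$ bound to hold for every node in $T$, and it is the reason the subsequent union bound (used in Lemma~\ref{lem:DP 2-HST metric} over the $O(k)$ relevant nodes) loses only a polynomial factor that can be absorbed by choosing $c$ slightly larger. Thus the proof is essentially a one-line Laplace tail calculation, combined with the observation that the noise calibration cancels $2^{h_v}$ exactly.
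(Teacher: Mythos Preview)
Your proposal is correct and follows essentially the same approach as the paper: both apply the Laplace tail bound $\Pr[|Y|\geq t]=\exp(-t/b)$ with $b=2^{L-h_v}/\epsilon$, observe that $2^{h_v}\cdot 2^{L-h_v}=2^L=\triangle$ cancels the level dependence, and set the threshold to obtain failure probability $n^{-c}$. Your exposition is in fact cleaner than the paper's, which contains a minor typo in the noise scale and presents the substitution in two steps rather than one.
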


\begin{proof}
Since $\hat{N_v}=N_v+Lap(2^{(L-h_v)/2}/\epsilon)$, we have
$$Pr[|\hat{N}_v -N_v |\geq x/\epsilon]=  exp(-x/2^{(L-h_v)} ).$$
As $L=\log \triangle$, we have
$$Pr[|\hat{N}_v -N_v |\geq x\triangle/(2^{h_v} \epsilon)]\leq exp(-x ).$$
Hence, for some constant $c>0$,
\begin{align*}
&Pr[|\hat{N}_v \cdot 2^{h_v} -N_v \cdot 2^{h_v}| \leq c\epsilon^{-1}\triangle \log n ]\geq 1-exp(-c\log n)=1-1/n^{c}.
\end{align*}
\end{proof}

\begin{lemma}[DP Subtree Search] \label{lem:good_center}
With probability $1-2k/n^c$,
${cost_k^T}'(D) \leq 5OPT^T_{k}(D) + 4ck\epsilon^{-1}\triangle\log n$.
\end{lemma}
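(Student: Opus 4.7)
The plan is to adapt the proof of the non-private Lemma~\ref{lem:good_center_nonprivate} almost verbatim, using Lemma~\ref{lem:high_probability} to control the gap between the true scores $N_v \cdot 2^{h_v}$ used in the analysis and the noisy scores $\hat{N}_v \cdot 2^{h_v}$ actually used by Algorithm~\ref{alg:new_initial DP}. The overall strategy is: (i) condition on a high-probability event that every score relevant to the case analysis is accurate to within $c\epsilon^{-1}\triangle\log n$; (ii) re-run the same ``swap'' argument between the algorithm's root set $C_1$ and an optimal root set $C_1^\ast$; (iii) track how the additive score error propagates into the final bound.

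First, let $C_1 = \{v_1,\ldots,v_k\}$ be the roots chosen by the DP subtree search and let $C_1^\ast = \{v'_1,\ldots,v'_k\}$ be one optimal root set for ${cost_k^T}'(D,\cdot)$. By Lemma~\ref{lem:high_probability} applied individually to each node in $C_1 \cup C_1^\ast$, together with a union bound over these at most $2k$ nodes, with probability at least $1 - 2k/n^c$ we have
\begin{equation*}
    \bigl|\hat{N}_v \cdot 2^{h_v} - N_v \cdot 2^{h_v}\bigr| \le c\epsilon^{-1}\triangle\log n \quad \text{for every } v \in C_1 \cup C_1^\ast.
\end{equation*}
Work under this event for the remainder of the argument. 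Cases 1 and 2 from the proof of Lemma~\ref{lem:good_center_nonprivate} are structural (they concern ancestor/descendant relations inside $T$ and do not depend on the sign of any score comparison), so the same reductions apply: we may assume no element of $C_1$ is an ancestor or descendant of an element of $C_1^\ast$.

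The substantive change is in Case 3. In the non-private proof, the greedy subtree selection guarantees $N_{v'_j}\cdot 2^{h_{v'_j}} - N_{v_i}\cdot 2^{h_{v_i}} \le 0$ whenever $v'_j \in C_1^\ast \setminus C_1$ is paired with $v_i \in C_1 \setminus C_1^\ast$. In the DP algorithm, the analogous inequality holds for the noisy scores, $\hat{N}_{v'_j}\cdot 2^{h_{v'_j}} \le \hat{N}_{v_i}\cdot 2^{h_{v_i}}$, and combining this with the two-sided control from Lemma~\ref{lem:high_probability} gives
\begin{equation*}
    N_{v'_j}\cdot 2^{h_{v'_j}} - N_{v_i}\cdot 2^{h_{v_i}} \;\le\; 2c\epsilon^{-1}\triangle\log n.
\end{equation*}
Summing this over the $k$ pairs in the correspondence between $C_1^\ast\setminus C_1$ and $C_1\setminus C_1^\ast$ yields $\sum_{v'_j\in C_1^\ast\setminus C_1} N_{v'_j}\cdot 2^{h_{v'_j}} \le \sum_{v_i\in C_1\setminus C_1^\ast} N_{v_i}\cdot 2^{h_{v_i}} + 2ck\epsilon^{-1}\triangle\log n$.

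Finally, the two geometric bounds from Lemma~\ref{lem:good_center_nonprivate} carry over unchanged, since they only use tree distances (not scores): ${cost_k^T}'(D) - OPT_k^T(D) \le 4\sum_{v'_j\in C_1^\ast\setminus C_1} N_{v'_j}\cdot 2^{h_{v'_j}}$ and $OPT_k^T(D) \ge \sum_{v_i\in C_1\setminus C_1^\ast} N_{v_i}\cdot 2^{h_{v_i}}$. Chaining with the score-comparison bound above gives ${cost_k^T}'(D) \le 5\,OPT_k^T(D) + 8ck\epsilon^{-1}\triangle\log n$, which after absorbing the constant $8$ into $c$ yields the claim. The only delicate point is being careful about which nodes the union bound must cover: we only need $O(k)$ nodes from $C_1\cup C_1^\ast$, not all $O(n)$ tree nodes, which is what keeps the failure probability at $O(k/n^c)$ rather than something that would swamp the bound.
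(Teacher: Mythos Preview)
Your proposal is correct and essentially identical to the paper's own proof: both adapt the swap argument of Lemma~\ref{lem:good_center_nonprivate}, invoke Lemma~\ref{lem:high_probability} on the $2k$ nodes in $C_1\cup C_1^\ast$ to obtain $N_{v'_j}2^{h_{v'_j}}-N_{v_i}2^{h_{v_i}}\le 2c\epsilon^{-1}\triangle\log n$, and then combine with the two geometric inequalities. The only cosmetic difference is that the paper writes the additive term as $4ck\epsilon^{-1}\triangle\log n$ directly whereas you obtain $8ck\epsilon^{-1}\triangle\log n$ and absorb the constant; your arithmetic is in fact the more careful one.
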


\begin{proof}
The proof is similar to that of Lemma~\ref{lem:good_center_nonprivate}. Consider the intermediate output of Algorithm~\ref{alg:new_initial-NDP}, $C_1=\{v_1,v_2,...,v_k \}$, which is the set of roots of the minimal disjoint subtrees each containing exactly one output center $C_0$. Suppose one of the optimal ``root set'' that minimizes \eqref{eqn:cost'} is $C^*_1=\{v'_1,v'_2,...,v'_k \}$. Assume $C_1\neq C_1^*$. By the same argument as the proof of Lemma~\ref{lem:good_center_nonprivate}, we consider for some $i,j$ such that $v_i\neq v_j'$, where $v_i$ is not a descendent of $v_j'$ and $v_j'$ is either a descendent of $v_i$. By the construction of $C_1$, we know that  $score(v'_j) \leq \min \{score(v_i), i=1,...,k\}$ when $v'_j\in C^*_1 \setminus C_1$. Consider the swap between $C_1$ and $C^*_1$.
By the definition of tree distance, we have $OPT^T_k(U) \geq \sum_{v_i\in C_1 \setminus C^{*}_1} N_{v_i}2^{h_{v_i}}$, since $\{T(v_i),v_i\in C_1 \setminus C^{*}_1\}$ does not contain any center of the optimal solution determined by $C^*_1$ (which is also the optimal ``root set'' for $OPT_k^T$).
%Replacing $C_1$ by $C_1\setminus \{v_i\} + \{v_j'\}$ decreases the ${cost_k^T}'(U)$ by at most $2N_{v'_j}2^{h_{v'_j}}$.
Let us consider the optimal clustering with  center set be $C^*=\{c^*_1,c^*_2,...,c^*_k\}$ (each center $c^*_j$  is a leaf of subtree whose root be $c'_j$), and $S'_j$ be the leaves assigned to $c^*_j$.
Let $S_j$ denote the set of leaves in $S'_j$ whose distance to $c^{*}_j$ is strictly smaller than its distance to any centers in $C_1$. Let $P_j$ denote the union of paths between    leaves of $S_j$
to its closest center in $C_1$. Let $v''_j$ be the nodes in $P_j$ with highest level satisfying  $T(v''_j)\cap C_1=\emptyset$.
The score of $v''_j$ is $2^{h_{v''_j}} N(v''_j)$.
That means the  swap with  a center   $v'_j$ into $C_1$ can only reduce  $4\cdot 2^{h_{v''_j}} N(v''_j)$ to ${cost_k^T}'(U)$ (the tree distance between  any leaf in $S_j$ and its closest center in $C_1$ is at most $4\cdot 2^{h_{v''_j}}$). We just use $v'_j$ to represent $v''_j$ for later part of this proof for simplicity.
Summing all the swaps over $C^*_1 \setminus C_1$, we obtain
$${cost_k^T}'(U)-OPT^T_{k}(U) \leq 4\sum_{v'_j\in C^*_1 \setminus C_1} N_{v'_j}2^{h_{v'_j}},$$
$$OPT^T_k(U) \geq \sum_{v_i\in C_1 \setminus C^{*}_1} N_{v_i}2^{h_{v_i}}.$$
Applying union bound with Lemma~\ref{lem:high_probability}, with probability $1-2/n^c$, we have
$$N_{v'_j}2^{h_{v'_j}}-N_{v_i}2^{h_{v_i}} \leq 2c\epsilon^{-1}\triangle\log n.$$
Consequently, we have with probability, $1-2k/n^c$,
\begin{align*}
    {cost_k^T}'(D) &\leq 5OPT^T_{k}(D) + 4c|C_1\setminus C_1^*|\epsilon^{-1}\triangle\log n\\
    &\leq 5OPT^T_{k}(D) + 4ck\epsilon^{-1}\triangle\log n.
\end{align*}
\end{proof}

\begin{lemma}[DP Leaf Search]  \label{lem:greedy:number}
With probability $1-2k/n^c$, Algorithm~\ref{alg:new_initial DP} produces initial centers with  $cost^{T}_k(D)\leq 2{cost_k^T}'(D)+2ck\epsilon^{-1}\triangle\log n$.
\end{lemma}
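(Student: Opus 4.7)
The plan is to imitate the non-private leaf-search proof (Lemma~\ref{lem:greedy_first NDP}) step by step, replacing noiseless score comparisons by their Laplace-perturbed analogues and absorbing the resulting slack into an additive error using Lemma~\ref{lem:high_probability}. Since the subtrees $\{T(v):v\in C_1\}$ are disjoint and FIND-LEAF acts on each of them independently, both $cost_k^T(D)$ and ${cost_k^T}'(D)$ decompose as sums of per-subtree 1-median HST-costs. It therefore suffices to establish, for each fixed $v\in C_1$, the per-subtree statement
$$cost_1^T(v,D)\leq 2\,{cost_1^T}'(v,D)+2c\epsilon^{-1}\triangle\log n$$
with probability at least $1-2/n^c$, and then sum over $v\in C_1$.

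For a fixed subtree, the non-private proof pivots on a single comparison: at the first level at which greedy and optimum diverge, greedy descends into a child $w$ while the optimal 1-median leaf sits in a sibling $u$. The inequality $N_w\cdot 2^{h_w}\geq N_u\cdot 2^{h_u}$ combined with $N_w\cdot 2^{h_w}\leq {cost_1^T}'(v,D)$ then delivers the factor of $2$. In DP-HST greedy only guarantees $\hat N_w\cdot 2^{h_w}\geq \hat N_u\cdot 2^{h_u}$, so I would apply Lemma~\ref{lem:high_probability} to each of $u$ and $w$ and union-bound to get, with probability at least $1-2/n^c$,
$$N_u\cdot 2^{h_u}-N_w\cdot 2^{h_w}\leq |N_u-\hat N_u|\cdot 2^{h_u}+|\hat N_w-N_w|\cdot 2^{h_w}\leq 2c\epsilon^{-1}\triangle\log n,$$
where the middle term $\hat N_u\cdot 2^{h_u}-\hat N_w\cdot 2^{h_w}$ is $\leq 0$ by the greedy rule and has been dropped. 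Plugging this inequality into the non-private chain produces the per-subtree bound with exactly one extra additive $2c\epsilon^{-1}\triangle\log n$ term.

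Summing over the $k$ subtrees in $C_1$ and taking a final union bound over the $2k$ Laplace events produces $cost_k^T(D)\leq 2\,{cost_k^T}'(D)+2ck\epsilon^{-1}\triangle\log n$ with failure probability at most $2k/n^c$, which is exactly Lemma~\ref{lem:greedy:number}. The main obstacle will be ensuring that the Laplace slack does not accumulate along the whole depth of the greedy descent, which would inflate the additive error by an extra $\log n$ or $\log\triangle$ factor. This is avoided exactly as in the non-private argument: the per-subtree upper bound is driven by a single score comparison---the first level of divergence---so only one Laplace deviation per subtree needs to be controlled, giving the tight $O(k\log n)$ dependence claimed rather than an $O(k\log^2 n)$ one.
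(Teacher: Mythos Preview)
Your proposal is correct and follows essentially the same approach as the paper's own proof: decompose into the $k$ disjoint subtrees of $C_1$, for each subtree compare the greedy child $w$ with the optimal child $u$ at the first level of divergence, use Lemma~\ref{lem:high_probability} on $u$ and $w$ to convert $\hat N_w\cdot 2^{h_w}\geq \hat N_u\cdot 2^{h_u}$ into $N_u\cdot 2^{h_u}-N_w\cdot 2^{h_w}\leq 2c\epsilon^{-1}\triangle\log n$, and then union-bound over the $2k$ Laplace events. Your explicit remark that only a single comparison per subtree is needed (so the additive error does not pick up a depth factor) is exactly the point the paper uses implicitly.
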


\begin{proof}
The proof strategy follows Lemma~\ref{lem:greedy_first NDP}. We first consider one subtree with root $v$. Let ${cost_1^T}'(v,U)$ denote the optimal $k$-median cost within the point set $T(v)$ with one center in 2-HST:
\begin{align}
    {cost_1^T}'(v,D) = \min_{x\in T(v)} \sum_{y\in T(v)\cap D} \rho^{T}(x,y).
\end{align}
Suppose $v$ has more than one children $u,w,...$, and the optimal solution of ${cost_1^T}'(v,U)$ chooses a leaf node in $T(u)$, and our HST initialization algorithm picks a leaf of $T(w)$. If $u=w$, then HST chooses the optimal one where the argument holds trivially. Thus, we consider $u\neq w$. We have the following two observations:

\begin{itemize}
 \item Since one needs to pick a leaf of $T(u)$ to minimize ${cost_1^T}'(v,U)$, we have ${cost_1^T}'(v,U)\geq \sum_{x\in ch(v), x\neq u} N_x \cdot 2^{h_x}$ where $ch(u)$ denotes the children nodes of $u$.

 \item By our greedy strategy, $cost^T_1(v,U)\leq \sum_{x\in ch(u) } N_x \cdot 2^{h_x}\leq {cost_1^T}'(v,U) + N_u \cdot 2^{h_u}$. %Also with high probability
\end{itemize}
As $h_u=h_w$, leveraging Lemma~\ref{lem:high_probability}, with probability $1-2/n^c$,
\begin{align*}
    2^{h_u} \cdot (N_u -N_w) &\leq 2^{h_u}(\hat{N}_u -\hat{N}_w) +2c\epsilon^{-1}\triangle\log n \\
    &\leq 2c\epsilon^{-1}\triangle\log n.
\end{align*}
since our algorithm picks subtree roots with highest scores. Then we have $cost^T_1(v,D) \leq {cost_k^T}'(v,D) + N_w \cdot 2^{h_u} + 2c\epsilon^{-1}\triangle\log n \leq 2{cost_k^T}'(v,D) +2c\epsilon^{-1}\triangle\log n$ with high probability. Lastly, applying union bound over the disjoint $k$ subtrees gives the desired result.
\end{proof}

\subsection{Proof of Theorem~\ref{thm:DP-HST}}

\begin{proof}
The privacy  analysis is straightforward, by using the composition theorem (Theorem~\ref{theo:composition}). Since the sensitivity of $cost(\cdot)$ is $\triangle$, in each swap iteration the privacy budget is $\epsilon/2(T+1)$. Also, we spend another $\epsilon/2(T+1)$ privacy for picking a output. Hence, the total privacy is $\epsilon/2$ for local search. Algorithm~\ref{alg:new_initial DP} takes $\epsilon/2$ DP budget for initialization, so the total privacy is $\epsilon$.

The analysis of the approximation error follows from \cite{DBLP:conf/soda/GuptaLMRT10}, where the initial cost is reduced by our private HST method. We need the following two lemmas.

\begin{lemma}[\cite{DBLP:conf/soda/GuptaLMRT10}] \label{lemma:1}
Assume the solution to the optimal utility is unique. For any output $o\in O$ of $2\triangle\epsilon$-DP exponential mechanism on dataset $D$, it holds for $\forall t>0$ that
$$Pr[q(D,o) \leq \max_{o\in O} q(D,o)-(\ln |O|+t)/\epsilon] \leq e^{-t},$$
where $|O|$ is the size of the output set.
\end{lemma}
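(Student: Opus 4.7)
The plan is a direct calculation of the probability mass that the exponential mechanism assigns to ``bad'' outputs, following the standard McSherry--Talwar utility analysis. The first step is to unwind the normalization: a $2\triangle\epsilon$-DP exponential mechanism with a utility function $q$ of sensitivity $\triangle$ samples each $o\in O$ with probability proportional to $\exp(\epsilon\, q(D,o))$, because the factor $2\triangle$ in the stated privacy parameter cancels the factor $2\triangle$ in the denominator of the canonical exponential-mechanism density $\exp(\epsilon' q/(2\triangle))$. This reduces the lemma to a statement purely about Gibbs sampling from the weights $\{\exp(\epsilon\, q(D,o))\}_{o\in O}$.

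Next, I would let $OPT := \max_{o'\in O} q(D,o')$ and define the bad set $B := \{o\in O : q(D,o) \leq OPT - (\ln|O|+t)/\epsilon\}$, and write the target probability as the ratio
$$Pr[o\in B] \;=\; \frac{\sum_{o\in B} \exp(\epsilon\, q(D,o))}{\sum_{o'\in O} \exp(\epsilon\, q(D,o'))}.$$
The numerator is bounded term-by-term: every $o\in B$ contributes at most $\exp(\epsilon\cdot OPT - \ln|O| - t) = \exp(\epsilon\cdot OPT)\cdot e^{-t}/|O|$, and summing over at most $|O|$ such terms gives the upper bound $\exp(\epsilon\cdot OPT)\cdot e^{-t}$. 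The denominator is bounded below by simply retaining the (unique) maximizing term, giving $\exp(\epsilon\cdot OPT)$. Dividing the two yields $Pr[o\in B] \leq e^{-t}$, which is the claim.

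I do not expect any real obstacle: the argument is essentially three lines once the normalization is pinned down. The uniqueness hypothesis is used only to ensure that ``$\max_{o\in O} q(D,o)$'' refers to a well-defined value (a deterministic tie-breaking rule would suffice otherwise), and plays no role in the inequalities themselves. The conceptual point worth emphasizing is that the shift $\ln|O|/\epsilon$ in the threshold is exactly what absorbs the union-style factor of $|O|$ coming from $|B|\leq|O|$, producing a dimension-free $e^{-t}$ tail; this is also what makes the bound useful when later applied (e.g.\ via a union bound over $T+1$ iterations) in the proof of Theorem~\ref{thm:DP-HST}.
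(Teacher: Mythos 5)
Your proof is correct: the paper itself gives no proof of this lemma (it is quoted from \cite{DBLP:conf/soda/GuptaLMRT10}), and your argument is exactly the standard McSherry--Talwar utility analysis, including the right unwinding of the ``$2\triangle\epsilon$-DP'' normalization to weights $\exp(\epsilon\, q(D,o))$ and the numerator/denominator bound that yields the $e^{-t}$ tail. Your side remark is also accurate --- the uniqueness hypothesis is not actually needed for the inequality as stated (it matters only for sharper variants where $\ln|O|$ is replaced by $\ln(|O|/|O_{\mathrm{opt}}|)$).
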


\begin{lemma}[\cite{DBLP:journals/siamcomp/AryaGKMMP04}] \label{lemma:2}
For any set $F\subseteq D$ with $|F|=k$, there exists some swap $(x,y)$ such that the local search method admits
\begin{equation*}
 cost_k(F,D)-cost_k(F-\{x\}+\{y\},D)\geq \frac{cost_k(F,D)-5OPT(D)}{k}.
\end{equation*}
\end{lemma}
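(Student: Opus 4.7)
The plan is to establish the swap-existence inequality by exhibiting $k$ candidate swaps, bounding the sum of their cost changes, and averaging. Let $O^{*}=\{o_{1}^{*},\ldots,o_{k}^{*}\}$ attain $OPT(D)=\sum_{p\in D}\min_{o\in O^{*}}\rho(p,o)$, write $F=\{f_{1},\ldots,f_{k}\}$, and for $p\in D$ let $c(p)\in F$ be its currently assigned center and $c^{*}(p)\in O^{*}$ its optimal center. Define the capture map $\phi:O^{*}\to F$ by $\phi(o)=\arg\min_{f\in F}\rho(o,f)$, and partition $F$ by $|\phi^{-1}(\cdot)|$ into $F_{0},F_{1},F_{\ge 2}$.

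First I would construct a multiset of $k$ candidate swap pairs $(x_{i},y_{i})\in F\times O^{*}$ such that every $o^{*}\in O^{*}$ appears in exactly one pair and no $f\in F$ appears in more than two pairs. Each $f\in F_{1}$ is paired with its unique captured optimal center; the remaining optimal centers, those captured by some $f\in F_{\ge 2}$, are distributed among elements of $F_{0}$ respecting the ``at most two per $f$'' budget. The counting identity $|F_{0}|=\sum_{f\in F_{\ge 2}}(|\phi^{-1}(f)|-1)\ge |F_{\ge 2}|$, obtained from $|O^{*}|=|F|=k$, makes this distribution feasible since the total demand $|F_{0}|+|F_{\ge 2}|$ fits into the total capacity $2|F_{0}|$.

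Second I would upper-bound the per-swap cost change via an explicit reassignment: for candidate swap $(x_{i},y_{i})$, clients $p$ with $c^{*}(p)=y_{i}$ move to $y_{i}$, clients with $c(p)=x_{i}$ and $c^{*}(p)\ne y_{i}$ are rerouted to $\phi(c^{*}(p))$ (which remains in $F-\{x_{i}\}+\{y_{i}\}$ by construction), and all other clients stay at $c(p)$. The triangle inequality together with the minimality $\rho(c^{*}(p),\phi(c^{*}(p)))\le \rho(c^{*}(p),x_{i})$ yields $\rho(p,\phi(c^{*}(p)))\le 2\rho(p,c^{*}(p))+\rho(p,c(p))$. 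Summing the resulting per-swap upper bounds across all $k$ swaps, each client $p$ contributes one ``optimal'' reassignment term (from the unique swap with $y_{i}=c^{*}(p)$) and at most two rerouting terms (since $c(p)$ appears in at most two swaps). Gathering the terms per client yields the Arya et al. identity $\sum_{i=1}^{k}\bigl[cost_{k}(F-\{x_{i}\}+\{y_{i}\},D)-cost_{k}(F,D)\bigr]\le 5\,OPT(D)-cost_{k}(F,D)$.

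Finally, since the left-hand side is a sum of exactly $k$ terms, at least one index $i$ satisfies $cost_{k}(F-\{x_{i}\}+\{y_{i}\},D)-cost_{k}(F,D)\le (5\,OPT(D)-cost_{k}(F,D))/k$, and transposing gives the claim. The main obstacle is the global per-client bookkeeping: combining the feasibility of the pairing rule with the triangle-inequality detour bound so that the cross-terms sum to exactly $4\,OPT(D)+cost_{k}(F,D)$ across all swaps, rather than a larger multiple, is what pins down the sharp constant $5$ and relies crucially on the ``each $f$ at most twice'' property of the constructed pairing.
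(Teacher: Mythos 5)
The paper does not prove this lemma at all --- it is quoted verbatim from \citet{DBLP:journals/siamcomp/AryaGKMMP04} --- so the only question is whether your reconstruction of that external argument is sound, and it is. Your outline is exactly the standard Arya et al./Gupta--Tangwongsan proof: the nearest-facility capture map $\phi$, the partition of $F$ by $|\phi^{-1}(\cdot)|$, the $k$ candidate swaps with each optimal center used once and each current center used at most twice (your counting identity $|F_0|=\sum_{f\in F_{\ge 2}}(|\phi^{-1}(f)|-1)\ge|F_{\ge 2}|$ is the right feasibility check), the reroute of orphaned clients to $\phi(c^*(p))$ with the detour bound $\rho(p,\phi(c^*(p)))\le 2\rho(p,c^*(p))+\rho(p,c(p))$, and the per-client accounting giving $\sum_i\bigl[cost_k(F-\{x_i\}+\{y_i\},D)-cost_k(F,D)\bigr]\le 5\,OPT(D)-cost_k(F,D)$, from which averaging yields the claim. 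One small slip in your closing remark: the correct tally is $(OPT-cost_k(F,D))$ from the once-per-client optimal reassignments plus at most $4\,OPT$ from the at-most-twice rerouting terms; there is no ``$4\,OPT(D)+cost_k(F,D)$'' quantity in the bookkeeping. You could also note in passing the standard technicality that a candidate swap with $y_i\in F$ is degenerate but harmless, since the explicit-reassignment upper bound still holds and a non-degenerate swap can only do better.
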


From Lemma~\ref{lemma:2}, we know that when $cost_k(F_i,D)>6OPT(D)$, there exists a swap $(x,y)$ s.t.
\begin{equation*}
 cost_k(F_i-\{x\}+\{y\},D)\leq (1-\frac{1}{6k})cost_k(F_i,D).
\end{equation*}
At each iteration, there are at most $n^2$ possible outputs (i.e., possible swaps), i.e., $|O|=n^2$. Using Lemma~\ref{lemma:1} with $t=2\log n$, for $\forall i$,
\begin{equation*}
 Pr[cost_k(F_{i+1},D)\geq cost_k(F^*_{i+1},D)+4\frac{\log n}{\epsilon'}]\geq 1-1/n^2,
\end{equation*}
where $cost_k(F^*_{i+1},D)$ is the minimum cost among iteration $1,2,...,t+1$. Hence, we have that as long as $cost(F_i,D)>6OPT(D)+\frac{24 k\log n}{\epsilon'}$, the improvement in cost is at least by a factor of $(1-\frac{1}{6k})$. By Theorem~\ref{theo:DP rho metric}, we have $cost_k(F_1,D)\leq C(\log n)(6OPT(D)+6k\triangle \log n/\epsilon)$ for some constant $C>0$. Let $T=6Ck\log\log n$. We have that
\begin{align*}
    E[cost(F_i,D)] &\leq (6OPT(D)+6k\epsilon^{-1}\triangle\log n )C(\log n)(1-1/6k)^{6Ck \log\log n}\\
    &\leq 6OPT(D)+6k\epsilon^{-1}\triangle\log n\leq 6OPT(D)+\frac{24 k\log n}{\epsilon'}.
\end{align*}
Therefore, with probability at least $(1-T/n^2)$, there exists an $i\leq T$ s.t. $cost(F_i,D)\leq 6OPT(D)+\frac{24 k\log n}{\epsilon'}$. Then by using the Lemma~\ref{lemma:2}, one will pick an $F_j$ with additional additive error $4\ln n/\epsilon'$ to the $\min \{cost(F_j,D),j=1,2,...,T\}$ with probability $1-1/n^2$. Consequently, we know that the expected additive error is
\begin{align}\notag
24k\triangle\log n/\epsilon'+4 \log n/\epsilon'=O(\epsilon^{-1}k^2\triangle(\log\log n)\log n),
\end{align}
with probability $1-1/poly(n)$.

\end{proof}

\section{Extending HST Initialization to $k$-Means }  \label{sec:append k-means}

Naturally, our HST method can also be applied to $k$-means clustering problem. In this section, we extend the HST to $k$-means and provide some brief analysis similar to $k$-median. We present the analysis in the non-private case, which can then be easily adapted to the private case. Define the following costs for $k$-means.
\begin{align}
    cost_{km}^T(U) &=  \sum_{y\in U} \min_{x\in C_0} \rho^T(x,y)^2,  \label{eqn:cost kmeans}\\
    {cost_{km}^T}'(U,C_1) &= \min_{|F\cap T(v)|=1, \forall v\in C_1} \sum_{y\in U} \min_{x\in F} \rho^T(x,y)^2, \label{eqn:cost' kmeans}\\
    OPT_{km}^T(U) &= \min_{F\subset U,|F|=k} \sum_{y\in  U} \min_{x\in F} \rho^{T}(x,y)^2\equiv \min_{C_1'}\ {cost_{km}^T}'(U,C_1').  \label{eqn:OPT kmeans}
\end{align}
For simplicity, we will use ${cost_{km}^T}'(U)$ to denote ${cost_{km}^T}'(U,C_1)$ if everything is clear from context. Here, $OPT_{km}^T$ \eqref{eqn:OPT kmeans} is the cost of the global optimal solution with 2-HST metric.

\begin{lemma}[Subtree search]\label{lem:good_center_nonprivate_kmeans}
${cost_{km}^T}'(U) \leq 17 OPT^T_{km}(U)$.
\end{lemma}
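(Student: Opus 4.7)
The plan is to mirror the proof of Lemma~\ref{lem:good_center_nonprivate} with squared tree distances in place of linear ones, working with the natural $k$-means variant of the subtree search in which the node score is $N_v \cdot 4^{h_v}$ rather than $N_v \cdot 2^{h_v}$ (this is the analogue appropriate to squared cost, since a level-$h$ subtree in the 2-HST has squared diameter $\Theta(4^h)$). Let $C_1$ denote the subtree roots output by the $k$-means variant of Algorithm~\ref{alg:new_initial-NDP}, and let $C_1^* = \{v'_1,\dots,v'_k\}$ be an optimal root set attaining $OPT^T_{km}(U)$ under ${cost_{km}^T}'(U,\cdot)$. If $C_1 = C_1^*$ we are done, so assume otherwise.

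First I would dispose of the ancestor-descendant cases (Cases~1 and~2 of Lemma~\ref{lem:good_center_nonprivate}) exactly as in the $k$-median proof: whenever a root in one set is an ancestor of a root in the other, we can replace it by a descendant child or perform a free swap without increasing the relevant costs. This reduces the analysis to Case~3, where $C_1 \setminus C_1^*$ and $C_1^* \setminus C_1$ consist of roots of mutually disjoint subtrees, and we may match the two (equinumerous) sets via a bijection.

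The heart of the argument is the swap analysis. For each $v'_j \in C_1^*\setminus C_1$, let $v''_j$ be defined as in Lemma~\ref{lem:good_center_nonprivate}: the highest-level node on the path from $v'_j$ to its closest $C_1$-center whose subtree avoids $C_1$. Any leaf in $T(v''_j)$ has tree distance at most $4\cdot 2^{h_{v''_j}}$ to its nearest center in $C_1$, hence squared distance at most $16\cdot 4^{h_{v''_j}}$, so summing over swaps yields
\begin{align*}
{cost_{km}^T}'(U) - OPT^T_{km}(U) \;\leq\; 16 \sum_{v'_j \in C_1^* \setminus C_1} N_{v''_j}\cdot 4^{h_{v''_j}}.
\end{align*}
Meanwhile, because each subtree $T(v_i)$ with $v_i \in C_1\setminus C_1^*$ contains no $C_1^*$-center, every leaf inside it contributes squared tree distance at least $(2^{h_{v_i}})^2 = 4^{h_{v_i}}$ to $OPT^T_{km}(U)$, so $OPT^T_{km}(U)\geq \sum_{v_i \in C_1\setminus C_1^*} N_{v_i}\cdot 4^{h_{v_i}}$. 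Finally, the score-based selection forces $N_{v''_j}\cdot 4^{h_{v''_j}} \leq N_{v_i}\cdot 4^{h_{v_i}}$ along the bijection. Combining the three inequalities gives ${cost_{km}^T}'(U) \leq 17\,OPT^T_{km}(U)$, which is the claim.

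The main obstacle I anticipate is the bookkeeping around replacing $v'_j$ by $v''_j$ during the ancestor-descendant reduction and matching $v''_j$ up correctly with elements of $C_1 \setminus C_1^*$ so that the score inequality survives the passage; this is already the subtlest part of the $k$-median argument. Beyond that, the proof is a mechanical squaring of the estimates of Lemma~\ref{lem:good_center_nonprivate}, with the linear constant $5 = 1 + 4$ replaced by the squared constant $17 = 1 + 16$.
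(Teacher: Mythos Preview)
Your proof follows the same three-inequality swap template as the paper. The one substantive departure is that you work with a $k$-means-adapted score $N_v\cdot 4^{h_v}$ in place of the paper's $N_v\cdot 2^{h_v}$; that is, you analyse a modified version of Algorithm~\ref{alg:new_initial-NDP}. This change is not cosmetic: it is exactly what makes the score comparison along the bijection yield $N_{v''_j}\cdot 4^{h_{v''_j}}\le N_{v_i}\cdot 4^{h_{v_i}}$, which is the inequality the two squared-distance bounds actually need. The paper, by contrast, keeps the linear score throughout (it writes that ``the score of $v''_j$ is $2^{h_{v''_j}}N(v''_j)$'') and then asserts the conclusion by ``summing them together''; but $N_{v'_j}2^{h_{v'_j}}\le N_{v_i}2^{h_{v_i}}$ does not in general imply $N_{v'_j}4^{h_{v'_j}}\le N_{v_i}4^{h_{v_i}}$ (take $h_{v'_j}>h_{v_i}$ with equality in the first inequality), so the paper's final step is not justified as written. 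Your choice of score is precisely the fix. Just make explicit in the statement that the set $C_1$ defining ${cost_{km}^T}'(U)$ is the one produced by the $4^{h_v}$-score variant of the subtree search, since it differs from Algorithm~\ref{alg:new_initial-NDP} as stated.
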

\begin{proof}
The analysis is similar with the proof of Lemma~\ref{lem:good_center_nonprivate}.
Thus, we mainly highlight the difference. Let us just use some notations the same  as in Lemma~\ref{lem:good_center_nonprivate} here.
Let us consider the clustering with  center set be $C^*=\{c^*_1,c^*_2,...,c^*_k\}$ (each center $c^*_j$  is a leaf of subtree whose root be $c'_j$), and $S'_j$ be the leaves assigned to $c^*_j$ in optimal k-means clustering in tree metric.
Let $S_j$ denote the set of leaves in $S'_j$ whose distance to $c^{*}_j$ is strictly smaller than its distance to any centers in $C_1$. Let $P_j$ denote the union of paths between    leaves of $S_j$
to its closest center in $C_1$. Let $v''_j$ be the nodes in $P_j$ with highest level satisfying  $T(v''_j)\cap C_1=\emptyset$.
The score of $v''_j$ is $2^{h_{v''_j}} N(v''_j)$.
That means the  swap with  a center   $v'_j$ into $C_1$ can only reduce  $(4\cdot 2^{h_{v''_j}})^2 N(v''_j)$ to ${cost_{km}^T}'(U)$. We just use $v'_j$ to represent $v''_j$ for later part of this proof for simplicity.
 By our reasoning, summing all the swaps over $C^*_1 \setminus C_1$
 gives $${cost_{km}^T}'(U)-OPT^T_{km}(U) \leq \sum_{v'_j\in C^*_1 \setminus C_1} N_{v'_j}\cdot (4\cdot 2^{h_{v'_j}})^2,$$
$$OPT^T_{km}(U) \geq \sum_{v_i\in C_1 \setminus C^{*}_1} N_{v_i}(2^{h_{v_i}})^2.$$
Also, based on our discussion on Case 1, it holds that
$$N_{v'_j}2^{h_{v'_j}}-N_{v_i}2^{h_{v_i}} \leq 0.$$
Summing them together, we have
${cost_{km}^T}'(U) \leq 17OPT^T_{km}(U)$.
\end{proof}

\newpage

Next, we show that the greedy leaf search strategy (Algorithm~\ref{alg:new_initial_prune}) only leads to an extra multiplicative error of 2.

\begin{lemma}[Leaf search] \label{lem:greedy_first NDP_kemans}
$cost^{T}_{km}(U)\leq 2 {cost_{km}^T}'(U)$.
\end{lemma}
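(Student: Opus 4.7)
The plan is to mirror the proof of Lemma~\ref{lem:greedy_first NDP} for $k$-median but with the squared tree distance $\rho^T(\cdot,\cdot)^2$. Since the subtrees indexed by $C_1$ are pairwise disjoint and each output center produced by FIND-LEAF lies in exactly one of them, I will reduce to a per-subtree claim: for every $v\in C_1$, the greedy within-subtree cost $cost_{km,1}^T(v,U):=\sum_{y\in T(v)}\rho^T(y,l_v)^2$ is at most twice the optimal within-subtree cost ${cost_{km,1}^T}'(v,U):=\min_{l\in T(v)}\sum_{y\in T(v)}\rho^T(y,l)^2$. Summing these inequalities across the $k$ disjoint subtrees in $C_1$ then yields the lemma. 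As in the $k$-median case, the choice of leaf inside $T(v)$ does not affect how points outside $T(v)$ are routed to other $C_1$-subtrees under the tree metric, so the global comparison reduces to within-subtree comparisons.

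For a fixed $v$, let $u\in ch(v)$ be the child whose subtree contains the optimal restricted center, and $w\in ch(v)$ the child chosen greedily by FIND-LEAF. If $u=w$, I recurse on $T(u)$ via induction on $h_v$, with the trivial base case where $v$ is a leaf. For $u\neq w$, I set up two estimates paralleling the bullet-point argument in Lemma~\ref{lem:greedy_first NDP}: (i) a lower bound ${cost_{km,1}^T}'(v,U)\ge \sum_{x\in ch(v),\,x\neq u} N_x\cdot 4^{h_x}$, obtained by noting that any leaf in $T(x)$ for $x\neq u$ must traverse the edge $x\to v$ of weight $2^{h_x}$ to reach a center in $T(u)$, contributing at least $4^{h_x}$ in squared distance; and (ii) an upper bound $cost_{km,1}^T(v,U)\le \sum_{x\in ch(v)} N_x\cdot O(4^{h_x}) \le {cost_{km,1}^T}'(v,U)+N_w\cdot 4^{h_w}$, mirroring the $N_u\cdot 2^{h_u}$ slack term from the $k$-median argument. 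The finishing step uses the greedy rule of FIND-LEAF: since $h_u=h_w$ and the algorithm selects $w$ maximizing $N_x$ among siblings, $N_w\ge N_u$, so $4^{h_w}(N_w-N_u)\le 0$, and the extra $N_w\cdot 4^{h_w}$ is charged against the $N_u\cdot 4^{h_u}$ portion of the lower bound to yield the factor-of-$2$ ratio.

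The main obstacle I anticipate is keeping the constants tight under squaring. In the $k$-median argument the edge weights $2^{h_x}$ entered linearly, so a single swap inequality sufficed; squaring produces cross terms of scale $4^{h_v}$ that must be carefully absorbed into the $N_w\ge N_u$ accounting to preserve the multiplicative ratio of $2$ rather than inflating it to $4$ or more. I expect the tightest path is to track within-$T(v)$ squared distances via $\rho^T(y,l)\le 2^{h_v+1}$ for $y,l\in T(v)$ together with the lower bound $\rho^T(y,c^*)\ge 2^{h_x}$ for $y\in T(x),\,x\neq u$, and then balance the resulting sums at the common level $h_v-1$ via the greedy inequality, exactly parallel to the $k$-median derivation but with squared quantities. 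The inductive step for $u=w$ then prevents the ratio from accumulating across levels, because greedy and optimal play identical roles inside $T(w)$ and the induction hypothesis delivers the factor of $2$ at the deepest level where they first diverge.
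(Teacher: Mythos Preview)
Your overall architecture matches the paper's proof, but the charging step has the roles of $u$ and $w$ reversed, and this makes the argument as written fail. After upper bounding the greedy cost by $\sum_{x\in ch(v)} N_x\cdot 4^{h_x}$ and invoking the lower bound ${cost_{km,1}^T}'(v,U)\ge \sum_{x\in ch(v),\,x\neq u} N_x\cdot 4^{h_x}$, the leftover slack term is $N_u\cdot 4^{h_u}$, not $N_w\cdot 4^{h_w}$: the one child missing from the lower bound is $u$, not $w$. The greedy rule gives $N_w\ge N_u$, hence $N_u-N_w\le 0$; your displayed inequality $4^{h_w}(N_w-N_u)\le 0$ has the wrong sign. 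The correct chain is
\[
cost_{km,1}^T(v,U)\;\le\;{cost_{km,1}^T}'(v,U)+N_u\cdot 4^{h_u}\;\le\;{cost_{km,1}^T}'(v,U)+N_w\cdot 4^{h_w}\;\le\;2\,{cost_{km,1}^T}'(v,U),
\]
where the last step uses that $N_w\cdot 4^{h_w}$ is a summand of the lower bound (because $w\neq u$). You cannot ``charge the extra $N_w\cdot 4^{h_w}$ against the $N_u\cdot 4^{h_u}$ portion of the lower bound'' as you wrote, since $N_u\cdot 4^{h_u}$ is precisely the term \emph{excluded} from that lower bound.

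Once this swap is fixed, your anticipated obstacle about squared cross terms disappears: the argument is literally the $k$-median one with $2^{h_x}$ replaced by $4^{h_x}$, and the factor $2$ survives unchanged, exactly as in the paper. Your inductive handling of the $u=w$ case is fine (and arguably more careful than the paper's one-line ``trivial''), since points in $T(v)\setminus T(w)$ pay the same squared distance to any leaf of $T(w)$, so the comparison reduces verbatim to the subproblem in $T(w)$.
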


\begin{proof}
Since the subtrees in $C_1$ are disjoint, it suffices to consider one subtree with root $v$. With a little abuse of notation, let ${cost_1^T}'(v,U)$ denote the optimal $k$-means cost within the point set $T(v)$ with one center in 2-HST:
\begin{align}
    {cost_1^T}'(v,U) = \min_{x\in T(v)} \sum_{y\in T(v)} \rho^{T}(x,y)^2,
\end{align}
which is the optimal cost within the subtree. Suppose $v$ has more than one children $u,w,...$, otherwise the optimal center is clear. Suppose the optimal solution of ${cost_1^T}'(v,U)$ chooses a leaf node in $T(u)$, and our HST initialization algorithm picks a leaf of $T(w)$. If $u=w$, then HST chooses the optimal one where the argument holds trivially. Thus, we consider $u\neq w$. We have the following two observations:

\begin{itemize}
 \item Since one needs to pick a leaf of $T(u)$ to minimize ${cost_1^T}'(v,U)$, we have ${cost_1^T}'(v,U)\geq \sum_{x\in ch(v), x\neq u} N_x \cdot (2^{h_x})^2$ where $ch(u)$ denotes the children nodes of $u$.

 \item By our greedy strategy, $cost^T_1(v,U)\leq \sum_{x\in ch(u) } N_x \cdot (2^{h_x})^2\leq {cost_1^T}'(v,U) + N_u \cdot (2^{h_u})^2$. %Also with high probability
\end{itemize}
Since $h_u=h_w$, we have
$$ 2^{h_u} \cdot (N_u -N_w) \leq 0,$$
since our algorithm picks subtree roots with highest scores. Then we have $cost^T_1(v,U) \leq {cost_1^T}'(v,U) + N_w \cdot (2^{h_w})^2  \leq 2{cost_1^T}'(v,U) $. Since the subtrees in $C_1$ are disjoint, the union of centers for $OPT_1^T(v,U)$, $v\in C_1$ forms the optimal centers with size $k$. Note that, for any data point $p\in U\setminus C_1$, the tree distance $\rho^T(p,f)$ for $\forall f$ that is a leaf node of $T(v)$, $v\in C_1$ is the same. That is, the choice of leaf in $T(v)$ as the center does not affect the $k$-median cost under 2-HST metric. Therefore, union bound over $k$ subtree costs completes the proof.
\end{proof}

We are ready to state the error bound for our proposed HST initialization (Algorithm~\ref{alg:new_initial-NDP}), which is a natural combination of Lemma~\ref{lem:good_center_nonprivate_kmeans} and Lemma~\ref{lem:greedy_first NDP_kemans}.

\begin{theorem}[HST initialization]\label{thm:NDP 2-HST metric kmeans}
$cost^{T}_{km}(U) \leq 34OPT^T_{km}(U) $.
\end{theorem}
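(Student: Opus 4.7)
The plan is to derive this $k$-means error bound by directly chaining the two lemmas already established for $k$-means in the 2-HST metric, exactly mirroring the two-phase analysis used for $k$-median in Theorem~\ref{thm:NDP 2-HST metric}. Specifically, I would invoke Lemma~\ref{lem:good_center_nonprivate_kmeans} to control the oracle per-subtree cost ${cost_{km}^T}'(U)$ against $OPT_{km}^T(U)$ with a multiplicative factor of $17$, and then apply Lemma~\ref{lem:greedy_first NDP_kemans} to pay an additional factor of $2$ for replacing the oracle within-subtree center with the greedy leaf produced by Algorithm~\ref{alg:new_initial_prune}.

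Concretely, the first step is to write ${cost_{km}^T}'(U) \leq 17\, OPT_{km}^T(U)$, which is the conclusion of the subtree-search phase: the $k$ roots in $C_1$ returned by Algorithm~\ref{alg:new_initial-NDP} partition $T$ into disjoint subtrees whose optimal one-center-per-subtree cost is within a constant factor of the unconstrained $k$-means optimum in the HST metric. The second step is to pass from the oracle leaf to the leaf actually chosen by FIND-LEAF, losing only a factor of $2$ per subtree by Lemma~\ref{lem:greedy_first NDP_kemans}. Multiplying the two factors yields
\begin{align*}
cost_{km}^T(U) \;\leq\; 2\,{cost_{km}^T}'(U) \;\leq\; 2 \cdot 17\, OPT_{km}^T(U) \;=\; 34\, OPT_{km}^T(U),
\end{align*}
which is the desired inequality.

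There is essentially no obstacle in this final step: both supporting lemmas have been proven with the same structure as in the $k$-median case, and the squaring in the cost simply propagates the worst-case constant from $5$ to $17$ in the subtree phase (since a swap between a point and its nearest center in $C_1$ incurs a squared tree distance bounded by $(4\cdot 2^{h_{v''_j}})^2$ rather than $4\cdot 2^{h_{v''_j}}$), while leaving the leaf-search constant unchanged at $2$. The only thing worth verifying before concluding is that ${cost_{km}^T}'(U)$ in Lemma~\ref{lem:greedy_first NDP_kemans} refers to the same quantity (the optimal summed one-center-per-subtree cost over $C_1$) as in Lemma~\ref{lem:good_center_nonprivate_kmeans}, which is immediate from the common definition in \eqref{eqn:cost' kmeans}, so the two lemmas compose without any loss.
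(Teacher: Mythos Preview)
Your proposal is correct and matches the paper's own approach exactly: the theorem is obtained by composing Lemma~\ref{lem:good_center_nonprivate_kmeans} (factor $17$) with Lemma~\ref{lem:greedy_first NDP_kemans} (factor $2$) to get the factor $34$. There is nothing to add.
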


We have the following result based on Lemma~\ref{lemma:metrics}.

\begin{theorem}\label{theo:NDP rho metric kmean}
In a general metric space, $$E[cost_{km}(U)]=O(\min \{\log n, \log \triangle \})^2 OPT_{km}(U).$$
\end{theorem}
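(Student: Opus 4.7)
My plan is to follow the same two-step template used for the $k$-median case in Theorem~\ref{theo:NDP rho metric}: first obtain a deterministic constant-factor approximation in the 2-HST metric via Theorem~\ref{thm:NDP 2-HST metric kmeans}, and then transfer from the tree distance $\rho^T$ back to the base metric $\rho$ using a second-moment analogue of Lemma~\ref{lemma:metrics}. The difference from $k$-median is that squared distances interact with the HST distortion quadratically, so the $O(\min\{\log n, \log \triangle\})$ factor will appear twice.

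I would begin by noting that the tree metric dominates $\rho$, i.e., $\rho(x,y) \leq \rho^T(x,y)$ for all $x,y$ by construction of the HST, so the squared-cost inequality $cost_{km}(U) \leq cost^T_{km}(U)$ holds deterministically. Combining this with Theorem~\ref{thm:NDP 2-HST metric kmeans} gives $cost_{km}(U) \leq 34 \cdot OPT^T_{km}(U)$. Next, let $F^*$ be an optimal $k$-means center set under $\rho$ and set $x^*(y) := \arg\min_{x\in F^*}\rho(x,y)$. Plugging $F^*$ into the minimization defining $OPT^T_{km}(U)$ and taking expectation over the random HST yields
\begin{align*}
E[OPT^T_{km}(U)] \leq \sum_{y\in U} E[\rho^T(x^*(y),y)^2].
\end{align*}
It therefore suffices to establish the per-pair squared embedding bound
\begin{align*}
E[\rho^T(x,y)^2] = O(\min\{\log n, \log \triangle\})^2 \, \rho(x,y)^2,
\end{align*}
after which plugging back and summing over $y$ gives $E[OPT^T_{km}(U)] \leq O(\min\{\log n, \log \triangle\})^2 OPT_{km}(U)$, and combining with the first step completes the proof.

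The main obstacle is the squared embedding inequality: the first-moment bound from Lemma~\ref{lemma:metrics} cannot be squared for free because Jensen's inequality points the wrong way, and the worst-case deterministic ratio $\rho^T(x,y)^2/\rho(x,y)^2$ can be as large as $(\triangle/\rho(x,y))^2$. The plan to overcome this is a level-by-level argument of Bartal/FRT type: if the lowest common ancestor of $x$ and $y$ in $T$ lies at level $i$, then $\rho^T(x,y)$ is of order $2^i$, contributing $O(4^i)$ to the second moment. I would then use the fact that the probability that $x,y$ are first separated at level $i$ decays like $O(\rho(x,y)/2^i)$ in the padded decomposition, up to a $\log n$ factor from the number of candidate cluster centers at that scale. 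Summing $\sum_i 4^i \cdot \Pr[\text{separated at level } i]$, with the standard split at the threshold $2^i \approx \rho(x,y)$, telescopes to $O(\log^2 n)\, \rho(x,y)^2$; the $\log \triangle$ variant follows from the same argument using the tree depth $L=\log \triangle$ in place of $\log n$. With this in hand, the squared distortion carries through Lemmas \ref{lem:good_center_nonprivate_kmeans}--\ref{lem:greedy_first NDP_kemans} exactly as in the $k$-median proof of Theorem~\ref{theo:NDP rho metric}, yielding the claimed $O(\min\{\log n, \log \triangle\})^2$ approximation in general metric space.
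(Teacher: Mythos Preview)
Your high-level route---deduce $cost_{km}(U)\le 34\,OPT^T_{km}(U)$ from Theorem~\ref{thm:NDP 2-HST metric kmeans} together with the domination $\rho\le\rho^T$, and then push $OPT^T_{km}(U)$ back to $OPT_{km}(U)$ through the HST distortion---is exactly what the paper does; its entire argument for this theorem is the phrase ``based on Lemma~\ref{lemma:metrics}.'' You are also right to flag that the first-moment bound in Lemma~\ref{lemma:metrics} cannot simply be squared.

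The gap is in the fix you propose. The per-pair inequality
\[
E\bigl[\rho^T(x,y)^2\bigr]\;=\;O\bigl(\min\{\log n,\log\triangle\}\bigr)^2\,\rho(x,y)^2
\]
is \emph{false} for general metrics, and the level-by-level computation you outline does not yield it. With $p_i=\Pr[\text{LCA of }x,y\text{ is at level }i]=O(\rho(x,y)/2^i)$ (even allowing an extra $\log n$ factor), one has
\[
E\bigl[\rho^T(x,y)^2\bigr]\;\le\;\sum_{i=\lceil\log_2\rho(x,y)\rceil}^{L}\Theta(4^i)\,p_i
\;=\;\rho(x,y)\!\!\sum_{i=\lceil\log_2\rho(x,y)\rceil}^{L}\!\!\Theta(2^i)
\;=\;\Theta\bigl(\rho(x,y)\cdot\triangle\bigr),
\]
which is a geometric sum dominated by the top level, not a polylogarithmic multiple of $\rho(x,y)^2$. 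The ``split at the threshold $2^i\approx\rho(x,y)$'' does nothing here: below the threshold $p_i=0$, and above it the summands \emph{increase} in $i$. A concrete witness is the unit-spaced $n$-point path: two adjacent vertices have $\rho(x,y)=1$, yet $E[\rho^T(x,y)^2]=\Theta(n)$ under Bartal/FRT, so no polylogarithmic pairwise second-moment bound can hold. Consequently the step ``plug back and sum over $y$'' does not deliver the claimed $O(\min\{\log n,\log\triangle\})^2$ factor, and the argument as written does not close.
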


\end{document}